\documentclass[a4paper,10pt]{article}
\usepackage{geometry}
\geometry{left=2.0cm,right=2.0cm,top=2.5cm,bottom=2.5cm}
\usepackage{fancyhdr}
\cfoot{thepage}
\usepackage{graphicx}
\usepackage{amssymb} 
\usepackage{epstopdf}
\usepackage[ruled,norelsize]{algorithm2e}
\makeatletter
\newcommand{\removelatexerror}{\let\@latex@error\@gobble}
\makeatother
\usepackage[caption=false,font=footnotesize]{subfig}
\usepackage{pseudocode}
\usepackage{fancybox}
\usepackage{multicol}
\usepackage{verbatim}
\usepackage{amsthm}
\usepackage{amsmath}
\usepackage{color}
\usepackage{bm}
\usepackage{cite}
\usepackage{marginnote}
\usepackage{caption,lipsum}

\usepackage{multirow}
\usepackage{tikz}
\usepackage{hyperref}
\usepackage{booktabs}
\usepackage{marginnote}
\theoremstyle{plain}
\newtheorem{thm}{Theorem}

\newtheorem{cor}[thm]{Corollary}
\newtheorem{lem}[thm]{Lemma}
\newtheorem{prop}[thm]{Proposition}
\newtheorem{defn}[thm]{Definition}

\DeclareMathOperator{\rank}{rank}

\makeatletter

\newcommand{\Rmnum}[1]{\expandafter\@slowromancap\romannumeral #1@}

\begin{document}
\author{Qing Zou\footnote{Applied Mathematics and Computational Sciences, University of Iowa, Iowa City, IA 52242. (zou-qing@uiowa.edu)}\quad and\quad Mathews Jacob\footnote{Department of Electrical and Computer Engineering, University of Iowa, Iowa City, IA 52242. (mathews-jacob@uiowa.edu)}}

\title{\LARGE{Recovery of surfaces and functions in high dimensions: sampling theory and links to neural networks\footnote{ Parts of this work were presented at SIAM conference on applied algebraic geometry (SIAM AG 2019), at the  IEEE International Symposium on Biomedical Imaging (ISBI 2020) and at the IEEE International Conference on Acoustics, Speech, and Signal Processing (ICASSP 2020).}}}
\date{}
\maketitle

{\bf{Abstract.}} Several imaging algorithms including patch-based image denoising, image time series recovery, and convolutional neural networks can be thought of as methods that exploit the manifold structure of signals. While the empirical performance of these algorithms is impressive, the understanding of recovery of the signals and functions that live on manifold is less understood. In this paper, we focus on the recovery of signals that live on a union of surfaces. In particular, we consider signals living on a union of smooth band-limited surfaces in high dimensions. We show that an exponential mapping transforms the data to a union of low-dimensional subspaces. Using this relation, we introduce a sampling theoretical framework for the recovery of smooth surfaces from few samples and the learning of functions living on smooth surfaces. The low-rank property of the features is used to determine the number of measurements needed to recover the surface. Moreover, the low-rank property of the features also provides an efficient approach, which resembles a neural network, for the local representation of multidimensional functions on the surface. The direct representation of such a  function in high dimensions often suffers from the curse of dimensionality; the large number of parameters would translate to the need for extensive training data. The low-rank property of the features can significantly reduce the number of parameters, which makes the computational structure attractive for learning and inference from limited labeled training data.

\vskip 10pt

{\bf{Keywords.}} level set; surface recovery; function representation; image denoising;  neural networks

\vskip 30pt

\section{Introduction}

Several imaging algorithms were introduced to exploit the extensive redundancy with images to recover them from noisy and possibly undersampled measurements. For instance, several patch-based image denoising methods were introduced in the recent past. Algorithms such as non-local means perform averaging of similar patches within the image to achieve denoising \cite{buades2011non}. Similar patch-based regularization strategies are used for image recovery from undersampled data \cite{mohsin2017accelerated,yang2012nonlocal}.  Similar approaches are also used for the recovery of images in a time series by exploiting their non-local similarity \cite{poddar2018recovery,poddar2015dynamic}. The success of these methods could be attributed to the manifold assumption \cite{fefferman2016testing,shao2018riemannian}, which states that signals in real-world datasets (e.g. patches in images) are restricted to smooth manifolds in high dimensional spaces. In particular, the regularization penalty used in non-local methods can be viewed as the energy of the signal gradients on the patch manifold rather than in the original domain, facilitating the collective recovery of the patch manifold from noisy measurements \cite{belkin2006manifold}. In particular, non-local methods estimate the interpatch weights, which are used for denoising; the interpatch weights are equivalent to the manifold Laplacian, which captures the structure of the manifold. Similarly, image denoising approaches such as BM3D \cite{bm3d} that cluster patches, followed by PCA approximations of the cluster, can also be viewed as modeling the tangent subspaces of the patch manifold in each neighborhood. Patch dictionary based schemes, which allow the coefficients to be adapted to the specific patch, could also be viewed as tangent subspace approximation methods.

Convolutional neural networks are now emerging as very powerful alternatives for image denoising \cite{zhang2017beyond,tian2019enhanced} and image recovery \cite{aggarwal2018modl,jackson2017large}. Rather than averaging similar patches, neural networks learn how to denoise the image neighborhoods from example pairs of noisy and noise-free patches. These frameworks can be viewed as learning a multidimensional function in high dimensional patch spaces. In particular, the inputs to the network are noisy patches and the corresponding outputs are the denoised patches/pixels. We note that the learning of such functions using conventional methods will suffer from the curse of dimensionality. Specifically, large amounts of training data may be needed to learn the parameters of such a high-dimensional function, if represented using conventional methods. While the empirical performance of neural networks is impressive, the mathematical understanding of why and how they can learn complex multidimensional functions in high-dimensional spaces from relatively limited training data is still emerging. We note that the manifold assumption is also used in the CNN literature to explain the good performance of neural networks.

With the goal of understanding the above algorithms from a geometrical perspective, we consider the following conceptual problems \textbf{(a)} when can we learn and recover a manifold or surface in high dimensional space from few samples or training data, \textbf{(b)} when can we exactly learn and recover a function that lives on a surface, from few input-output examples, \textbf{(c)} can these results explain the good performance of imaging algorithms that use manifold structure. We note that many different surface models including parametric shape models \cite{jacob2004efficient,li2011robust}, local and multi-resolution representations \cite{peyre2005surface,sajda2002multi}, and implicit level-set \cite{osher2001level,rousson2002shape,li2010distance} shape representations have been used in low-dimensional settings (e.g. 2D/3D). Our main focus in this paper is on surface recovery in high-dimensional spaces with application to machine learning and learning surfaces of patches and images. The utility of the above algorithms in such high dimensional applications have not been well-studied, to the best of our knowledge; the direct extension of the low-dimensional algorithms is expected to be associated with high computational complexity. We consider implicit level set representation of the surface to deal with shapes of arbitrary topology. Specifically, we model the surface or union of surfaces as the zero level set of a multidimensional function $\psi$. To restrict the degrees of freedom of the surface, we consider the level set function to be bandlimited. The bandwidth of $\psi$ can be viewed as a measure of the complexity of the surface; a more band-limited function will translate to a smoother surface. We refer the readers to our earlier works \cite{ongie,poddar2018recovery,poddaricassp,zou2019sampling,zou2020sampling} for examples of 2D/3D recovery of shapes, where the above representation is used to represent and recover shapes with sharp corners and edges.

We show that under the above assumptions on the surface, a non-linear mapping of the points on the surface will live on a low-dimensional feature subspace, whose dimension depends on the complexity of the surface. Specifically, one can transform each data point to a feature vector, whose size is equal to the number of basis functions used for the surface representation. Since we use a linear combination of complex exponentials to represent the surface, the lifting in our setting is an exponential mapping. We use the low-rank property of the feature matrix to estimate the surface from few of its samples. Our sampling results show that an irreducible surface can be perfectly recovered from very few samples, whose number is dependent on the bandwidth. Our experiments in these settings show the good recovery of the surfaces from few noisy points. Our results also show that the union of irreducible surfaces can also be recovered from few samples, provided each of the irreducible components are adequately sampled. We also use a kernel low-rank algorithm to recover a surface from its noisy samples, which bears close similarity with non-local means algorithms widely used in image processing. 

We also show that the low-rank property can be used to efficiently represent multidimensional functions of points living on the surface. In particular, we are only interested in the good representation of the function when the input is on or in the vicinity of the surface. We assume the functions are linear combination of the same basis functions (exponentials in our case). Since such representations are linear in the feature space, the low-rank nature of the exponential features provides an elegant approach to represent the function using considerably fewer parameters. In particular, we show that the feature vectors of a few anchor points on the surface span the space, which allows us to efficiently represent the function as the interpolation of the function values at the anchor points using a Dirichlet kernel. The significant reduction in the number of free parameters offered by this local representation makes the learning of the function from finite samples tractable. We note that the computational structure of the representation is essentially a one-layer kernel network. Note that the approximation is highly local; the true function and the local representation match only on the surface, while they may deviate significantly on points which are not on the surface.  We demonstrate the preliminary utility of this network in denoising, which shows improved performance compared to some state-of-the-art methods. Here, we model the denoiser as a function $f: \mathbb R^{p^2}\rightarrow \mathbb R$ that provides a \emph{noise-free} center pixel of a $p\times p$ noisy patch. The noisy patch is assumed to a point in $p^2$ dimensional space, close to the low-dimensional patch surface or union of surfaces. We also show that this framework can be used to learn a manifold, which can be viewed as the signal subspace version of the null-space based kernel low-rank algorithm considered above. In this case, the network structure is an auto-encoder. 

This work is related to kernel methods, which are widely used for the approximation of functions \cite{belkin2006manifold,cho2009kernel,mairal2014convolutional}. It is well-known that an arbitrary function can be approximated using kernel methods, and the computational structure resembles a single hidden layer neural network. Our work has two key distinctions with the above approaches: (a) unlike most kernel methods that choose infinite bandwidth kernels (e.g. Gaussians), we restrict our attention to a band-limited kernel. (b) We focus on a restrictive data model, where the data samples are localized or close to the zero set of a band-limited function. These two restrictions allow us to come up with theoretical results on when such a surface can be perfectly recovered from few samples. The results also provide clues on how many training data pairs are needed to learn functions on such surfaces. We note that such sampling theoretical results are not available in kernel literature, to the best of our knowledge. This work is inspired by the recent work on algebraic varieties \cite{ongie2017algebraic}, which also considers surfaces with finite degrees of freedom. The main distinction of this paper is the novel theoretical guarantees on recovery of the surface and functions living on the surface, which go beyond the empirical results in \cite{ongie2017algebraic}. We focus on bandlimited surfaces in this work to borrow the theoretical tools from \cite{poddar2018recovery,poddaricassp,zou2019sampling}. This work extends the results in \cite{poddar2018recovery,poddaricassp,zou2019sampling} in three important ways \textbf{(i).} The planar results are generalized to the high dimensional setting. \textbf{(ii).} The worst-case sampling conditions are replaced by high-probability results, which are far less conservative, and are in good agreement with experimental results.  \textbf{(iii).} The sampling results are extended to the local representation of functions. While we focus on bandlimited functions to come up with theoretical bounds, the results could be generalized to arbitrary surface representations including most basis functions, such as polynomial basis functions considered in \cite{ongie2017algebraic,tsakiris2017algebraic} and shift invariant representations \cite{wang2006radial, bernard2009variational}. We note that this work uses parametric level-set representations unlike non-parametric level-set models (e.g. \cite{rousson2002shape,li2010distance} for image segmentation). The narrow-band evolution used by these approaches to manage computational complexity makes these algorithms highly vulnerable to initial guess, unlike our algorithms as illustrated in \cite{zou2019sampling}. While we illustrate our algorithms in 2D/3D applications for visualization purposes,  we stress that our main focus is on high-dimensional ($\gg 3$) extensions of the level set approach and generalization to shape recovery. Non-parametric and even parametric level-set methods \cite{wang2006radial, bernard2009variational} will be associated with very high computational complexity in this setting without the proposed computational approaches, and has not been reported to the best of our knowledge.

\subsection{Terminology and Notation}

We introduce some commonly used terminologies and notations throughout the paper. We term the zero level set of a trigonometric polynomial as a surface. Usually, the lower-case Greek letters $\psi, \eta,$ etc. are used to represent the trigonometric polynomials. The calligraphic letter $\mathcal{S}$ or $\mathcal{S}[\psi]$ is used to represent the zero level sets of the trigonometric polynomials and hence the surfaces. The bold lower-case letters $\mathbf{x}$ denotes the real variable in $[0,1)^n$ and sometimes the points on the surface. The indexed  bold lower-case letters $\mathbf{x}_i$ represent the samples on the surface. The upper-case Greek letters $\Lambda, \Gamma \subset \mathbb{Z}^n$ are used to denote the bandwidth of the trigonometric polynomials. In other words, the upper-case Greek letters are the coefficients index set. The coefficients set is shown as $\{\mathbf{c_k} : \mathbf{k}\in\Lambda\}$. The cardinality of bandwidth $\Lambda$ is given by $|\Lambda|$, which will serve as a measure of the complexity of the surface. The notation $\Gamma\ominus\Lambda$ indicates the set of all the possible uniform shifts of the set $\Lambda$ within the set $\Gamma$. The specific Greek letter $\Phi$ (sometimes subscripts are used to identify the corresponding bandwidth) is used to represent the lifting map (feature map) of the point on the surface. The notation $\Phi(\mathbf{X})$ denotes the feature matrix of the sampling set $\mathbf{X}$.

	\subsection{Background on non-local means; reinterpretation as manifold regularization}
Non-local means (NLM) methods average patches in an image based on their similarity to obtain a denoised image. In particular, they compute a weight matrix, whose entries are $
\mathbf W_{i,j} = \exp\left(-\frac{\|\mathbf P_{\mathbf r_i}(f)-\mathbf P_{\mathbf r_j}(f)\|^2}{\sigma^2}\right)$, 
where $\mathbf{P}_{\mathbf{r}}(f)$ denotes a patch in the image $f$, centered at $\mathbf{r}$. The smoothing approach in NLM can be viewed as the minimization problem
\begin{equation}\label{manifold}
\{\mathbf f^*\} = \arg \min_f \|\mathbf f-\mathbf g\|^2 + \eta \sum_{i=1}^N \sum_{i=1}^{N} \mathbf W_{i,j} ~\|\mathbf P_{\mathbf r_i}(f)-\mathbf P_{\mathbf r_j}(f)\|^2.
\end{equation}
This optimization problem can be viewed as the discretization of the manifold smoothness regularization strategy used in machine learning \cite{belkin2006manifold}, which considers the recovery of a multidimensional function $\mathbf f(\mathbf s)$ on a manifold $\mathcal M$ from its noisy samples $\mathbf f(\mathbf s_k)=\mathbf y_k$:
\begin{equation}\label{key}
\{\mathbf F^{*}\} = \arg \min_f \|\mathbf f(s_k)-\mathbf y_k\|^2 + \eta \int_{\mathcal M}\|\nabla_{\mathcal M} \mathbf f\|^2 dx.
\end{equation}
Here, $\mathcal M$ is a smooth surface/manifold and $\nabla_{\mathcal M}$ denotes the  gradient of the function on the manifold. 
The weight matrix $\mathbf W$ captures the geometry of the patch manifold in \eqref{manifold}. Specifically, closer point pairs on $\mathcal M$ will have higher weights, while distant point pairs will have smaller weights. The equivalence with NLM can be seen by viewing the noisy patches as noisy samples $\mathbf y_k$ on the patch manifold. We note that the weighted sum is often expressed in a compact form as $$\sum_{i=1}^N \sum_{i=1}^{N} \mathbf W_{i,j} \|\mathbf f(x_i)-\mathbf f(x_j)\|^2 = {\rm trace}\left(\mathbf F \mathbf L \mathbf F^T\right).$$ Here, $\mathbf F= \begin{bmatrix}
\mathbf f_1&\ldots & \mathbf f_N
\end{bmatrix}$ and $\mathbf L$ is the Laplacian matrix $\mathbf L = \mathbf D-\mathbf W$, which captures the structure of the manifold and $\mathbf D$ is a diagonal matrix $\mathbf D = {\rm diag}(\sum_j \mathbf W_{i,j})$. $\mathbf L$ can be viewed as the discrete approximation of the Laplace-Beltrami operator on the continuous surface/manifold \cite{belkin2006manifold}.

\section{Parametric surface representation}\label{background}

In this work, we use the level set representation to describe a (hyper-)surface. We model a (hyper-)surface $\mathcal S$ in $[0,1)^n; n\ge 2$ as the zero level set of a function $\psi$:
\begin{equation}\label{zerolevel}
\mathcal{S}[\psi] = \{\mathbf{x}\in\mathbb{R}^n | \psi(\mathbf{x})=0\}.
\end{equation}
 For example, when $n=2$, $\mathcal{S}$ is a (hyper-)surface of dimension 1, which is typically a curve. We note that the level set representation is widely used in image segmentation \cite{li2010distance}. The normal practice in image segmentation is the non-parametric level set representation of a time-dependent evolution function $\psi$, which results in the PDE-driven models. Note that the initialization of these models affects the stability and the rate of convergence of the methods. So good initialization of level set functions is usually a requirement for good segmentation.

\label{parametric}
Several authors have recently proposed to represent the level set function $\psi$ as a linear combination of basis functions $\varphi_{\mathbf {k}}(\mathbf {x})$ \cite{wang2006radial, bernard2009variational}. These schemes argue that the reduced number of parameters translate to fast and efficient algorithms. Besides, we do not require the good initialization in this setting. Motivated by these schemes, we represent $\psi(\mathbf {x})$ as
 
 \begin{equation}\label{key}
 \psi(\mathbf {x})=\sum_{\mathbf {k}\in\Lambda}\mathbf {c_k}~\varphi_{\mathbf {k}}(\mathbf {x}).
 \end{equation}
Since the level set function is the linear combination of some basis functions, we term the  corresponding zero level set as parametric zero level set. We note that the surface properties would depend on the specific basis functions and will indeed decide the type of the kernel used in the algorithms in Section \ref{noisy}. We now provide some examples of parametric representations, depending on the choices of the basis functions. 


\subsection{Shift invariant surface representation}
A popular choice for the basis functions is the shift invariant representation, where compactly supported basis functions such as B-splines are used. Specifically, the basis functions are shifted copies of a template $\varphi$, denoted by:
\begin{equation}
\varphi_{\mathbf k}(\mathbf x) = \varphi\left(\frac{\mathbf x}{T}-\mathbf k\right).
\end{equation}
Here $T$ is the grid spacing, which controls the degrees of freedom of the representation. The number of B-splines in the above representation is $1/(T - 1)^n$. One may also choose a multi-resolution or sparse wavelet surface representation, when the basis functions are shifted and dilated copies of a template. This approach allows the surface to have different smoothness properties at different spatial regions.

\subsection{Polynomial surface representation}
	\label{poly}
The surface can also be represented as a linear combination of polynomials \cite{ongie2017algebraic}. The polynomial degree will control the degrees of freedom in this setting. This work is inspired by \cite{ongie2017algebraic}. However, we note that the recovery of the surface and functions from few points are not considered in the polynomial setting. In addition, the stability of polynomial representations is not fully clear, which may be needed to represent complex surfaces; we note that low degree polynomials were considered in the examples in \cite{ongie2017algebraic}.

\subsection{Band-limited surface representation}
We assume that the surface is within $[0,1)^n$. A well-studied representation for support limited functions is the Fourier exponential basis, which is widely used in digital image processing \cite{strohmer1997computationally, pan2013sampling, zou2019sampling}, biomedical image processing \cite{strohmer1996recover, potts2002fourier, ongie}, and geophysics \cite{rauth1998smooth}. The level set function can be assumed to be band-limited \cite{ongie}, when $\psi$ is expressed as a Fourier series:
\begin{equation}\label{tri}
\psi(\mathbf {x}) = \sum_{\mathbf {k}\in\Lambda}\mathbf {c_k}\exp(j2\pi \mathbf {k}^T\mathbf {x}),\quad \mathbf {x}\in[0,1)^n.
\end{equation}
In the above representation, the set $\Lambda$ denotes the bandwidth of the Fourier coefficients $\mathbf{c} = \{\mathbf{c_k}: \mathbf{k}\in\Lambda\}$; its cardinality $|\Lambda|$ is the number of free parameters in the surface representation. We refer to $\Lambda$ as the Fourier support of $\psi$ and we note that we always choose the support to be symmetric with respect to the origin. This choice is governed by the relation of this representation with polynomials, described in the next subsection. The extension of $\Lambda$ governs the degree of the polynomial.
%

In this work, we focus on the Fourier series representation due to its key benefits including well-developed theoretical tools, fast algorithms such as fast Fourier transform, orthogonality, and the property that $|\exp(j2\pi \mathbf {k}^T\mathbf {x})|=1$, which results in stable representations and also facilitate the theory. In this work, we mainly focus on this representation because it facilitates us to borrow the theoretical tools from our past work \cite{poddar2018recovery,poddaricassp,zou2019sampling}. We note that these results may be extended to other basis sets but is beyond the scope of this work. We will now review some of the properties of this representation, which we will use in the following sections. 
 
\subsubsection{Relation of bandlimited representation with polynomials}\label{pp1section}

 We also note that bandlimited representations \eqref{tri} have an intimate relation with polynomials \cite{ongie}. In particular, we note that one can transform the polynomial basis to an exponential one by the one-to-one mapping $\nu_i :  [0,1)\to \{z\in\mathbb{C} : |z|=1\}$:
\begin{equation}\label{map}
\nu_i(x_i)= \exp(j2\pi x_i)=:z_i.
\end{equation}
We will make use of this correspondence to study the properties of the zero sets of \eqref{tri}. With this transformation, the representation \eqref{tri} simplifies to the complex polynomial denoted as $\mathcal{P}[\psi]$, which is of the form
\begin{equation}\label{comppoly}
\mathcal{P}[\varphi](\mathbf z) = \sum_{\mathbf{k}\in\Lambda}c_{\mathbf{k}}\prod_{i=1}^n z_i^{k_i}.
\end{equation}
Since the mapping involves powers of $z_i$, where $z_i$ are specified by the trigonometric mapping \eqref{map}, we term the expansion in \eqref{tri} as a trigonometric polynomial.

We note that the mapping $\nu=(\nu_1,\cdots,\nu_n)$ defined by \eqref{map} is a bijection from $[0,1)^n$ onto the complex unit torus $\mathbb{T}^n = \{(z_1,\cdots,z_n) : |z_i|=1, i=1,\cdots,n\}$. Hence, 
\begin{equation}\label{key}
\psi(\mathbf{x})=0 ~~\Leftrightarrow~~ \mathcal{P}[\psi][\mathbf z] = 0 ~\text{on}~\mathbb{T}^n, ~~\mbox{where}~~z_i = \nu_i(x_i), ~~i=1,\cdots,n,
\end{equation}
which implies that there is a one-to-one correspondence between the zero sets of $\psi$ and the zeros of $\mathcal{P}[\psi]$ on the unit torus. Accordingly, we can study the algebraic properties of trigonometric polynomials and their zero sets by studying their corresponding complex polynomials under the mapping $\nu$.

\subsubsection{Non-uniqueness of level-set representation}\label{pp1section}

We first show that the level set representation of a surface in \eqref{tri} may not be unique, when the bandwidth of the representation is larger than the minimal one required to represent the surface. We first note that the function $\psi(\mathbf x)$ with bandwidth $\Lambda$ in \eqref{tri} can be expressed with a larger bandwidth $\Gamma \supset \Lambda$ by zero filling the additional Fourier coefficients:
	\begin{equation}\label{tri1}
	\psi(\mathbf {x})=\sum_{\mathbf {k}\in\Gamma}\tilde{\mathbf{c}}_{\mathbf{k}}\exp(j2\pi \mathbf {k}^T\mathbf {x}),\quad \mathbf {x}\in[0,1)^n,
	\end{equation}
	where the coefficients set $\tilde{\mathbf c}$ is the zero-filled version of the vector $\mathbf c$, denoted by $\tilde{\mathbf c} \in \mathbb C^{|\Gamma|}$:
	\begin{equation}\label{zerofilled}
	\tilde{\mathbf{c}}_{\mathbf{k}} =\begin{cases}
	\mathbf{c_k} & \text{if }  \mathbf{k} \in \Lambda \\
	0 & \text{else}
	\end{cases}.
	\end{equation}
We note that the representation of the surface by functions with the larger bandwidth $\Gamma$ is not unique. In particular, any uniform shift of the coefficients in the Fourier domain corresponds to a phase multiplication in the space domain:
	\begin{equation}\label{shift}
	\varphi' = \varphi\cdot\exp(j2\pi \mathbf {k}_0^T\mathbf {x});\quad\mathbf{k}_0\in \Gamma\ominus \Lambda.
	\end{equation}
	Since $|\exp(j2\pi \mathbf {k}_0^T\mathbf {x})| = 1, \forall \mathbf{x}$, we can see that the zero sets of $\varphi'$ are identical to that of $\varphi$.

	Because the exponentials $\exp(j2\pi \mathbf {k}_0^T\mathbf {x})$ are orthogonal to each other, the functions $\varphi'$ that has the same zero set as $\varphi$ lives in a subspace of dimension $\Gamma\ominus \Lambda$. Here, $\Gamma\ominus \Lambda$ denote the set of all valid uniform shifts $\mathbf k_0$ of $\Lambda$, denoted by $\Lambda + \mathbf k_{0}$, that are contained in $\Gamma$. We will introduce the set $\Gamma\ominus \Lambda$ with more details in \S\ref{nonminlifting}.

\subsubsection{Minimal bandwidth representation of a surface}\label{pp1section}
	
	We note from the previous section that the multiplication with the phase term in \eqref{shift} corresponds to multiplying the trigonometric polynomial in \eqref{comppoly} by $\mathbf z^{\mathbf k_0}$; the degree of the resulting trigonometric polynomial $\varphi'$ will be greater than that of $\varphi$. In this section, we show that out of all these polynomials, the one with smallest degree is unique. More importantly, the bandwidth of the above minimal polynomial can be used as a measure of the complexity of the surface. Specifically, a more complex surface would correspond to a polynomial with a larger bandwidth.
	
The following result shows that for any given surface $\mathcal S$, there exists a unique level set function $\psi$, whose coefficient set $\{\mathbf{c_k} : \mathbf{k}\in\Lambda\}$ has the smallest bandwidth. 

\begin{prop}\label{uniqueminimal}
For every (hyper-)surface $\mathcal{S}$ given by the zero level set of \eqref{tri1}, there is a unique (up to scaling) minimal trigonometric polynomial $\psi$, which satisfies $\psi(\mathbf x)=0; \forall \mathbf x \in \mathcal S$. Any other trigonometric polynomial $\psi_1$ that also satisfies $\psi_1(\mathbf x)=0; \forall \mathbf x \in \mathcal S$ will have $BW(\psi_1)\supseteq BW(\psi)$. Here, $BW(\psi)$ denotes the bandwidth of the function $\psi$.
\end{prop}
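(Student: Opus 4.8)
The plan is to transfer the problem to the polynomial ring using the substitution $z_i=\exp(j2\pi x_i)$ of \eqref{map} and then use unique factorization. A trigonometric polynomial with Fourier support $\Lambda$ corresponds, after multiplying by a single monomial, to an ordinary polynomial $\mathcal{P}[\psi]\in\mathbb{C}[z_1,\dots,z_n]$ as in \eqref{comppoly}; conversely, multiplying $\mathcal{P}[\psi]$ by a monomial $\mathbf z^{\mathbf k_0}$ is exactly the Fourier-domain shift $\Lambda\mapsto\Lambda+\mathbf k_0$ of \eqref{shift}. It is therefore natural to work in the Laurent polynomial ring $\mathbb{C}[z_1^{\pm1},\dots,z_n^{\pm1}]$, which is a localization of $\mathbb{C}[z_1,\dots,z_n]$ and hence a unique factorization domain whose units are precisely the nonzero scalar multiples of monomials. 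In this ring both ``scaling'' and ``Fourier shift'' are multiplication by a unit, so the uniqueness claimed in the proposition is uniqueness up to a unit; a shift does not change $|\Lambda|$, so it is harmless for the support statement.

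Next I would look at the set $\mathcal I$ of \emph{all} Laurent polynomials vanishing on $\mathcal S$ (viewed as a subset of the torus $\mathbb T^n=\{\,|z_i|=1\,\}$); this is an ideal, and the crux is to prove it is \emph{principal}. Factor $\mathcal{P}[\psi]=u\,p_1^{a_1}\cdots p_r^{a_r}$ into a unit $u$ times pairwise non-associate irreducibles $p_i$, and set $q:=p_1\cdots p_r$ (the squarefree part). Since the zero set of $q$ equals that of $\mathcal{P}[\psi]$ (their irreducible factors being the same), it contains $\mathcal S$, so $q\in\mathcal I$. Conversely, let $g\in\mathcal I$. For each $i$, denote by $V(p_i)$ the complex zero set of $p_i$; because $\mathcal S$ is a genuine hypersurface, $V(p_i)\cap\mathbb T^n$ is a real-analytic set of real dimension $n-1$ and is therefore Zariski dense in the irreducible $(n-1)$-dimensional variety $V(p_i)$. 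Hence $g$, which vanishes on $\mathcal S\supseteq V(p_i)\cap\mathbb T^n$, vanishes on all of $V(p_i)$, and by Hilbert's Nullstellensatz together with the irreducibility (hence primality) of $p_i$ we get $p_i\mid g$. As the $p_i$ are pairwise coprime, $q=\prod_i p_i$ divides $g$, so $\mathcal I=(q)$. Translating $q$ by a monomial to put its support in the form \eqref{tri} yields the minimal trigonometric polynomial $\psi$, and it is unique up to a unit by uniqueness of the generator of a principal ideal.

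For the support statement, let $\psi_1$ be any trigonometric polynomial vanishing on $\mathcal S$. Then $\mathcal{P}[\psi_1]=q\,h$ for some Laurent polynomial $h$, and since the Newton polytope of a product is the Minkowski sum of the Newton polytopes, $\operatorname{Newt}(\mathcal{P}[\psi_1])=\operatorname{Newt}(q)+\operatorname{Newt}(h)$ contains a translate of $\operatorname{Newt}(q)$. Consequently $\mathrm{supp}(\psi_1)\supseteq\mathrm{supp}(\psi)+\mathbf k_0$ for a suitable $\mathbf k_0\in\mathbb Z^n$, with equality (equivalently, $h$ a unit) precisely when $\psi_1$ is itself minimal; in particular $|\mathrm{supp}(\psi_1)|\ge|\mathrm{supp}(\psi)|$, which is the inclusion (up to the harmless shift) asserted.

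I expect the main obstacle to be the Zariski-density step: arguing rigorously that vanishing on the ``thin'' real set $\mathcal S$ (real dimension $n-1$) forces vanishing on the full complex hypersurface $V(p_i)$ (real dimension $2(n-1)$). This needs a real-analytic/real-algebraic geometry input --- e.g.\ analytic continuation via the identity theorem from an $(n-1)$-dimensional real slice of $V(p_i)$, or the fact that $\mathbb T^n$ is the real locus of $(\mathbb C^\ast)^n$ for the real structure $z\mapsto 1/\bar z$ and that a real form of an irreducible variety possessing a smooth full-dimensional real point is Zariski dense --- and it implicitly uses that $\mathcal S$ is a true hypersurface, so that no irreducible factor of $\psi$ has a torus zero set of dimension below $n-1$ (otherwise $\mathcal S$ would carry lower-dimensional components and the minimal polynomial would fail to be unique). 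The remaining ingredients --- the UFD bookkeeping and the Newton-polytope facts --- are routine.
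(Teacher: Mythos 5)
Your proposal is correct in substance and follows the same basic strategy as the paper---transfer to $\mathbb{C}[z_1,\dots,z_n]$ via the map $\nu$, invoke Hilbert's Nullstellensatz to turn containment of zero sets into divisibility, and read off the support statement from $p\mid q$---but the packaging is genuinely different. The paper decomposes $V(\mathcal{P}[\psi])$ into irreducible components, argues via the intersection-dimension result (Theorem \ref{intersect}, through Lemma \ref{dimlemma}) that $\mathcal{P}[\psi_1]$ must share every factor, and then applies Lemma \ref{equalpoly}; existence and uniqueness are outsourced to Prop.~A.3 of \cite{ongie}. You instead work in the Laurent ring, characterize the minimal polynomial as the generator of the principal ideal of all Laurent polynomials vanishing on $\mathcal S$, take the squarefree part explicitly, and treat the unit group (scalars times monomials) cleanly---which buys you a precise account of the shift ambiguity the paper glosses over: the minimal polynomial is unique only up to scaling \emph{and} a Fourier shift, and likewise the support containment can only hold up to a translate. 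The crux you flag---that vanishing on the real $(n-1)$-dimensional torus slice of $V(p_i)$ forces vanishing on all of $V(p_i)$---is exactly the step the paper also leans on without a self-contained proof; note that for $n\ge 3$ a naive dimension count does not close it, since a proper subvariety of $V(p_i)$ has real dimension up to $2(n-2)\ge n-1$, so the identity-theorem/real-form argument you sketch is genuinely needed rather than optional. One caveat on your last step: Newton-polytope containment does \emph{not} imply $\mathrm{supp}(\psi_1)\supseteq\mathrm{supp}(\psi)+\mathbf k_0$ as sets of lattice points, because coefficient cancellation can delete points from the support of a product (e.g.\ $(1+z_1)(1-z_1)=1-z_1^2$); the Newton-polytope version is the correct, provable statement, and the paper's literal claim (and its one-line deduction from $p\mid q$) suffers from the same defect.
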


As seen from \eqref{shift}, the coefficients of $\psi_1$ can be the shifted version of the coefficients of $\psi$. Thus, the Fourier support of $\psi_1$ is larger than (contains) the Fourier support of $\psi$; the degree of the trigonometric polynomial $\psi_1$ is larger than the degree of the minimal polynomial $\psi$, which has the smallest degree or equivalently bandwidth. In this sense, the minimal polynomial $\psi$ is unique, up to scaling. The proof of this result  is given in Appendix \ref{minimalsection}. We refer to the $\psi$ of the form \eqref{tri} with the minimal bandwidth $\Lambda$ that satisfy
\begin{equation}
\psi(\mathbf x) = 0; ~~\forall \mathbf x \in \mathcal S
\end{equation}
 as the \emph{minimal trigonometric polynomial} of the surface $\mathcal S$. 
 
 In other words, when $\psi$ is the minimal trigonometric polynomial of a surface $\mathcal S$, it does not have a factor with no zeros (i.e., never vanishes or vanishes only at isolated points on $[0,1)^n$). In particular, if a polynomial has a factor with no zeros in $[0,1)^n$, one can remove this factor and obtain a polynomial with a smaller bandwidth and with the same support set. Note from \eqref{comppoly} that the minimal trigonometric polynomial will correspond to $\mathcal{P}[\psi]$ being a polynomial with the minimal degree.


As mentioned at the beginning of this section, the bandwidth $\Lambda$ of the minimal polynomial of the surface $\mathcal S$ grows with the complexity of $\mathcal S$; a more oscillatory surface with a lot of details corresponds to a high bandwidth minimal polynomial, while a simple and highly smooth surface corresponds to a low bandwidth minimal polynomial. We hence consider $|\Lambda|$ as a \emph{complexity measure} of the surface. Furthermore, we note that the surface model can approximate an arbitrary closed surface with any degree of accuracy, as long as the bandwidth is large enough \cite{ongie}. One can refer to Fig.2 in \cite{ongie} for illustration in 2D and see Fig. \ref{cxcurve} for illustration in 3D. Here we illustrate this idea in 2D/3D for simplicity, but the approach is general for any dimensions.

\begin{figure}[!h]
\centering
\subfloat[$17\times17\times17$ coefficients]{\includegraphics[angle=90, width=0.27\textwidth]{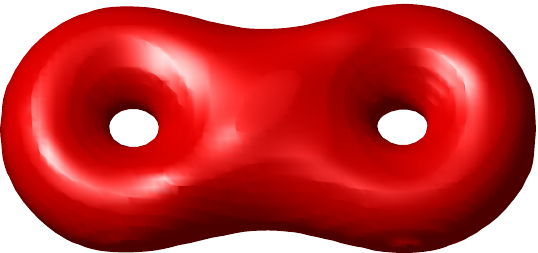}}\hspace{3em}
\subfloat[$25\times25\times25$ coefficients]{\includegraphics[angle=90, width=0.2\textwidth]{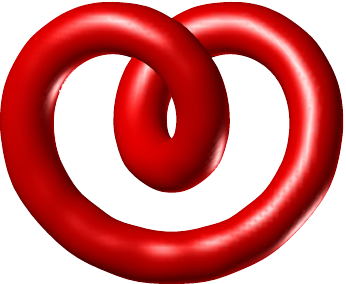}}\hspace{3em}
\subfloat[$33\times33\times33$ coefficients]{\includegraphics[width=0.25\textwidth]{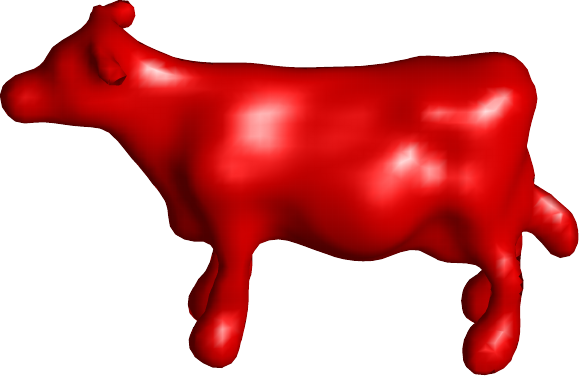}}
	\caption{Illustration the fertility of our level set representation model in 3D. The three examples show that our model is capable to capture the geometry of the shape even though the shape has complicated topologies, which demonstrated that the representation is not restrictive.}
	\label{cxcurve}
\end{figure}

\subsubsection{Irreducible bandlimited surfaces}

We now introduce the concept of irreducible polynomials, which is important for our results. We term a surface to be irreducible if its minimal trigonometric polynomial is irreducible. A polynomial is irreducible if it cannot be factorized into smaller factors, whose zero sets are within $[0,1)^n$. Most of the irreducible surfaces are simply connected (i.e., consist of a single connected component \footnote{One can come up with counter examples of irreducible polynomials with multiple components. In this work, one can ignore these pathological counter examples and assume that an irreducible bandlimited surface will consist of only one connected component.}). Intuitively, a general surface may be composed of several connected components, where each connected component is irreducible. In this case, we term the above surface as the union or irreducible surfaces. The minimal polynomial of the union of irreducible surfaces will be the product of the irreducible minimal polynomials of the individual connected components. The following definitions puts the above explanations into more concrete terms:
\begin{defn}
	A surface is termed as irreducible, if it is the zero set of an irreducible trigonometric polynomial. 
\end{defn}

\begin{defn}
A trigonometric polynomial $\psi(\mathbf {x})$ is said to be irreducible, if the corresponding polynomial $\mathcal{P}[\psi]$ is irreducible in $\mathbb{C}[z_1,\cdots,z_n]$. A polynomial $p$ is irreducible over a field of complex numbers, if it cannot be expressed as the product of two or more non-constant polynomials with complex coefficients.
\end{defn}

When $\psi$ can be written as the product of several irreducible components $\psi = \prod_{i=1}^{m}\psi_i$, then $\mathcal S[\psi]$ is essentially the union of irreducible surfaces:
\begin{equation}\label{uim}
\mathcal{S}[\psi]=\bigcup_{i=1}^{m}\mathcal{S}[\psi_i].
\end{equation}


\section{Lifting mapping and low-dimensional feature spaces}\label{liftingsection}
In this section, we show that there exists a non-linear transformation, which maps the points on an irreducible surface to a low-dimensional subspace. The transformation is intimately tied in with the specific choice of basis functions used to represent the surface. Our results show that the dimension of the subspace depends on the complexity of the surface, or equivalently the bandwidth of the minimal polynomial. We can use the rank of the feature matrix as a surrogate of the complexity of the surface to recover it, much like sparsity is used to recover signals in compressed sensing.

Consider the non-linear lifting mapping $\Phi_{\Gamma}: [0,1]^n \rightarrow \mathbb C^{|\Gamma|}$, obtained by evaluating the basis functions at $\mathbf x$:
\begin{equation}\label{glifting}
\Phi_{\Gamma}(\mathbf{x})=\begin{bmatrix}\varphi_{\mathbf k_1}(\mathbf x) \\ \vdots\\\quad \varphi_{\mathbf k_{|\Gamma|}} (\mathbf x))\end{bmatrix}.
\end{equation}
We can view $\Phi_{\Gamma}(\mathbf x)$ as  the feature vector of the point $\mathbf  x$, analogous to the ones used in kernel methods \cite{scholkopf}. Here, $|\Gamma|$ denotes the cardinality of the set $\Gamma$. We denote the set
\begin{equation}\label{fs}
\mathcal{V}_{\Gamma}(\mathcal{S}) = \{\Phi_{\Gamma}(\mathbf x)| \mathbf x \in \mathcal S\}
\end{equation}
as the feature space of the surface $\mathcal S$. Since any point on a surface $\mathcal{S}$ satisfies \eqref{zerolevel}, the feature vectors of points  from $\mathcal S$ satisfy
\begin{equation}\label{ann}
{\mathbf c}^T \Phi_{\Gamma}(\mathbf x) = 0, ~~\forall \mathbf x \in \mathcal S,
\end{equation}
where $\mathbf c$ is the coefficients vector in the representation of $\psi$ in \eqref{tri}. The above relation is illustrated in Fig. \ref{illusgraph}.
\begin{figure}[!h]
	\centering
	\includegraphics[width=0.3\textwidth]{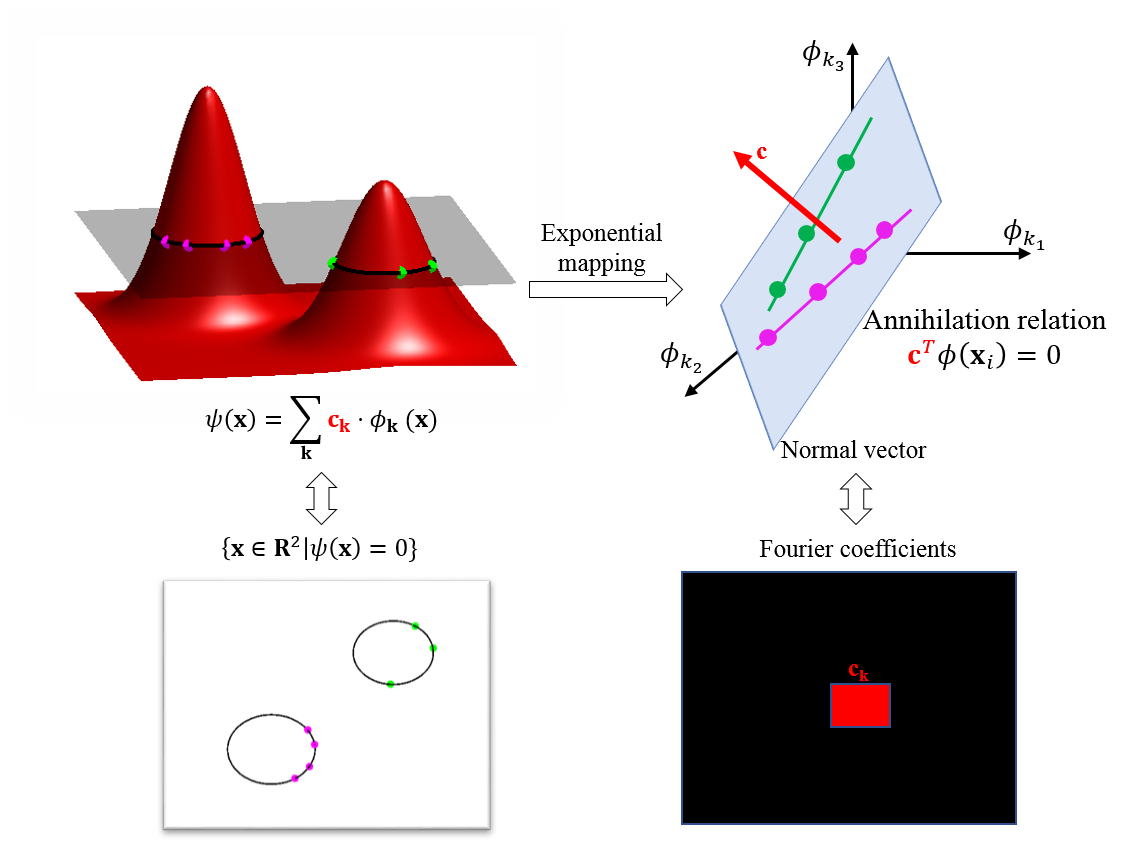}
	\caption{Illustration of the annihilation relations \eqref{ann} in 2D. Assume that the curve is the zero set of a band-limited function $\psi(\mathbf x)$, shown in the top left. The Fourier coefficients of $\psi$, denoted by $\mathbf c$, are bandlimited in $\Lambda$, denoted by the red square in the bottom right. Each point on the curve satisfies $\psi(\mathbf x_i)=0$. Using the representation \eqref{zerolevel}, we have $\mathbf c^T\Phi_{\Lambda}(\mathbf x_i)=0$. This means that the feature map will lift each point in the level set to a $|\Lambda|$ dimensional subspace whose normal vector is specified by $\mathbf c$, as illustrated by the plane and the red vector $\mathbf c$ in the top right. Note that if more than one closed curve are presented, each curve will be lifted to a lower dimensional subspace in the feature space, as shown by the two lines in the plane, and the lower dimensional spaces will span the $|\Lambda|$ dimensional subspace. (Figure courtesy of Q. Zou, reprint from \cite{zou2019sampling} with permission from IEEE).}
	\label{illusgraph}
\end{figure}

The relation \eqref{ann} also implies that $\mathbf c$ is orthogonal to all the feature vectors of points living on $\mathcal S$ and hence a feature matrix constructed from points on the surface is rank deficient by one; i.e., the dimension of the feature space is at most $|\Gamma|-1$. However, we now show that the feature matrix is often significantly low-rank depending on the geometry of the surface and the specific representations of the surface.

\subsection{Shift invariant representation}
We now show that if the level set function is represented by a shift invariant representation (e.g. B-splines), the dimension of the lifted feature points are dependent on the area of the surface. We consider $\varphi_{\mathbf k} = \beta^p\left(\frac{\mathbf x}{T}-\mathbf k\right)$ to be the $p^{\rm th}$-degree tensor-product B-spline function. Note that $\beta^p(\mathbf x)$ is support limited in $[-(p+1)/2,(p+1)/2]^n$. If the support  of $\varphi_k(\mathbf x)$ does not overlap with $\mathcal S$, we have $\varphi_k(\mathbf x)=0; \forall \mathbf x \in \mathcal S$. Hence, we have 
\begin{equation}\label{nullspace1}
	\mathbf i_{\mathbf k}^T~\Phi_{\Gamma}(\mathbf x) = 0,~~\forall \mathbf x \in \mathcal S
\end{equation}
where $\mathbf i_{\mathbf k}$ is the indicator vector whose $k^{\rm th}$ entry is one and the rest of the entries are zeros.  Note that all of these indicator vectors are linearly independent. The number of basis vectors whose bandwidth does not overlap with $\mathcal S$ is dependent on the area of $\mathcal S$ as well as the support of $\psi$. Thus, the dimension of $\mathcal V_{\Gamma}(\mathcal S)$ is a measure of the area of the surface $\mathcal S$, and satisfies 
\begin{equation}\label{rank1}
{\rm{dim}} \left(\mathcal V_{\Gamma}(\mathcal S)\right) \leq |\Gamma| - (P+1) = A, 
\end{equation}
where $P$ is the number of basis functions whose support does not overlap with $\mathcal S$.

\subsection{Band-limited surface representation}
\label{liftsection}

We now consider the case of  an arbitrary point $\mathbf x$ on the zero level set of $\psi(\mathbf x)$ with bandwidth $\Lambda$. Using \eqref{tri1}, the lifting is specified by:
\begin{equation}\label{lifting}
\Phi_{\Lambda}(\mathbf{x})=\begin{bmatrix}\exp(j2\pi \mathbf {k}_1^T\mathbf {x}) \\ \exp(j2\pi \mathbf {k}_2^T\mathbf {x})\\ \vdots\\\quad \exp(j2\pi \mathbf {k}_{|\Lambda|}^T\mathbf {x})\end{bmatrix}.
\end{equation}
We note from \eqref{lifting} that the lifting $\Phi$ can be evaluated with a larger bandwidth $\Gamma \supset \Lambda$. When the lifting is performed with the minimal bandwidth (i.e., $\Gamma=\Lambda$), we term the corresponding lifting as the \emph{minimal lifting}. 

We now analyze the dimension of the feature space $\mathcal V_{\Lambda}(\mathcal S)$ for the minimal ($\Gamma=\Lambda$) and non-minimal lifting ( $\Lambda\subset \Gamma$) cases. In both cases, we will show that the feature space is low-dimensional and is a subspace of $\mathbb C^{|\Lambda|}$.

\subsubsection{Irreducible surface with minimal lifting ($\Gamma=\Lambda$)}
We first focus on the case where $\psi$ is an irreducible trigonometric polynomial and the bandwidth of the lifting is specified by $\Lambda$, which is the bandwidth of the minimal polynomial. The annihilation relation \eqref{ann} implies that $\mathbf c$ is orthogonal to the feature vectors $\Phi_{\Lambda}(\mathbf {x})$. This implies that 
\begin{equation}\label{dimension}
{\dim}(\mathcal V_{\Lambda}) \leq |\Lambda|-1.
\end{equation}

\subsubsection{Irreducible surface with non-minimal lifting ($\Gamma \supset \Lambda$)}
\label{nonminlifting}
We now consider the setting where the non-linear lifting is specified by $\Phi_{\Gamma}(\mathbf {x})$, where $\Lambda \subset \Gamma$. 
Because of the annihilation relation, we have 
\[\tilde{\mathbf{c}}^T~\Phi_\Gamma(\mathbf{x})=0,\]
where $\tilde{\mathbf{c}}$ is the zero filled coefficients in \eqref{tri1}. Since the zero set of the function $\psi_{\mathbf k_0}(\mathbf x) = \psi(\mathbf x)\cdot\exp(j2\pi \mathbf {k}_0^T\mathbf {x})$
is exactly the same as that of $\psi$, we have 
\begin{equation}
\sum_{\mathbf {k}}\mathbf {c_{\mathbf k-\mathbf k_0}}\exp(j2\pi \mathbf {k}^T\mathbf {x}) = 0; ~~\forall \mathbf x \in \mathcal S[\psi].
\end{equation}
This implies that any shift of $\tilde{\mathbf c}$ within $\Gamma \ominus \Lambda$, denoted by  $\tilde{\mathbf d}_{\mathbf k} = \mathbf c_{\mathbf k-\mathbf k_0}$ will satisfy $\tilde{\mathbf{d}}^T~\Phi_\Gamma(\mathbf{x})=0.$ It is straightforward to see that $\tilde{\mathbf d}$ and $\tilde{\mathbf c}$ are linearly independent for all values of $\mathbf k_0$. We denote the number of possible shifts such that the shifted set $\Lambda + \mathbf k_0$ is still within $\Gamma$ (i.e., $\Lambda + \mathbf k_0 \subseteq \Gamma$ ) by $|\Gamma\ominus\Lambda|$:
\begin{equation}
\Gamma\ominus\Lambda=\{\mathbf{l}\in\Gamma \mid \mathbf{l}-\mathbf{k}\in\Gamma, \forall \mathbf{k}\in\Lambda\}.
\end{equation}
This set is illustrated in Fig. \ref{ominusfig} along with $\Gamma$ and $\Lambda$.
\begin{figure}[!h]
	\centering
	\includegraphics[width=0.3\textwidth]{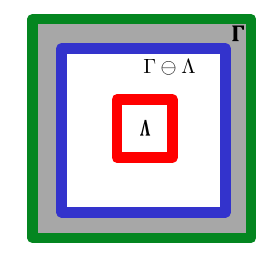}	
	\caption{The non-minimal filter bandwidth $\Gamma$ (green) is illustrated along with the minimal filter bandwidth $\Lambda$ (red). The set $\Gamma\ominus\Lambda$ (blue) contains all indices at which $\Lambda$ can be centered, while remaining inside $\Gamma$.}
	\label{ominusfig}
\end{figure}
Since the vectors $\mathbf c_{\mathbf k-\mathbf k_0}$ are linearly independent and are orthogonal to any feature vector $\Phi_\Gamma(\mathbf{x})$ on $\mathcal S[\psi]$, the dimension of the subspace is bounded by 
\begin{equation}\label{dimnonmin}
{\dim}(\mathcal V_{\Gamma}) \leq |\Gamma|-|\Gamma\ominus\Lambda|.
\end{equation}

\subsubsection{Union of irreducible surfaces with $\Gamma \supset \Lambda_i$}
When $\psi = \prod_{i=1}^{m}\psi_i$, each irreducible surface $\mathcal{S}[\psi_i]$ will be mapped to a subspace of dimension $|\Gamma|-|\Gamma\ominus\Lambda_i|$. This implies that the non-linear lifting transforms the union of irreducible surfaces to the well-studied union of subspace model \cite{lu2008theory, gedalyahu2010time, mishali2011xampling}.

\section{Surface recovery from samples}
\label{sampling}
In this section, we will use the low-rank structure of the feature maps of the points to recover the surface. As discussed in the introduction, the recovery of a surface/manifold from point clouds is an important problem in denoising, machine learning, shape recovery from point clouds, and image segmentation. For presentation purposes, we consider different cases in the increasing order of complexity. In particular, we consider irreducible (single connected component) surfaces with minimal lifting, union of irreducible components with minimal lifting, and finally the case with non-minimal lifting. Note that in practice, the bandwidth of the surface is not known apriori, and hence one has to over-estimate the bandwidth; this translates to the non-minimal lifting setting. Our results in this section show that irreducible surfaces can be recovered from very few samples, as long as the number of samples exceed a number proportional to the bandwidth. Union of irreducible surfaces can also be recovered from few samples, but each of the irreducible components need to be sampled adequately to guarantee perfect recovery.

\subsection{Sampling theorems}
We consider the recovery of the surface $\mathcal{S}$ from its samples $\mathbf{x}_i; i=1, \cdots ,N$. According to the analysis in the previous section, if the sampling point $\mathbf{x}_i$ is located on the zero level set of $\psi(\mathbf{x})$, we will then have the annihilation relation specified by \eqref{ann}. Notice that equation \eqref{ann} is a linear equation with $\mathbf{c}$ as its unknowns. Since all the samples $\mathbf{x}_i; i=1,..,N$ satisfy the annihilation relation \eqref{ann}, we have
\begin{equation}\label{nullspacecond}
\mathbf{c}^T \underbrace{\left[\Phi_{\Gamma}(\mathbf{x}_1) \quad\cdots\quad \Phi_{\Gamma}(\mathbf{x}_N)\right]}_{\Phi_{\Gamma}(\mathbf{X})}=0.
\end{equation}
We call $\Phi_{\Gamma}(\mathbf{X})$ the feature matrix of the sampling set $\mathbf{X}=\{\mathbf{x}_1, \cdots, \mathbf{x}_N\}$. We propose to estimate the coefficients $\mathbf c$, and hence the surface $\mathcal S[\psi]$ using the above linear relation \eqref{nullspacecond}. Note that $\mathcal S[\psi]$ is invariant to the scale of $\mathbf c$; without loss of generality, we reformulate the estimation of the surface as the solution to the system of equations
\begin{equation}
\label{optcoeffs}
\mathbf c^T~ \boldsymbol\Phi_{\Gamma}(\mathbf X)=0; ~~\|\mathbf c\|_F =1.
\end{equation}
We note that without the constraint $\|\mathbf c\|_F =1$, $\mathbf c^T~ \boldsymbol\Phi_{\Gamma}(\mathbf X)=0$ will have a trivial solution with $\mathbf c=0$. The use of the Frobenius norm constraint enables us to solve the problem using eigen decomposition.
The above estimation scheme yields a unique solution, if the matrix $\Phi_{\Lambda}(\mathbf{X})$ has a unique null-space basis vector. We will now focus on the number of samples $N$ and its distribution on $\mathcal S[\psi]$, which will guarantee the unique recovery of $\mathcal S[\psi]$. We will consider different lifting scenarios introduced in Section \ref{liftingsection} separately. As we will see, in some cases considered below, the null-space has a large dimension. However, the minimal null-space vector (coefficients with the minimal bandwidth) will still uniquely identify the surface, provided the sampling conditions are satisfied. 

\subsubsection{Case 1: Irreducible surfaces with minimal lifting}
\label{recoverysection}

Suppose $\psi(\mathbf{x})$ is an irreducible trigonometric polynomial with bandwidth $\Lambda$. Consider the lifting which is specified by the minimal bandwidth $\Lambda$. We see from \eqref{dimension} that $\rm{rank}\left(\Phi_{\Lambda}(\mathbf{X})\right) \leq |\Lambda|-1$. The following result shows when the inequality is replaced by an equality. 

\begin{prop}\label{rankprop}
Let $\{\mathbf{x}_1,\cdots, \mathbf{x}_N\}$ be $N$ independent and uniformly distributed random samples on the surface $\mathcal{S}[\psi]$, where $\psi(\mathbf{x})$ is an irreducible (minimal) trigonometric polynomial with bandwidth $\Lambda$. The feature matrix $\Phi_{\Lambda}(\mathbf X)$ will have rank $|\Lambda|-1$, if $$N\geq  |\Lambda|-1$$
for almost all surfaces $S[\psi]$.
\end{prop}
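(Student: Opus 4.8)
The plan is to show that the rank of $\Phi_\Lambda(\mathbf X)$, viewed as a function of the $N$ sample points $\mathbf x_1,\dots,\mathbf x_N\in\mathcal S[\psi]$, attains its maximal possible value $|\Lambda|-1$ for all choices of samples outside a measure-zero subset of $\mathcal S[\psi]^N$. Since we already know from \eqref{dimension} that the rank is always at most $|\Lambda|-1$, it suffices to produce \emph{one} configuration of $|\Lambda|-1$ points on $\mathcal S$ whose feature vectors are linearly independent; then a standard determinantal/analyticity argument upgrades this to a probability-one statement. Concretely, I would proceed as follows. First, reduce to the case $N=|\Lambda|-1$: if $N$ is larger, any subset of $|\Lambda|-1$ samples that already achieves full rank forces the whole matrix to have rank $|\Lambda|-1$ (and the bad set for $N$ points is contained in a finite union of cylinders over the bad sets for $|\Lambda|-1$ points, hence still null). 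Second, set $M=|\Lambda|-1$ and consider the $|\Lambda|\times M$ matrix $\Phi_\Lambda(\mathbf x_1,\dots,\mathbf x_M)$. Its rank is $M$ iff some $M\times M$ minor is nonzero; each such minor is, after the change of variables $z_i=\exp(j2\pi x_i)$, a fixed Laurent polynomial (a trigonometric polynomial) $g(\mathbf x_1,\dots,\mathbf x_M)$ in the $nM$ real variables. Third, I would argue that at least one of these minors is \emph{not identically zero on} $\mathcal S^M$. Granting that, the zero set of that minor restricted to $\mathcal S^M$ is a proper analytic subvariety of the smooth manifold $\mathcal S^M$, hence has measure zero with respect to the natural (surface) measure from which the $\mathbf x_i$ are drawn; therefore $N\ge M$ i.i.d. samples avoid it with probability $1$, giving rank exactly $|\Lambda|-1$.

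The crux — and the step I expect to be the main obstacle — is the third one: showing that \emph{some} $M\times M$ minor of $\Phi_\Lambda$ does not vanish identically on $\mathcal S^M$. Equivalently, one must rule out the degenerate possibility that for \emph{every} choice of $M$ points on $\mathcal S$ the feature vectors $\Phi_\Lambda(\mathbf x_i)$ span a space of dimension $< M=|\Lambda|-1$, i.e. that $\dim\mathcal V_\Lambda(\mathcal S)<|\Lambda|-1$ strictly. This is exactly where irreducibility and minimality of $\psi$ must enter. The argument I would give: suppose $\dim\operatorname{span}\{\Phi_\Lambda(\mathbf x):\mathbf x\in\mathcal S\}\le|\Lambda|-2$. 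Then there exist two linearly independent coefficient vectors $\mathbf c,\mathbf c'\in\mathbb C^{|\Lambda|}$ with $\mathbf c^T\Phi_\Lambda(\mathbf x)=\mathbf c'^T\Phi_\Lambda(\mathbf x)=0$ for all $\mathbf x\in\mathcal S$; these correspond to two trigonometric polynomials $\psi,\psi'$, both supported in $\Lambda$, both vanishing on all of $\mathcal S$, and not scalar multiples of one another. Passing through the polynomial correspondence \eqref{comppoly} and using that $\mathcal S$ is the zero set of the \emph{irreducible} polynomial $\mathcal P[\psi]$, the Nullstellensatz (together with the fact that $\mathcal P[\psi]$, being irreducible, generates a prime ideal, and its zero set on the torus is Zariski-dense in the affine hypersurface it defines) forces $\mathcal P[\psi]$ to divide $\mathcal P[\psi']$, so that $\mathcal P[\psi']=\mathcal P[\psi]\cdot q$ for some nonconstant $q$. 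But $q$ nonconstant would make the Newton polytope — and hence the support — of $\mathcal P[\psi']$ strictly larger than that of $\mathcal P[\psi]$, contradicting $\operatorname{supp}(\psi')\subseteq\Lambda$ together with the minimality of $\psi$ (Proposition \ref{uniqueminimal}). Hence $\psi'$ is a scalar multiple of $\psi$, contradicting linear independence; therefore $\dim\mathcal V_\Lambda(\mathcal S)=|\Lambda|-1$ and some $M\times M$ minor is not identically zero on $\mathcal S^M$.

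With that in hand the remaining bookkeeping is routine: the generic-rank argument. Let $g$ be a minor not identically zero on $\mathcal S^M$. Restricted to the manifold $\mathcal S^M$, $g$ is a real-analytic (indeed, restriction of a trigonometric polynomial) function that is not identically zero, so $\{g=0\}\cap\mathcal S^M$ is a closed set with empty interior and measure zero for the product surface measure. Since the samples $\mathbf x_1,\dots,\mathbf x_N$ are drawn i.i.d.\ from a distribution on $\mathcal S$ that is absolutely continuous with respect to surface measure, the event that the first $M$ of them land in $\{g=0\}$ has probability $0$; on its complement $\Phi_\Lambda(\mathbf X)$ has a nonvanishing $M\times M$ minor, hence rank $\ge M=|\Lambda|-1$, and combined with the a priori bound \eqref{dimension} the rank is exactly $|\Lambda|-1$. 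This proves the claim for $N\ge|\Lambda|-1$. I would remark that the only substantive input beyond linear algebra is the algebraic-geometry step identifying $\dim\mathcal V_\Lambda(\mathcal S)$ with $|\Lambda|-1$, which is the precise point where irreducibility and minimality of the trigonometric polynomial are indispensable.
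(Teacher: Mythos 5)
Your proposal is correct in substance but takes a genuinely different route from the paper's. The paper argues by contradiction directly on the samples: if the rank were deficient, there would exist a second linearly independent null-space vector $\mathbf d\leftrightarrow\eta$; since $\psi$ is irreducible and minimal, $\eta$ and $\psi$ share no common factor, so every sample would lie in $\mathcal S[\psi]\cap\mathcal S[\eta]$, an event of probability zero by Corollary \ref{measure} (which rests on the intersection-dimension result, Theorem \ref{intersect}). You instead first prove the deterministic statement $\dim\mathcal V_\Lambda(\mathcal S)=|\Lambda|-1$ via Nullstellensatz, divisibility, and the support/Newton-polytope argument (paralleling the proof of Proposition \ref{uniqueminimal}), and then convert it to a probability-one statement by a generic-minor argument. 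Your formulation buys something real: in the paper's proof the candidate $\eta$ depends on the random samples, so the per-fixed-$\eta$ measure-zero statement does not immediately control the existential event over uncountably many $\eta$; phrasing the bad event as the vanishing of a single fixed minor sidesteps that quantification entirely. The one step you must tighten is the final claim that a not-identically-zero real-analytic function on $\mathcal S^M$ has a measure-zero zero set: this is false on a disconnected manifold, and the real zero set of an irreducible polynomial can have several connected components. You need the chosen minor to be non-vanishing on each product of full-dimensional components, i.e., ${\rm span}\{\Phi_\Lambda(\mathbf x):\mathbf x\in\mathcal S_1\}$ must have dimension $|\Lambda|-1$ for every $(n-1)$-dimensional component $\mathcal S_1$ of $\mathcal S[\psi]$. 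This follows from the argument you already give, applied componentwise: a second independent annihilator $\eta$ supported in $\Lambda$ and vanishing on all of $\mathcal S_1$ would place the positive-measure set $\mathcal S_1$ inside $\mathcal S[\psi]\cap\mathcal S[\eta]$ with $\psi,\eta$ sharing no factor, contradicting Corollary \ref{measure}. So the gap is repairable with tools you already invoke, and the overall strategy stands.
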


We note that the above results are true for almost all surfaces. This implies that the surfaces for which the above results do not hold correspond to a set of measure zero \cite{federer2014geometric}. 
The above proposition guarantees that the solution to the system of equations specified by \eqref{optcoeffs} is unique (up to scaling) when the number of samples exceeds $N = |\Lambda|-1$ with unit probability. The proof of this proposition can be found in Appendix \ref{proofrankprop}. With Proposition \ref{rankprop}, we obtain the following sampling theorem.


\begin{thm} [Irreducible surfaces of any dimension]\label{irred}
Let $\psi(\mathbf{x}), \mathbf{x}\in [0,1]^n, n\ge 2$ be an irreducible trigonometric polynomial whose bandwidth is given by $\Lambda$. The zero level set of $\psi(\mathbf{x})$ is denoted as $\mathcal{S}[\psi]$. If we are randomly given $N\ge |\Lambda|-1$ samples on $\mathcal{S}[\psi]$, then almost all surfaces $\mathcal{S}[\psi]$ can be recovered.
\end{thm}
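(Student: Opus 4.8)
The plan is to obtain the theorem as a direct consequence of Proposition \ref{rankprop} and the annihilation relation \eqref{nullspacecond}. The proposed recovery \eqref{optcoeffs} asks for a unit Frobenius-norm vector $\hat{\mathbf c}\in\mathbb C^{|\Lambda|}$ lying in the null space $\{\mathbf d\in\mathbb C^{|\Lambda|}:\mathbf d^T\Phi_{\Lambda}(\mathbf X)=0\}$ of the feature matrix $\Phi_{\Lambda}(\mathbf X)$. So the whole statement reduces to two facts, each holding with probability $1$: (i) this null space is exactly one-dimensional, and (ii) it is spanned by the true coefficient vector $\mathbf c$ of $\psi$ in \eqref{tri}. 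Granting (i) and (ii), every solution of \eqref{optcoeffs} is a scalar multiple of $\mathbf c$, and since a nonzero scalar multiple of $\psi$ has the same zero level set, the surface is then recovered.

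First I would apply Proposition \ref{rankprop}: because $\psi$ is an irreducible (minimal) trigonometric polynomial with support $\Lambda$ and the $N\ge|\Lambda|-1$ samples are drawn independently at random on $\mathcal S[\psi]$, the feature matrix satisfies $\rank\bigl(\Phi_{\Lambda}(\mathbf X)\bigr)=|\Lambda|-1$ almost surely. By the rank--nullity theorem, the null space $\{\mathbf d:\mathbf d^T\Phi_{\Lambda}(\mathbf X)=0\}$ then has dimension $|\Lambda|-(|\Lambda|-1)=1$, which is fact (i). For fact (ii), every sample lies on $\mathcal S[\psi]$, so \eqref{ann} gives $\mathbf c^T\Phi_{\Lambda}(\mathbf x_i)=0$ for each $i$, i.e. $\mathbf c$ satisfies \eqref{nullspacecond}; moreover $\mathbf c\neq 0$ since $\Lambda=\operatorname{supp}(\psi)$. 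Hence the one-dimensional null space is precisely $\operatorname{span}\{\mathbf c\}$, and every solution $\hat{\mathbf c}$ of \eqref{optcoeffs} equals $\lambda\,\mathbf c/\|\mathbf c\|_F$ for some unimodular $\lambda\in\mathbb C$.

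It remains to translate the recovered coefficients into the recovered surface. The polynomial built from $\hat{\mathbf c}$, namely $\hat\psi(\mathbf x)=\sum_{\mathbf k\in\Lambda}\hat{\mathbf c}_{\mathbf k}\exp(j2\pi\mathbf k^T\mathbf x)$, equals $(\lambda/\|\mathbf c\|_F)\,\psi(\mathbf x)$ by the previous step, so $\mathcal S[\hat\psi]=\{\mathbf x:\hat\psi(\mathbf x)=0\}=\{\mathbf x:\psi(\mathbf x)=0\}=\mathcal S[\psi]$. Therefore, with probability $1$, the level set reconstructed from the $N$ samples coincides with $\mathcal S[\psi]$, which is the claim. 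Note that irreducibility and minimality are not invoked again here; they enter only through Proposition \ref{rankprop}, since for a reducible or non-minimal $\psi$ the feature vectors of points on $\mathcal S$ can span a strictly smaller subspace, enlarging the null space and destroying uniqueness.

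The main obstacle is thus entirely absorbed into Proposition \ref{rankprop}, which I am treating as given; conditional on it, the argument above is routine. For a self-contained treatment one would have to prove that proposition: exhibit a generic configuration of points on $\mathcal S[\psi]$ whose feature vectors span a $(|\Lambda|-1)$-dimensional space — this is the step where irreducibility of $\mathcal P[\psi]$ and the uniqueness of the minimal polynomial (Proposition \ref{uniqueminimal}) are used — and then observe that every entry of $\Phi_{\Lambda}(\mathbf X)$ is a real-analytic function of the sample coordinates, so that the set of configurations on which the rank drops below $|\Lambda|-1$ is a finite union of zero sets of non-identically-vanishing analytic functions on $\mathcal S[\psi]^N$ and hence has measure zero.
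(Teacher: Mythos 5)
Your proof is correct and follows essentially the same route as the paper, which presents Theorem \ref{irred} as an immediate consequence of Proposition \ref{rankprop}: the rank-$(|\Lambda|-1)$ guarantee gives a one-dimensional null space, which must equal $\operatorname{span}\{\mathbf c\}$ since $\mathbf c$ annihilates every feature vector, so every solution of \eqref{optcoeffs} is a unimodular multiple of $\mathbf c/\|\mathbf c\|_F$ and defines the same zero set. Your closing sketch of how Proposition \ref{rankprop} itself would be established differs in flavor from the paper's actual argument (a contradiction via the measure-zero intersection of surfaces with no common factor, Corollary \ref{measure}), but since you explicitly treat that proposition as given, this does not affect the validity of your derivation.
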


This theorem generalizes the results in \cite{zou2019sampling} to any dimension $n\ge2$ and is illustrated in Fig. \ref{illu2d} and Fig. \ref{illu3d}. 
\begin{figure}[!h]
\centering
\subfloat[Curve]{\includegraphics[width=0.2\textwidth]{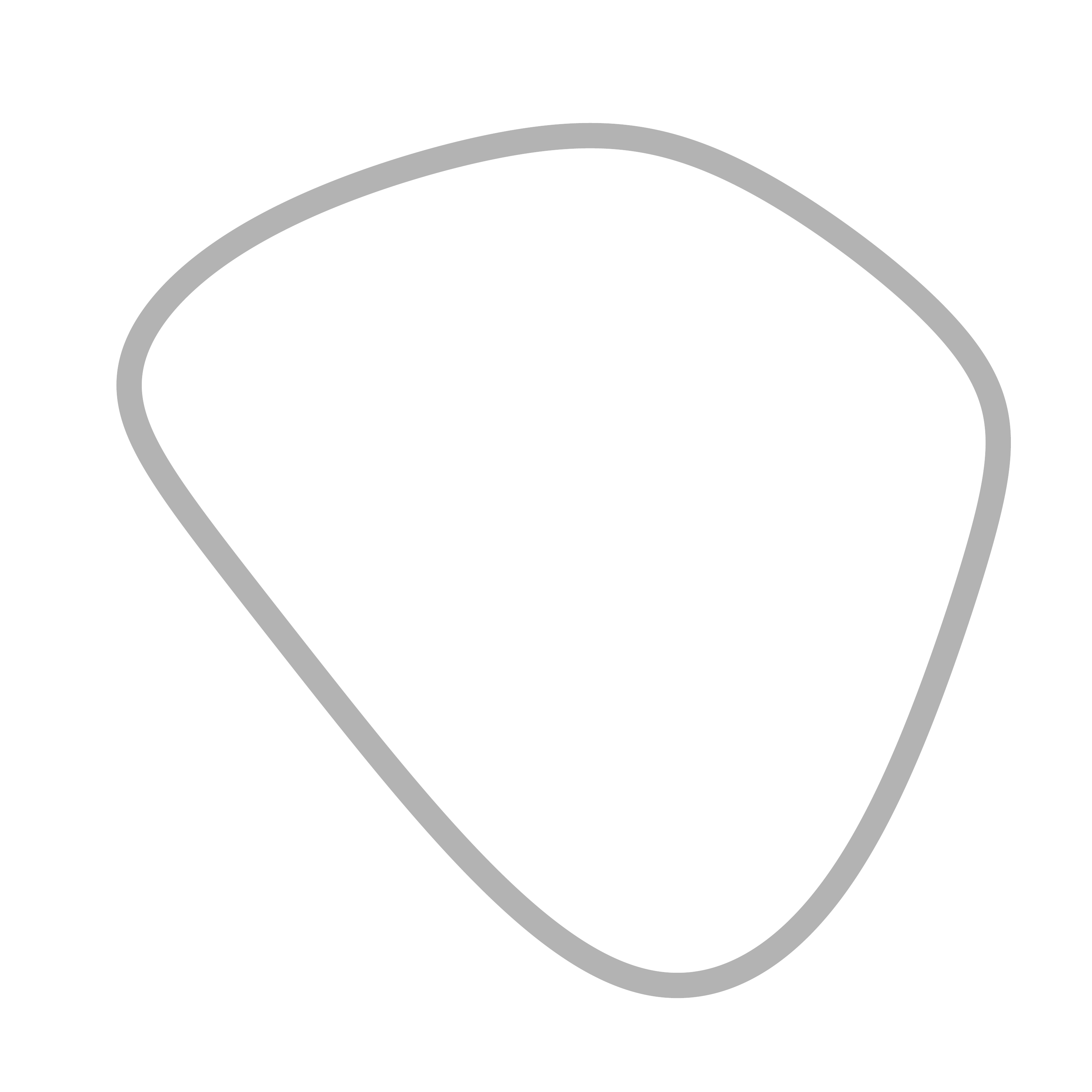}}\hspace{3ex}
\subfloat[Recovery \#1]{\includegraphics[width=0.2\textwidth]{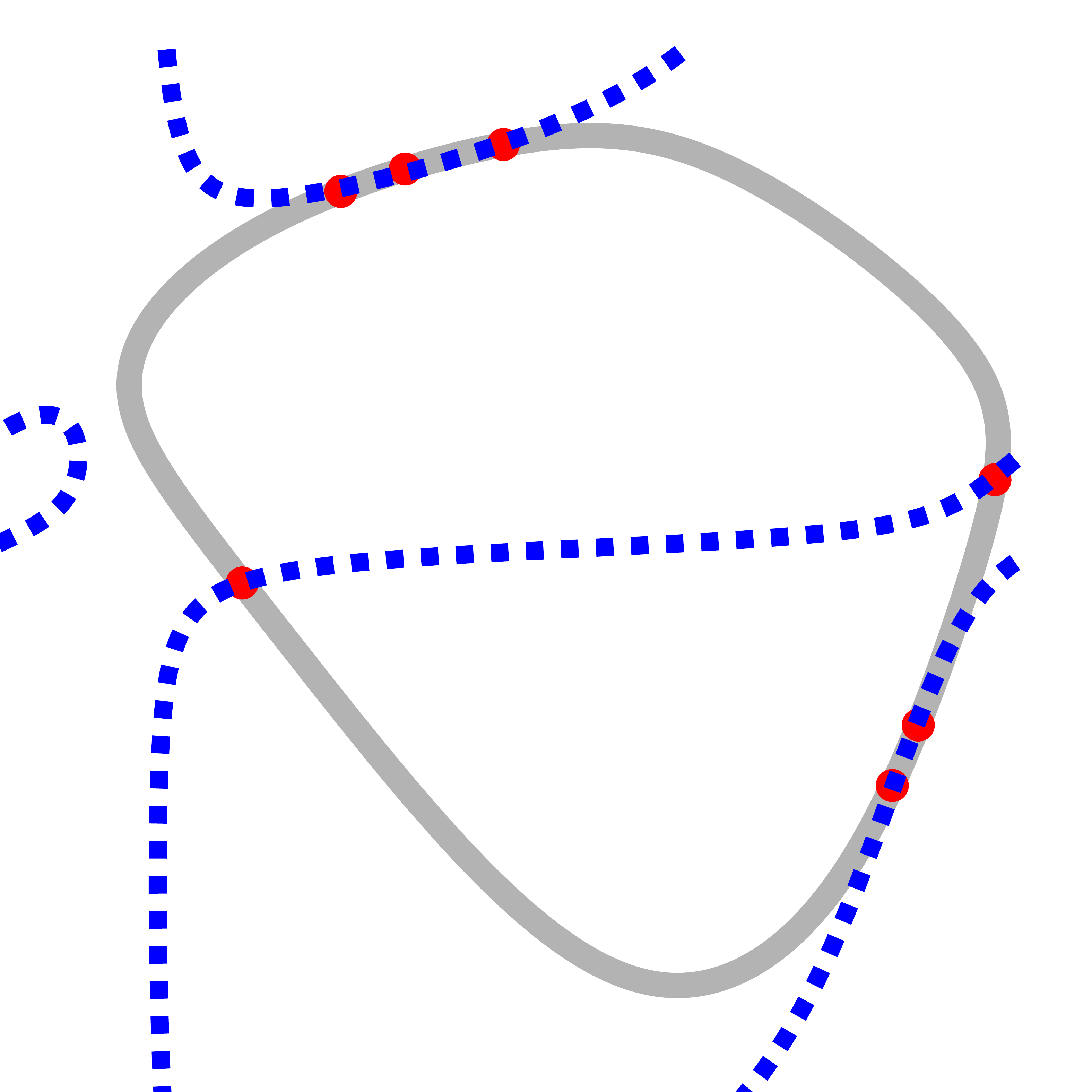}}\hspace{3ex}
\subfloat[Recovery \#2]{\includegraphics[width=0.2\textwidth]{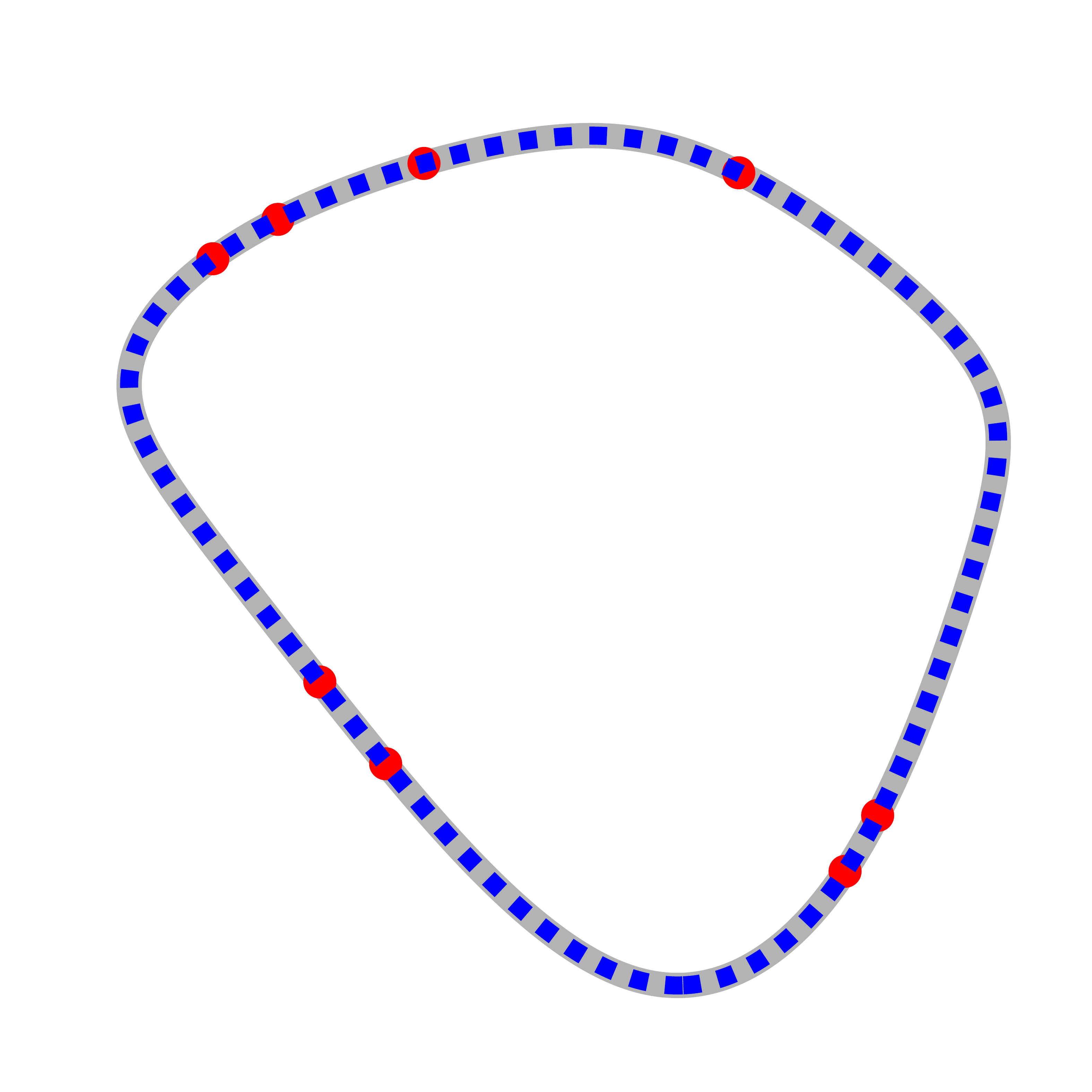}}\\
\subfloat[original $\psi(x,y)$]{\includegraphics[width=0.2\textwidth]{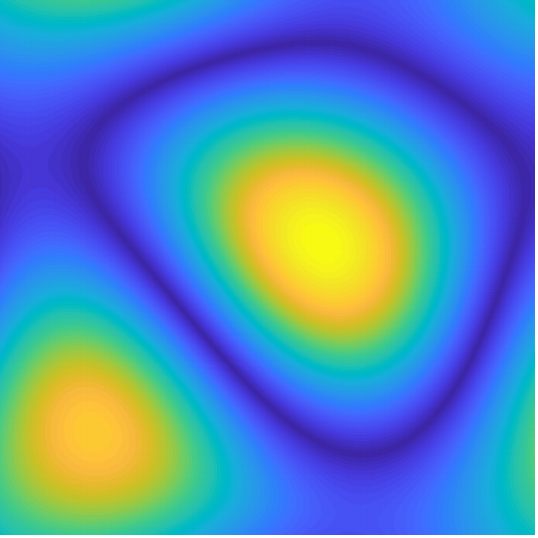}}\hspace{3ex}
\subfloat[$\psi(x,y)$ with 7 samples]{\includegraphics[width=0.2\textwidth]{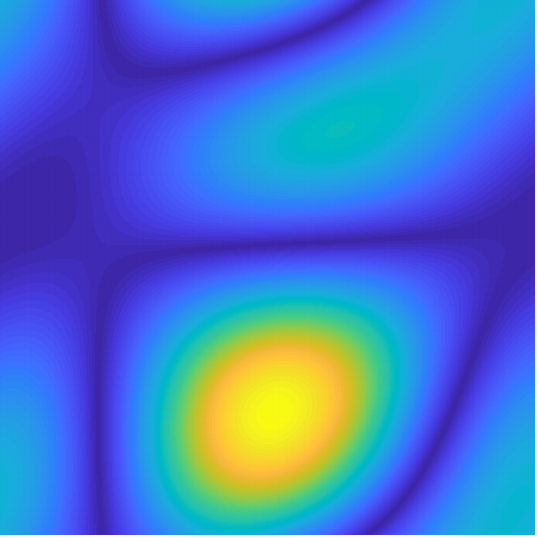}}\hspace{3ex}
\subfloat[$\psi(x,y)$ with 8 samples]{\includegraphics[width=0.2\textwidth]{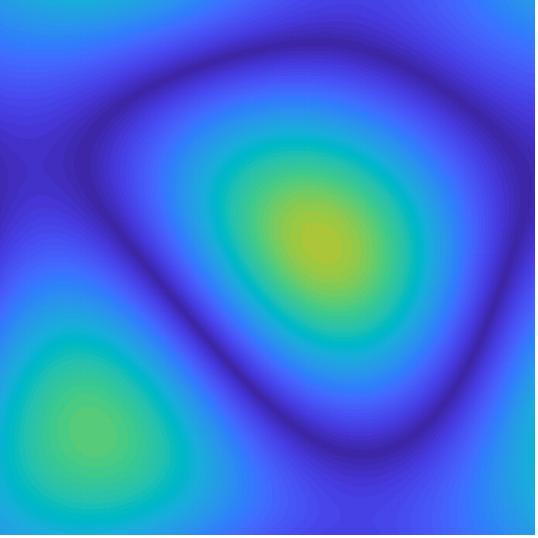}}\\
\caption{Illustration of Theorem \ref{irred} in 2D. The irreducible curve given by (a) is the original curve, which is obtained by the zero level set of a trigonometric polynomial whose bandwidth is $3\times 3$. According to Theorem \ref{irred}, we will need at least 8 samples to recover the curve. In (b), we randomly choose 7 samples (the red dots) on the original curve (the gray curve). The blue dashed curve shows the recovered curve from this 7 samples. Since the sampling condition is not satisfied, the recovery failed. In (c), we randomly choose 8 points (the red dots). From (c), we see that the blue dashed curve (recovered curve) overlaps the gray curve (the original curve), meaning that we recover the curve perfectly. In (d) - (f), we showed the original trigonometric polynomial, the polynomial obtained from 7 samples and the polynomial obtained from 8 samples.}
\label{illu2d}
\end{figure}

\begin{figure}[!h]
\centering
\subfloat[Surface]{\includegraphics[width=0.2\textwidth]{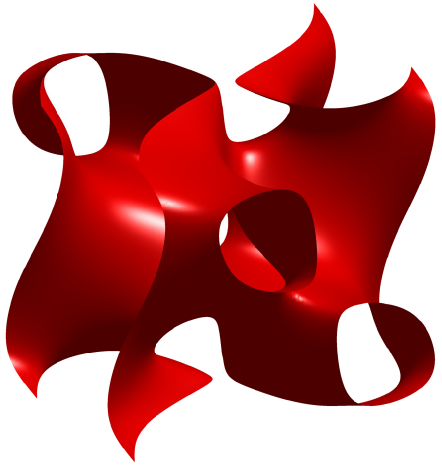}}\hspace{3em}
\subfloat[Recovery \#1]{\includegraphics[width=0.2\textwidth]{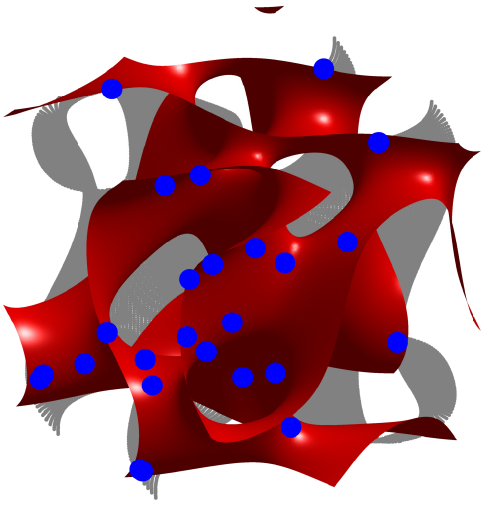}}\hspace{3em}
\subfloat[Recovery \#2]{\includegraphics[width=0.2\textwidth]{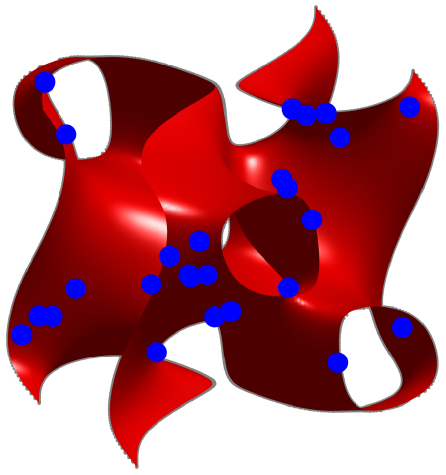}}
\caption{Illustration of Theorem \ref{irred} in 3D. The irreducible surface given by (a) is the original surface, which is given by the zero level set of a trigonometric polynomial whose bandwidth is $3\times 3\times 3$. According to Theorem \ref{irred}, we will need at least 26 samples to recover the surface. In (b), we randomly choose 25 samples (the blue dots) on the original surface (the gray part). The red surface is what we recovered from the 25 samples. Since the sampling condition is not satisfied, the recovery failed. In (c), we randomly choose 26 points (the blue dots). From (c), we see that the red surface (recovered surface) overlaps the gray surface (the original surface), meaning that we recover the surface perfectly.}
\label{illu3d}
\end{figure}

In the theorem, when $n=2$, then $\mathcal{S}$ is a planar curve. In this setting, if the bandwidth of $\psi$ $\Lambda$ is a rectangular region with dimension $k_1\times k_2$. Then by this sampling theorem, we get perfect recovery with probability one, when the number of random samples on the curve exceeds $k_1\cdot k_2-1$. Note that the degrees of freedom in the representation \eqref{tri} is $k_1\cdot k_2-1$, when we constrain $\|\mathbf c\|_F=1$. This implies that if the number of samples exceed the degrees of freedom, we get perfect recovery.  Note that these results are significantly less conservative than the ones in \cite{zou2019sampling}, which required a minimum of $(k_1+k_2)^2$ samples. We note that the results in \cite{zou2019sampling} were the worst case guarantees, and will guarantee the recovery of the curve from any  $(k_1+k_2)^2$ samples. By contrasts, our current results are high probability results; there may exist a set of $N \ge k_1\cdot k_2-1$ samples from which we cannot get unique recovery.

We note that the current work is motivated by the phase transition experiments (Fig. 5) in \cite{zou2019sampling}, which shows that one can recover the curve in most cases when the number of samples exceeds $k_1\cdot k_2-1$ rather than the conservative bound of $(k_1+k_2)^2$. We also note that it is not straightforward to extend the proof in \cite{zou2019sampling} to the cases beyond $n=2$. Specifically, we relied on Bezout's inequality in \cite{zou2019sampling}, which does not generalize easily to high dimensional cases. 


\subsubsection{Case 2: Union of irreducible surfaces with minimal lifting}\label{recoverysection2}
We now consider the union of irreducible surfaces $\mathcal S[\psi]$, where $\psi$ has several irreducible factors $\psi(\mathbf{x}) = \psi_1(\mathbf{x})\cdots\psi_M(\mathbf{x})$. Then we have $\mathcal{S}[\psi]=\bigcup_{i=1}^{M}\mathcal{S}[\psi_i]$. Suppose the bandwidth of $\psi(\mathbf{x})$ is given by $\Lambda$ and the bandwidth of each factor $\psi_i(\mathbf{x})$ is given by $\Lambda_i$. We have the  following result for this setting.

\begin{prop}\label{rankprop2}
Let $\psi(\mathbf{x})$ be a trigonometric polynomial with $M$ irreducible factors, i.e., 
\begin{equation}\label{reduciblepoly}
\psi(\mathbf{x}) = \psi_1(\mathbf{x})\cdots\psi_M(\mathbf{x}).
\end{equation}
Suppose the bandwidth of each factor $\psi_i(\mathbf{x})$ is given by $\Lambda_i$ and the bandwidth of $\psi$ is $\Lambda$. Assume that $\{\mathbf{x}_1,\cdots, \mathbf{x}_N\}$ are $N$ uniformly distributed random samples on $\mathcal S[\psi]$, which are chosen independently. Then with probability 1 that the feature matrix $\Phi_{\Lambda}(\mathbf{X})$ will be of rank $|\Lambda|-1$ for almost all $\psi$ if 
\begin{enumerate}
	\item each irreducible factor is randomly sampled with $N_i\ge|\Lambda_i|-1$ points, and
	\item the total number of samples satisfy $ N \ge |\Lambda|-1$.
\end{enumerate}
\end{prop}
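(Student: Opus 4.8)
The plan is to prove Proposition~\ref{rankprop2} in two stages: first reduce the probability-one claim to the existence of a \emph{single} admissible sampling configuration whose feature matrix attains rank $|\Lambda|-1$, and then construct such a configuration by combining Proposition~\ref{rankprop} applied to the individual irreducible factors with a dimension count. Throughout I take the factorisation \eqref{reduciblepoly} to be square-free, so that $\psi$ is the minimal trigonometric polynomial of $\mathcal S[\psi]$; if a factor were repeated, the bound \eqref{dimnonmin} would already preclude rank $|\Lambda|-1$.

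For the reduction, observe that every $(|\Lambda|-1)\times(|\Lambda|-1)$ minor of $\Phi_\Lambda(\mathbf X)$ is, under the substitution \eqref{map}, the restriction of a fixed Laurent polynomial in the torus coordinates to the product of the varieties $V(\mathcal P[\psi_{j(k)}])$ on which the (labelled) samples range. Since the torus zero set of each irreducible $\psi_i$ is Zariski dense in the irreducible variety $V(\mathcal P[\psi_i])$, and a product of irreducible varieties is irreducible, each such minor is either identically zero on this product or nonvanishing on a dense open subset, and in the latter case it is nonvanishing at a.e.\ configuration of torus points. Hence $\mathrm{rank}(\Phi_\Lambda(\mathbf X))=|\Lambda|-1$ holds with probability $1$ as soon as it holds for one configuration with $N_i\ge|\Lambda_i|-1$ points on $\mathcal S[\psi_i]$ and $N=\sum_i N_i\ge|\Lambda|-1$.

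The construction rests on two facts. Let $\mathcal V_i\subseteq\mathbb C^{|\Lambda|}$ be the span of $\{\Phi_\Lambda(\mathbf x):\mathbf x\in\mathcal S[\psi_i]\}$. A vector $\eta$ lies in $\mathcal V_i^{\perp}$ iff the support-$\Lambda$ trigonometric polynomial with coefficients $\eta$ vanishes on $\mathcal S[\psi_i]$, so $\bigcap_i\mathcal V_i^{\perp}$ is the space of support-$\Lambda$ polynomials vanishing on $\mathcal S[\psi]$; by minimality of $\psi$ and Proposition~\ref{uniqueminimal} this space equals $\mathrm{span}(\mathbf c)$, whence $\mathcal V_1+\cdots+\mathcal V_M=\mathbf c^{\perp}$, of dimension $|\Lambda|-1$. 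Also, the coordinate projection $\mathbb C^{|\Lambda|}\to\mathbb C^{|\Lambda_i|}$ deleting the indices of $\Lambda\setminus\Lambda_i$ carries $\mathcal V_i$ onto the minimal feature space of $\mathcal S[\psi_i]$, whose dimension is $|\Lambda_i|-1$ by Proposition~\ref{rankprop} and \eqref{dimension}, so $\dim\mathcal V_i\ge|\Lambda_i|-1$. Now I add samples surface by surface, keeping the running column span $W$: because $\mathcal V_i$ is spanned by the feature vectors of $\mathcal S[\psi_i]$, as long as $\mathcal V_i\not\subseteq W$ we may pick a point of $\mathcal S[\psi_i]$ whose feature vector is outside $W$, raising $\dim W$ by one; once $\mathcal V_i\subseteq W$ the remaining budget on $\mathcal S[\psi_i]$ is spent freely. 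Thus $\mathcal S[\psi_i]$ contributes $\min\!\bigl(N_i,\ \dim\mathcal V_i-\dim(\mathcal V_i\cap W)\bigr)$ fresh dimensions, and $W$ reaches $\mathbf c^{\perp}$ exactly when, for some ordering, these contributions sum to $|\Lambda|-1$; since the unconstrained greedy contributions telescope to $\dim(\mathcal V_1+\cdots+\mathcal V_M)=|\Lambda|-1$, it suffices that no budget $N_i$ falls short of the number of fresh directions that surface must supply.

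I expect this last point to be the crux: the subspaces $\mathcal V_i$ need not be in general position inside $\mathbf c^{\perp}$, so one must show that an ordering of the factors exists for which every budget $N_i$ dominates its fresh-direction demand --- a Hall-type / matroid-union statement. The natural route is to process the factors in increasing order of $|\Lambda_i|$, identify the fresh demand of $\mathcal S[\psi_i]$ with the codimension of $\mathcal V_\Lambda\bigl(\mathcal S[\prod_{j\neq i}\psi_j]\bigr)$ in $\mathbf c^{\perp}$, and bound this by $|\Lambda_i|-1$ using the minimality of $\psi$, so that condition~1 covers the per-factor demands and condition~2 the overall count. A secondary technical point is the Zariski-density transfer used in the reduction, which is where the irreducibility of the factors and the bijectivity of the map $\nu$ in \eqref{map} are used.
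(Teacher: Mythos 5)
Your overall strategy is genuinely different from the paper's, and its first half is sound: reducing the probability-one statement to the existence of a single admissible configuration of full rank (each $(|\Lambda|-1)\times(|\Lambda|-1)$ minor becomes, after the substitution \eqref{map}, a polynomial on an irreducible product of varieties, hence either identically zero or nonvanishing almost everywhere) is a legitimate, and arguably cleaner, way to handle the fact that a putative extra null-space vector depends on the random samples. The identities $\sum_i\mathcal V_i=\mathbf c^{\perp}$ (via Proposition \ref{uniqueminimal}) and $\dim\mathcal V_i\ge|\Lambda_i|-1$ (via coordinate projection and \eqref{dimension}) are also correct.

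The gap is the step you yourself flag as the crux, and it is a real one: you never establish that the budgets $N_i\ge|\Lambda_i|-1$ and $N\ge|\Lambda|-1$ suffice to meet the ``fresh-direction demands'' of the subspaces $\mathcal V_i$, which are not in general position. Concretely, the demand of $\mathcal S[\psi_i]$ when it is processed last equals $\mathrm{codim}_{\mathbf c^{\perp}}\bigl(\sum_{j\ne i}\mathcal V_j\bigr)=|\Lambda\ominus\Lambda_{(i)}|-1$, where $\Lambda_{(i)}=\mathrm{supp}(\prod_{j\ne i}\psi_j)$; the inequality $|\Lambda\ominus\Lambda_{(i)}|\le|\Lambda_i|$ that you would need is asserted but not proved, and it does not follow formally from $\Lambda\subseteq\Lambda_{(i)}+\Lambda_i$ (for box supports it holds with equality, but you give no argument in general). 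Moreover, a greedy or matroid-union argument requires the analogous inequality for every subset of factors, not only singletons. Until that combinatorial lemma is supplied, the witness configuration is not constructed and the proof is incomplete. For contrast, the paper avoids this machinery entirely by arguing by contradiction: any null-space vector $\mathbf d\leftrightarrow\eta$ independent of $\mathbf c$ corresponds to a polynomial lacking at least one irreducible factor, say with $\psi_M$ replaced by some $\mu$ coprime to it; the $N_M\ge|\Lambda_M|-1$ independent samples on $\mathcal S[\psi_M]$ would then all have to lie in $\mathcal S[\psi_M]\cap\mathcal S[\mu]$, which by Corollary \ref{measure} has measure zero in $\mathcal S[\psi_M]$, an event of probability zero. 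You could graft that intersection-theoretic argument onto your minor-based reduction to close the gap, but as written the proposal does not prove the proposition.
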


Similar to previous propositions, the above results are valid for almost all $\psi$, which implies that the set of $\psi$ for which the above results do not hold is a set of measure zero \cite{federer2014geometric}.
The proof of this result can be seen in Appendix \ref{proofrankprop2}. Based on this proposition, we have the following sampling conditions.

\begin{thm} [Union of irreducible surfaces of any dimension]\label{reducible}
Let $\psi(\mathbf{x})$ be a trigonometric polynomial with $M$ irreducible factors as in \eqref{reduciblepoly}. If the samples $\mathbf x_1,.,\mathbf x_N$ satisfy the conditions in Proposition \ref{rankprop2}, then the surface can be uniquely recovered by the solution of \eqref{optcoeffs} for almost all $\psi$. 
\end{thm}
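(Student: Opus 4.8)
The plan is to deduce Theorem~\ref{reducible} from Proposition~\ref{rankprop2} by the same short argument that takes Proposition~\ref{rankprop} to Theorem~\ref{irred}; throughout we use the minimal lifting $\Gamma=\Lambda$ in \eqref{optcoeffs}. Assume the sampling set $\mathbf X=\{\mathbf x_1,\cdots,\mathbf x_N\}$ satisfies the two conditions of Proposition~\ref{rankprop2}. By that proposition, with probability $1$ the feature matrix $\Phi_\Lambda(\mathbf X)$ has rank $|\Lambda|-1$, so its null space is one-dimensional. Since every sample lies on $\mathcal S[\psi]$, the annihilation relation \eqref{nullspacecond} gives $\mathbf c^T\Phi_\Lambda(\mathbf X)=0$, i.e.\ the coefficient vector $\mathbf c$ of $\psi$ lies in that null space; because $\mathbf c\neq 0$, the null space equals $\mathrm{span}\{\mathbf c\}$. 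Hence every unit-norm solution of \eqref{optcoeffs} equals $\mathbf c/\|\mathbf c\|_F$ up to a scalar of modulus one.

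The second step translates this algebraic uniqueness into recovery of the surface. A solution $\mathbf c'=\alpha\,\mathbf c/\|\mathbf c\|_F$ of \eqref{optcoeffs} is the coefficient vector of the trigonometric polynomial $\alpha\,\psi/\|\mathbf c\|_F$, whose zero level set coincides with $\mathcal S[\psi]=\bigcup_{i=1}^{M}\mathcal S[\psi_i]$, since a nonzero scaling does not change the zero set. Thus the solution of \eqref{optcoeffs} recovers $\mathcal S[\psi]$ with probability $1$. It is worth recording that this recovered object is genuinely the intended surface in the sense of Proposition~\ref{uniqueminimal}: since each $\psi_i$ is irreducible, it is already the minimal polynomial of $\mathcal S[\psi_i]$, so the product $\psi=\prod_i\psi_i$ (with its factors taken distinct) has no nowhere-vanishing factor that could be stripped off, whence $supp(\psi)=\Lambda$ is indeed the minimal support and $\Phi_\Lambda$ the minimal lifting. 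If one additionally wishes to identify the individual components $\mathcal S[\psi_i]$, it suffices to factor the recovered polynomial, which is a deterministic post-processing step that does not affect the probability statement.

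All the substance therefore sits in Proposition~\ref{rankprop2}, which is assumed here; once it is granted, the theorem is immediate by the chain above, the only residual bookkeeping being the minimality remark just made. For completeness I would indicate why both sampling conditions enter the proposition: condition~(1) applies the irreducible-case reasoning of Proposition~\ref{rankprop} to each factor, so that the block of samples drawn from $\mathcal S[\psi_i]$ captures the geometry of that component, while condition~(2) is what forces the per-component constraints, once intersected in the ambient $|\Lambda|$-dimensional feature space, to cut the null space down to the single line $\mathrm{span}\{\mathbf c\}$ rather than something larger. I expect the main delicacy there --- and hence the only real obstacle in the whole argument --- to be ruling out, with probability $1$, accidental algebraic dependencies among feature vectors originating from different irreducible components.
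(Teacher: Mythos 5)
Your proposal is correct and matches the paper's (implicit) treatment: the paper offers no separate proof of Theorem~\ref{reducible}, presenting it as an immediate consequence of Proposition~\ref{rankprop2} exactly as you argue — rank $|\Lambda|-1$ forces a one-dimensional left null space, which must be $\mathrm{span}\{\mathbf c\}$, and scaling does not change the zero set. Your added remarks on minimality and on where the real work lies (in Proposition~\ref{rankprop2} itself) are accurate asides that go slightly beyond what the paper states.
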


Unlike the sampling conditions in Theorem \ref{irred} that does not impose any constraints on the sampling, the above result requires each component to be sampled with a minimum rate specified by the degrees of freedom of that component. We illustrate the above result in Fig. \ref{illuprop2} in 2D ($n=2$), where $\mathcal S$ is the union of two irreducible curves with bandwidth of $3\times 3$, respectively. The above results show that if each of these simply connected curves are sampled with at least eight points and if the total number of samples is no less than 24, we can uniquely identify the union of curves. The results show that if any of the above conditions are violated, the recovery fails; by contrast, when the number of randomly chosen points satisfy the conditions, we obtain perfect recovery. 

\begin{figure}[!h]
\centering
\subfloat[Curve]{\includegraphics[width=0.18\textwidth]{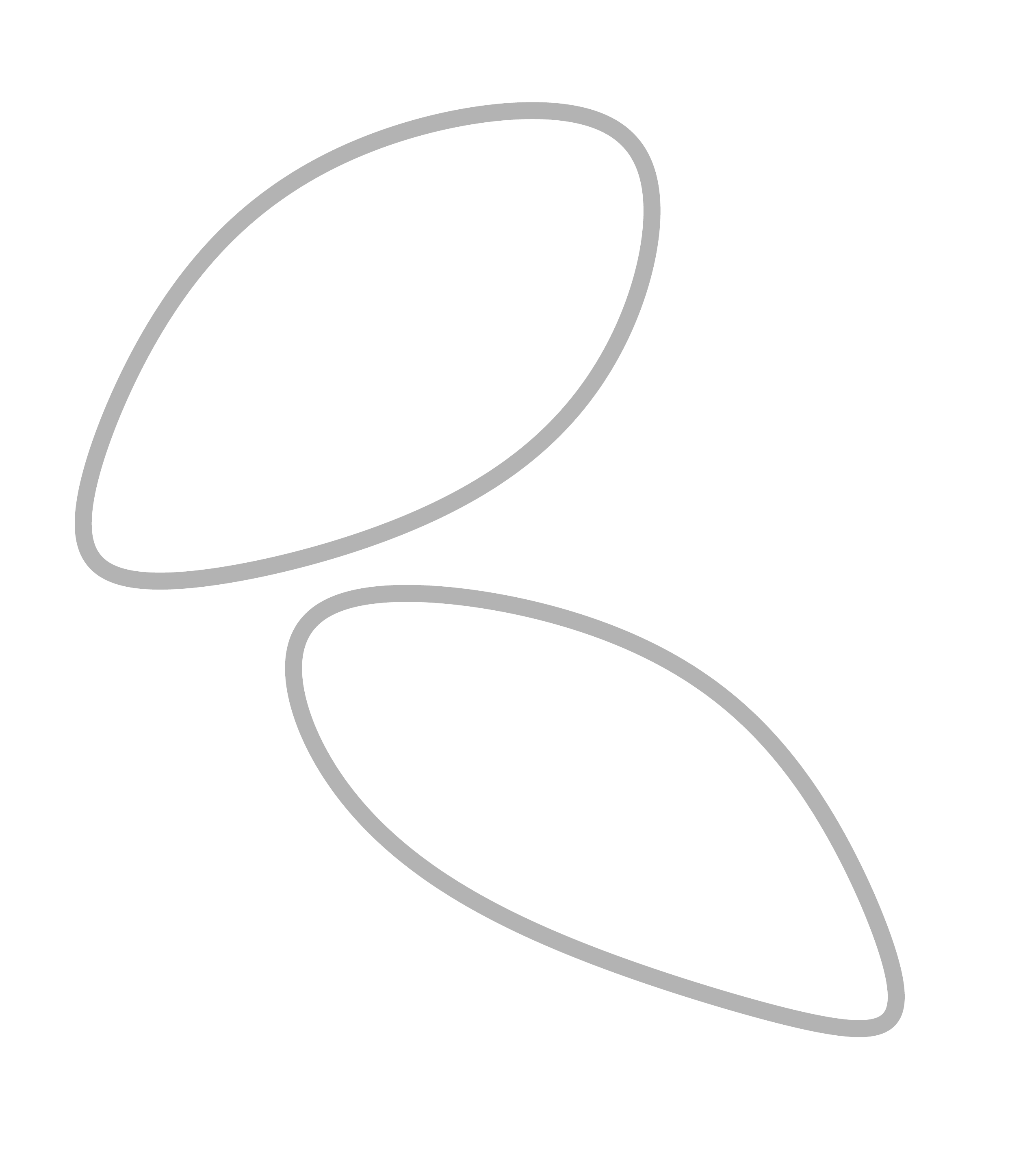}}\hspace{3em}
\subfloat[$7+17$]{\includegraphics[width=0.18\textwidth]{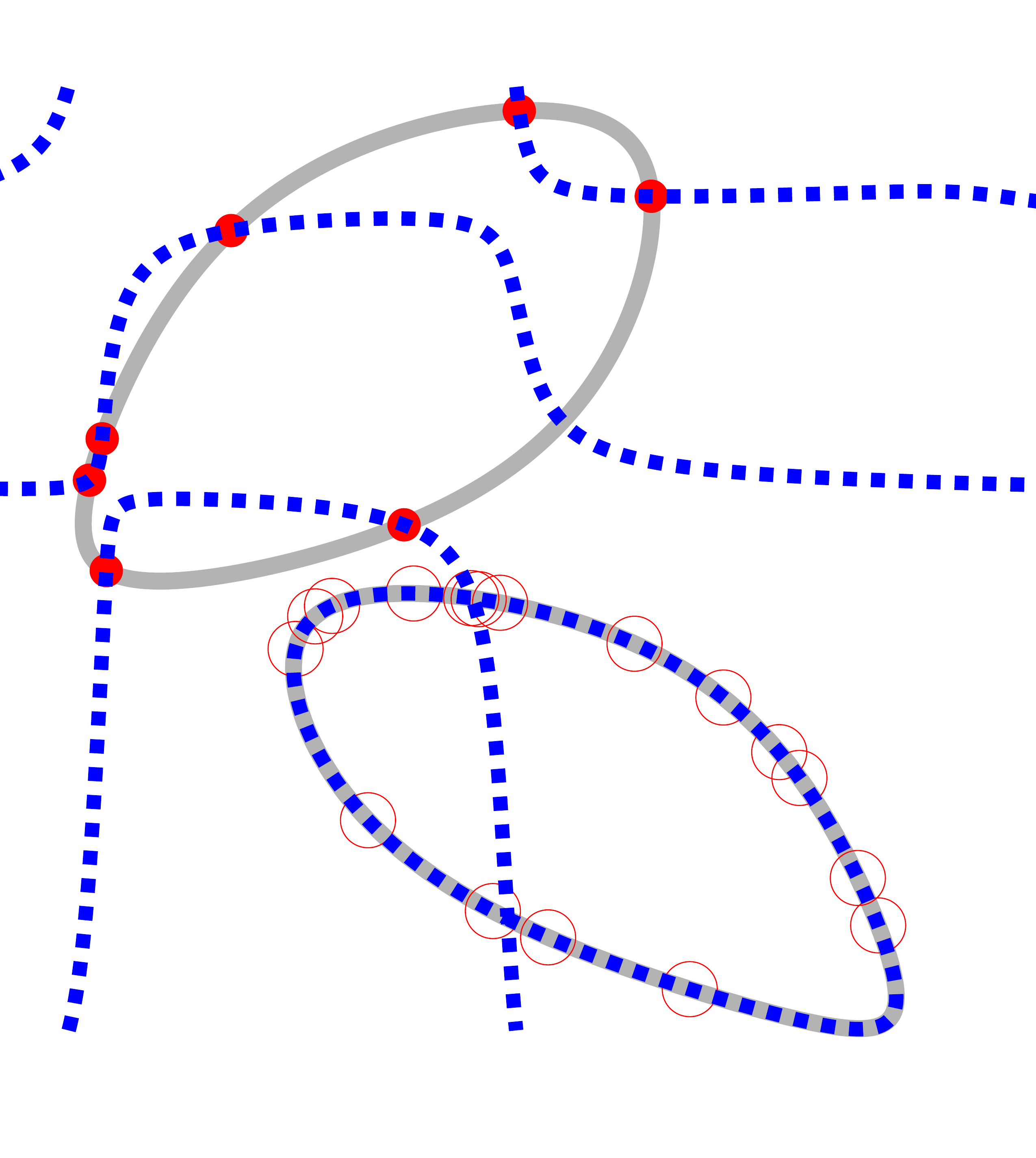}}\hspace{3em}
\subfloat[$8+16$]{\includegraphics[width=0.18\textwidth]{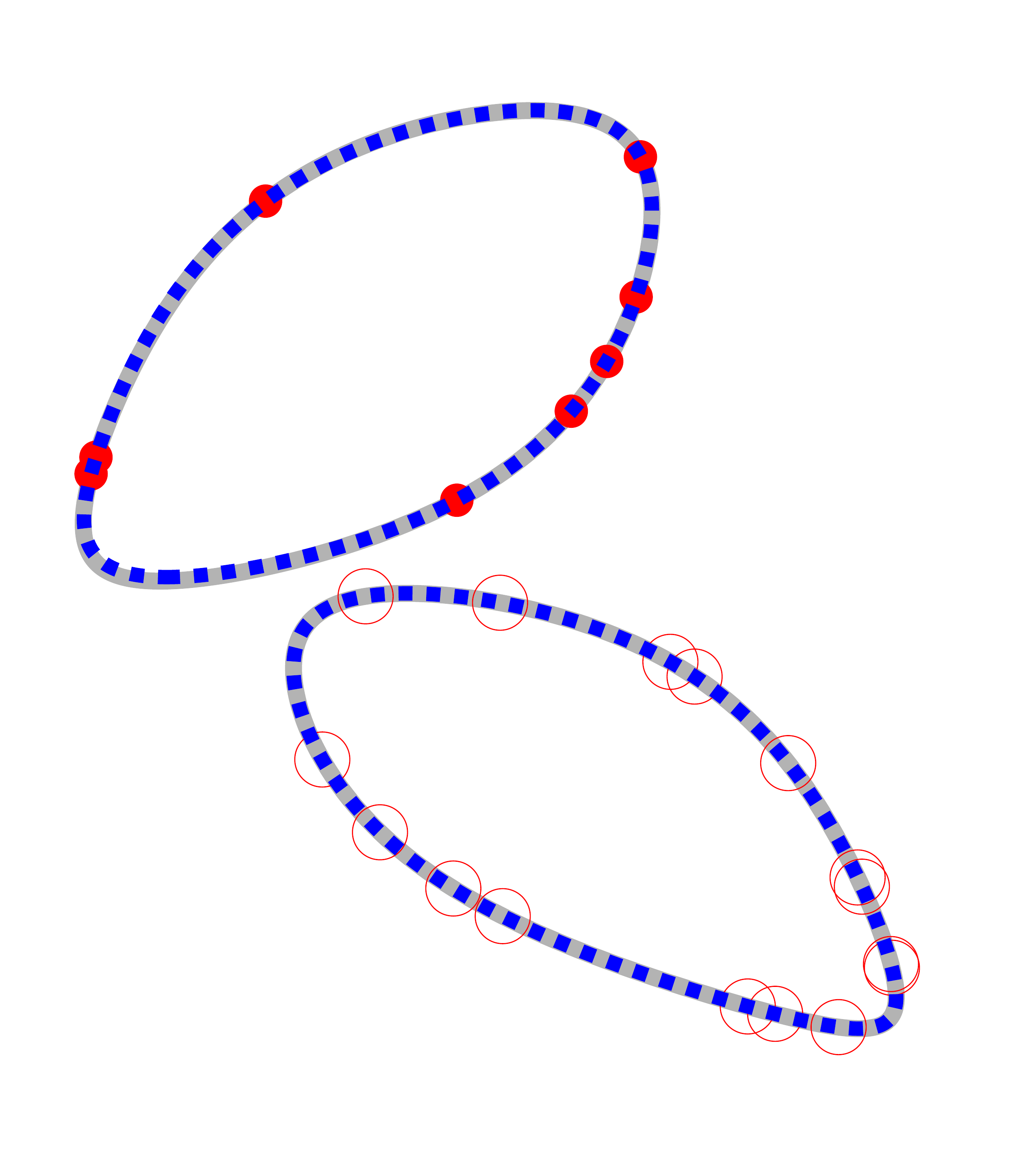}}\\
\subfloat[$17+7$]{\includegraphics[width=0.18\textwidth]{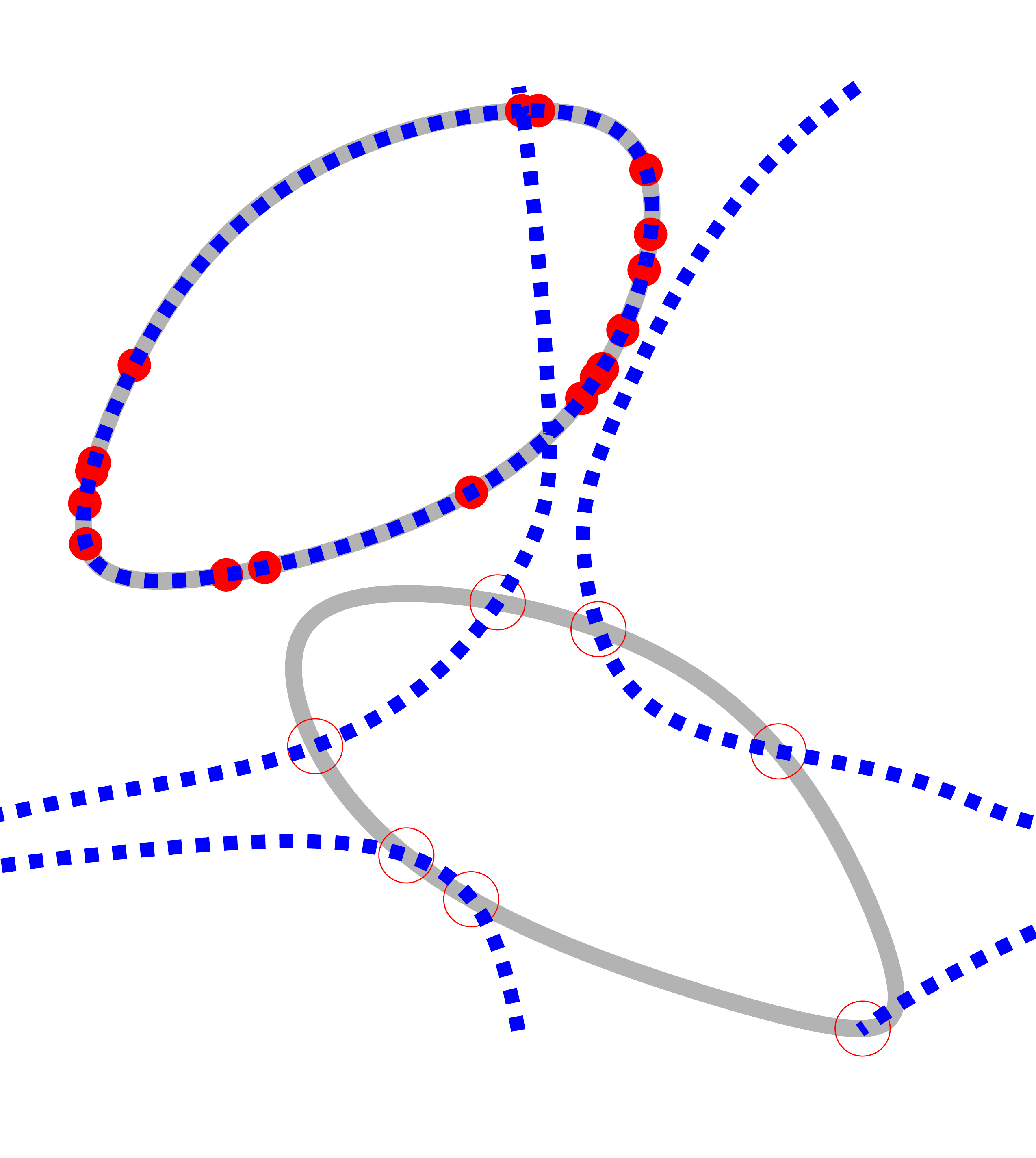}}\hspace{3em}
\subfloat[$16+8$]{\includegraphics[width=0.18\textwidth]{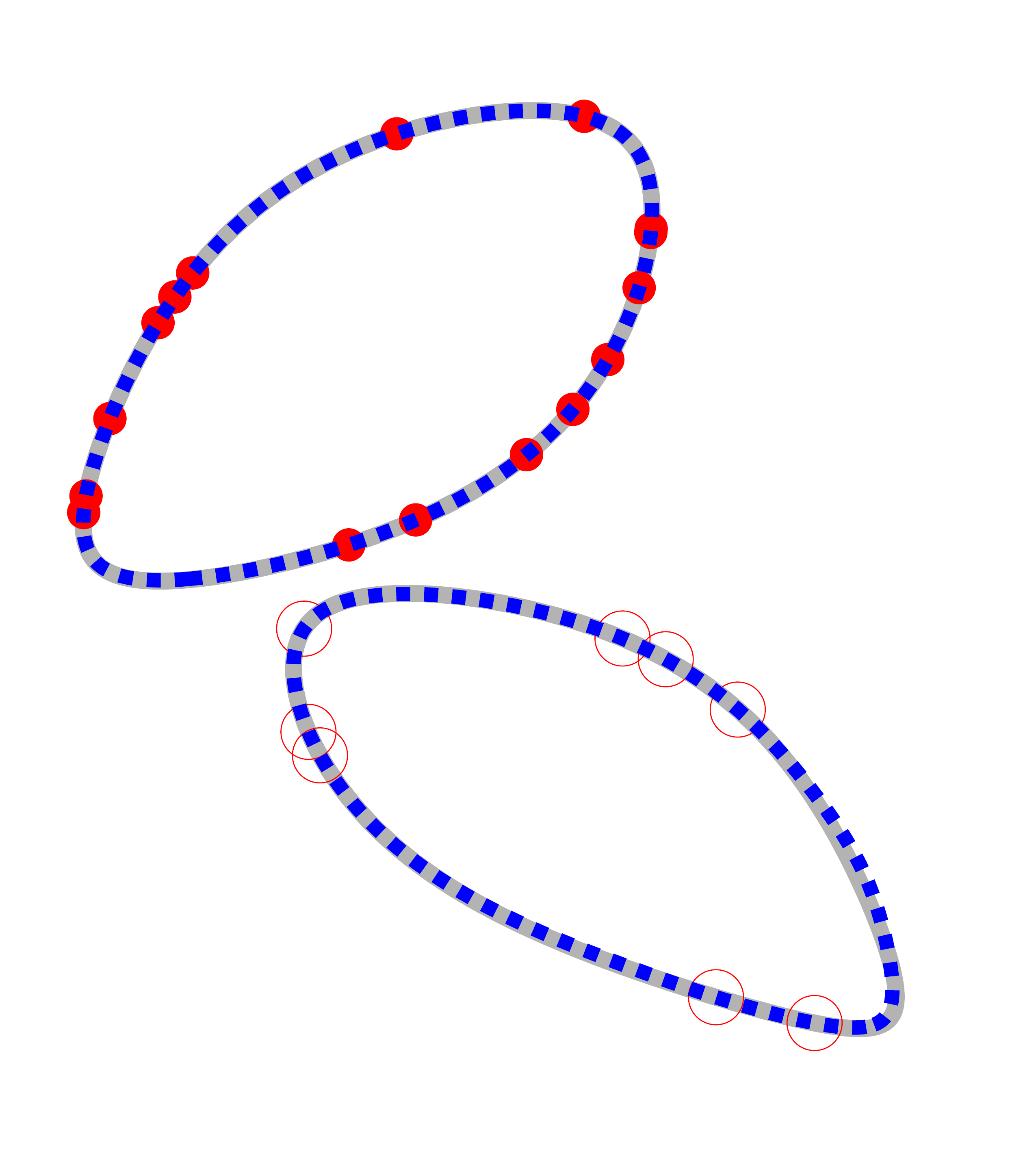}}\hspace{3em}
\subfloat[$8+8$]{\includegraphics[width=0.18\textwidth]{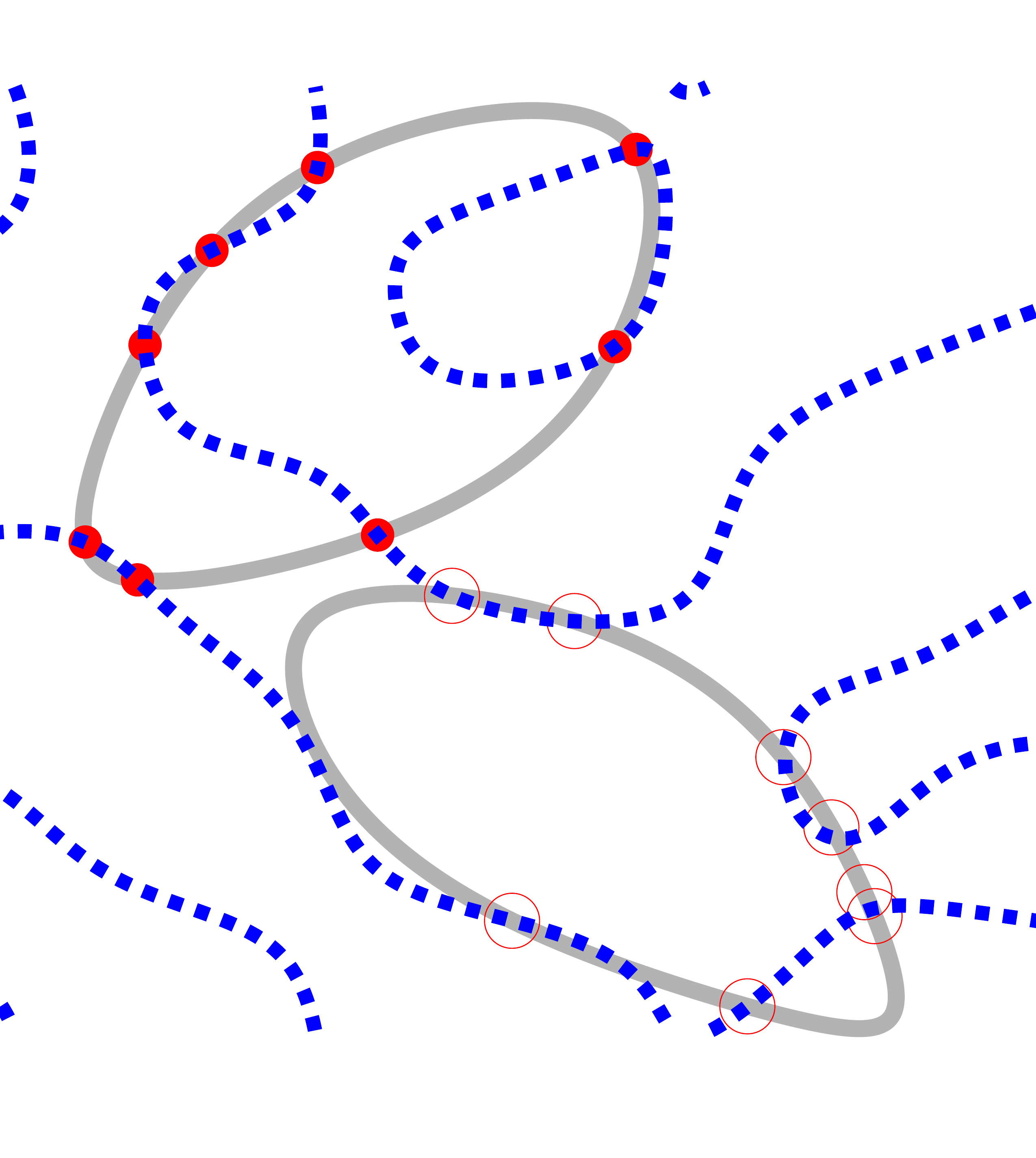}}
\caption{Illustration of Theorem \ref{reducible}. The original curve (a) is given by the zero set of a reducible trigonometric polynomial with bandwidth $5\times 5$, which is the product of two trigonometric polynomials with bandwidth $3\times 3$. According to the sampling theorem, we totally need at least 24 samples and each of the components needs to be sampled for at least 8 samples. We first choose 7 samples (red dots) on the first component and 17 samples (red circles) on the second one. The gray curve in (b) is the original curve and the blue dashed curve is what we recovered from the $7+17=24$ samples. Since the sampling condition is not satisfied, the recovery failed. In (c), we choose 8 samples (red dots) on the first component and 16 samples (red circles) on the second one. From (c), we see that the gray curve (the original curve) overlaps the blue dashed curve (recovered curve), meaning that we recovered the curve successfully. In (d), we choose 17 samples on the first component and 7 samples on the other one. From (d), we see that the recovery is not successful. In (e), we have 16 samples on the first component and 8 samples on the second one. The original curve overlaps the recovered one. So we recovered it perfectly. Lastly, we choose 8 samples on each of the component and we failed to recover the curve as shown in (f). Note that the recovered curves pass through the samples in all cases.}
\label{illuprop2}
\end{figure}

\subsubsection{Case 3: Non-minimal lifting}\label{nonminimal}
In Section \ref{recoverysection} and \ref{recoverysection2}, we introduced theoretical guarantees for the perfect recovery of the surface in any dimensions. The sampling theorems introduced in Section \ref{recoverysection} and \ref{recoverysection2} assume that we know exactly the bandwidth of the surface or the union of surfaces. However, in practice, the true bandwidth of the surface is usually unknown. We now consider the recovery of the surface, when the bandwidth is over-estimated, or equivalently the lifting is performed assuming $\Gamma \supset \Lambda$. As discussed in Section \ref{nonminlifting}, the dimension of $\mathcal V_{\Gamma}$ is upper bounded by $|\Gamma|-|\Gamma\ominus\Lambda|$, which implies that 
\[\rank(\Phi_\Gamma(\mathbf{X}))\le|\Gamma|-|\Gamma\ominus\Lambda|,\]
where $\Gamma\ominus\Lambda$ represents the number of valid shifts of $\Lambda$ within $\Gamma$ as discussed in Section \ref{nonminlifting}. 

The following two propositions show when the inequality in the rank relation above can be an equality and hence we can recover the surface.

\begin{prop}[Irreducible surface with non-minimal lifting]\label{nonminimal1}
Let $\{\mathbf{x}_1, \cdots, \mathbf{x}_N\}$ be $N$ random samples on the surface $\mathcal{S}[\psi]$, chosen independently. The trigonometric polynomial $\psi(\mathbf{x})$ is irreducible whose true bandwidth is $\Lambda$. Suppose the lifting mapping is performed using bandwidth $\Gamma\supset \Lambda$. Then $\rank(\Phi_\Gamma(\mathbf{X})) = |\Gamma|-|\Gamma\ominus\Lambda|$ for almost all $\psi$, if
\[N\ge |\Gamma|-|\Gamma\ominus\Lambda|.\]
\end{prop}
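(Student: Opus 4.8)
The bound $\rank(\Phi_\Gamma(\mathbf X)) \le |\Gamma| - |\Gamma\ominus\Lambda|$ from \eqref{dimnonmin} is already available, so the task is to supply a matching lower bound valid for generic samples. The plan is to split this into two steps. \emph{Step 1:} show that the feature space of the entire surface already attains the bound, i.e. $\dim\big(\mathcal V_\Gamma(\mathcal S)\big) = r$ with $r := |\Gamma| - |\Gamma\ominus\Lambda|$; equivalently, the space of coefficient vectors $\mathbf b \in \mathbb C^{|\Gamma|}$ whose trigonometric polynomial vanishes on all of $\mathcal S$ has dimension exactly $|\Gamma\ominus\Lambda|$, and is therefore spanned by the monomial shifts $\tilde{\mathbf d}_{\mathbf k_0}$, $\mathbf k_0 \in \Gamma\ominus\Lambda$, exhibited in Section \ref{nonminlifting}. \emph{Step 2:} show that $N \ge r$ independent random samples on $\mathcal S$ produce $r$ linearly independent feature vectors with probability $1$.

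For Step 1 I would argue exactly as in the proof of Proposition \ref{rankprop}, using irreducibility. A vector $\mathbf b$ annihilates $\Phi_\Gamma(\mathbf x)$ for every $\mathbf x \in \mathcal S$ iff the Laurent polynomial $q$ associated via \eqref{map} to $\sum_{\mathbf k\in\Gamma} b_{\mathbf k}\exp(j2\pi\mathbf k^T\mathbf x)$ vanishes on the zero set of $\mathcal P[\psi]$ on $\mathbb{CT}^n$. Since $\psi$ is irreducible and $\mathcal S \ne \emptyset$, $\mathcal P[\psi]$ generates a prime, radical ideal that stays prime after inverting the monomials, so by the Nullstellensatz (and Zariski density of $\mathbb{CT}^n$ in the irreducible variety $V(\mathcal P[\psi])$) $q$ is divisible by $\mathcal P[\psi]$, say $q = \mathcal P[\psi]\,g$. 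Because the Newton polytope of a product is the Minkowski sum of the factors' Newton polytopes, a vertex monomial of $g$ paired with a vertex monomial of $\mathcal P[\psi]$ yields a monomial of $q$ that cannot be cancelled; pushing this through (together with the structure of the nested index sets $\Lambda \subset \Gamma$) shows that $\operatorname{supp}(q)\subseteq\Gamma$ is equivalent to $\operatorname{supp}(g)\subseteq\Gamma\ominus\Lambda$, and conversely every $g$ supported on $\Gamma\ominus\Lambda$ gives an admissible $q$. Hence this annihilator space has dimension $|\Gamma\ominus\Lambda|$, which combined with \eqref{dimnonmin} gives $\dim(\mathcal V_\Gamma(\mathcal S)) = r$.

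For Step 2 I would build the sampling set greedily, again as in Proposition \ref{rankprop}. Suppose $\mathbf x_1,\dots,\mathbf x_k$ have feature vectors spanning a $k$-dimensional subspace $U_k$ of $\operatorname{span}\mathcal V_\Gamma(\mathcal S)$ with $k < r$. Pick a nonzero $\mathbf b$ in $\operatorname{span}\mathcal V_\Gamma(\mathcal S)$ orthogonal (Hermitian inner product) to $U_k$. Then $\mathbf x \mapsto \langle \mathbf b, \Phi_\Gamma(\mathbf x)\rangle$ cannot vanish identically on $\mathcal S$ — otherwise $\mathbf b$ would be orthogonal to $\mathcal V_\Gamma(\mathcal S)$ while lying in its span, forcing $\mathbf b = 0$ — and it is real-analytic in $\mathbf x$, so its zero set in $\mathcal S$, which contains $\{\mathbf x\in\mathcal S : \Phi_\Gamma(\mathbf x)\in U_k\}$, has measure zero for the sampling distribution. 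An independent random sample $\mathbf x_{k+1}$ therefore lands outside $U_k$ almost surely and strictly enlarges the span. Iterating $r$ times gives $\rank(\Phi_\Gamma(\mathbf X)) = r$ with probability $1$ whenever $N \ge r$; further samples cannot decrease the rank, and \eqref{dimnonmin} keeps it from exceeding $r$.

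I expect Step 1 to be the main obstacle: transferring ``vanishes on the real surface $\mathcal S$'' to ``divisible by $\mathcal P[\psi]$'' (Zariski density, primeness under localization at the monomials) and then the Newton-polytope bookkeeping that identifies the divisor space precisely with the shifts indexed by $\Gamma\ominus\Lambda$ — this is where irreducibility does the real work, and where one has to rule out degenerate index-set configurations. Step 2 is the same genericity mechanism as in Proposition \ref{rankprop} and is routine once one invokes the standing assumption that the samples are drawn from a distribution absolutely continuous with respect to surface measure on $\mathcal S$.
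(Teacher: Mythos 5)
Your proof is correct and ultimately rests on the same mechanism as the paper's: the fact that the zero sets of two trigonometric polynomials with no common factor intersect in a set of surface measure zero (Theorem~\ref{intersect} and Corollary~\ref{measure}), so that random samples almost surely defeat any annihilator beyond the shifts of $\mathbf c$. The organization differs, though, and the comparison is instructive. The paper argues by contradiction directly on the sample feature matrix: an extra null-space vector $\mathbf d\leftrightarrow\eta$, linearly independent of the translates of $\mathbf c$, is asserted to yield an $\eta$ sharing no factor with $\psi$, whence all samples would lie in $\mathcal S[\psi]\cap\mathcal S[\eta]$, an event of probability zero. The step ``independent of the translates $\Rightarrow$ $\psi\nmid\eta$'' is exactly the support bookkeeping you isolate in your Step 1: one must know that $\eta=\psi\gamma$ with $\mathrm{supp}(\eta)\subseteq\Gamma$ forces $\mathrm{supp}(\gamma)\subseteq\Gamma\ominus\Lambda$, which your Newton-polytope/Minkowski-sum argument supplies (and which is clean for the rectangular supports used throughout the paper); so your Step 1 proves a statement the paper uses implicitly, and your identification of it as the place where irreducibility and the index-set geometry do the real work is accurate. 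Your Step 2 is the sequential version of the paper's probability argument, and it is the cleaner way to handle the fact that the putative extra annihilator depends on the realized samples. One caveat on Step 2: justifying ``the zero set of $\mathbf x\mapsto\langle\mathbf b,\Phi_\Gamma(\mathbf x)\rangle$ in $\mathcal S$ has measure zero'' purely by real-analyticity is insufficient if $\mathcal S$ is disconnected, since a real-analytic function can vanish identically on one component while being nonzero on another; you should instead route this through irreducibility of $\psi$ and Corollary~\ref{measure} exactly as the paper does --- if $\eta_{\mathbf b}$ does not vanish identically on $\mathcal S$ then, $\psi$ being irreducible, $\eta_{\mathbf b}$ and $\psi$ share no common factor, and the intersection has measure zero.
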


The proof of this proposition can be found in Appendix \ref{proofrankprop3}.

\begin{prop}[Union of irreducible surfaces with non-minimal lifting]\label{nonminimal2}
Let $\psi(\mathbf{x})$ be a randomly chosen trigonometric polynomial with $M$ irreducible factors, i.e., 
\begin{equation}\label{reduciblepoly}
\psi(\mathbf{x}) = \psi_1(\mathbf{x})\cdots\psi_M(\mathbf{x}).
\end{equation}
Suppose the bandwidth of each factor $\psi_i(\mathbf{x})$ is given by $\Lambda_i$ and the bandwidth of $\psi$ is $\Lambda$. Let $\Gamma_i\supset\Lambda_i$ be the non-minimal bandwidth of each factor $\psi_i(\mathbf{x})$ and $\Gamma\supset\Lambda$ is the bandwidth of the non-minimal lifting. Assume that $\{\mathbf{x}_1,\cdots, \mathbf{x}_N\}$ are $N$ random samples on $\mathcal S[\psi]$ that are chosen independently. Then, the feature matrix $\Phi_{\Lambda}(\mathbf{X})$ will be of rank $|\Gamma|-|\Gamma\ominus\Lambda|$ for almost all $\psi$ if 
\begin{enumerate}
	\item each irreducible factor is randomly sampled with $N_i\ge|\Gamma_i|-|\Gamma_i\ominus\Lambda_i|$ points, and
	\item the total number of samples satisfy $ N \ge|\Gamma|-|\Gamma\ominus\Lambda|$.
\end{enumerate}
\end{prop}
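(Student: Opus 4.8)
The plan is to reduce the rank statement to a statement about a null space, and then to the existence of a single favourable sample configuration. Write $\psi=\psi_1\cdots\psi_M$ with the $\psi_i$ distinct irreducible factors, and split $\mathbf X=\mathbf X_1\cup\cdots\cup\mathbf X_M$, where $\mathbf X_i$ collects the $N_i$ samples lying on the component $\mathcal S[\psi_i]$. Since $\rank\big(\Phi_\Gamma(\mathbf X)\big)=|\Gamma|-\dim\mathcal N(\mathbf X)$ with $\mathcal N(\mathbf X)=\{\mathbf v\in\mathbb C^{|\Gamma|}:\ \mathbf v^T\Phi_\Gamma(\mathbf x_j)=0\ \text{for all }j\}$ the space of coefficient vectors of trigonometric polynomials supported in $\Gamma$ that vanish on every sample, the claim is equivalent to showing, with probability $1$, that
\[
\mathcal N(\mathbf X)=W:=\{\psi\cdot q:\ \operatorname{supp}(q)\subseteq\Gamma\ominus\Lambda\}.
\]
The inclusion $W\subseteq\mathcal N(\mathbf X)$ is automatic — every element of $W$ is supported in $\Gamma$ and divisible by $\psi$, hence vanishes on $\mathcal S[\psi]\supseteq\mathbf X$, which is the annihilation behind the dimension bound \eqref{dimnonmin} — and since $q\mapsto\psi q$ is injective, $\dim W=|\Gamma\ominus\Lambda|$; this already gives $\rank\big(\Phi_\Gamma(\mathbf X)\big)\le|\Gamma|-|\Gamma\ominus\Lambda|$ unconditionally. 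So the content is the reverse inclusion.

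For the reverse inclusion I would invoke the same genericity principle used in the proofs of Propositions~\ref{rankprop} and \ref{nonminimal1}: the $(|\Gamma|-|\Gamma\ominus\Lambda|)\times(|\Gamma|-|\Gamma\ominus\Lambda|)$ minors of $\Phi_\Gamma(\mathbf X)$ are trigonometric polynomials in the sample coordinates, so the locus in $\mathcal S[\psi_1]^{N_1}\times\cdots\times\mathcal S[\psi_M]^{N_M}$ on which all of them vanish is a real-analytic subset that, by irreducibility of each $\psi_i$, is either the whole product or a null set for the product of the natural surface measures. It therefore suffices to exhibit \emph{one} configuration $\mathbf X^\star$, with $N_i$ points on $\mathcal S[\psi_i]$ and $\sum_iN_i=N$, for which $\rank\big(\Phi_\Gamma(\mathbf X^\star)\big)=|\Gamma|-|\Gamma\ominus\Lambda|$. (The auxiliary fact that $N_i$ generic points on $\mathcal S[\psi_i]$ have linearly independent feature vectors whenever $N_i$ does not exceed $\dim\mathcal V_\Gamma(\mathcal S[\psi_i])$ is itself an instance of this principle and of the argument behind Proposition~\ref{nonminimal1}.)

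To build $\mathbf X^\star$ I would argue with column spans. The feature vectors of all points of $\mathcal S[\psi_i]$ span $W_i^\perp$, where $W_i=\{\psi_i q:\operatorname{supp}(q)\subseteq\Gamma\ominus\Lambda_i\}$, so $\dim W_i^\perp=|\Gamma|-|\Gamma\ominus\Lambda_i|$. Pairwise coprimality of the $\psi_i$ implies that $\bigcap_iW_i$ consists exactly of the trigonometric polynomials supported in $\Gamma$ that are divisible by $\prod_i\psi_i=\psi$, i.e.\ $\bigcap_iW_i=W$; dualizing, $\sum_iW_i^\perp=W^\perp$, a space of dimension $|\Gamma|-|\Gamma\ominus\Lambda|$ — precisely the target rank. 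Thus $W^\perp$ decomposes as a sum of the component spaces $W_i^\perp$, each of which is reachable by feature vectors of points on a single component, and by the previous paragraph $N_i$ generic points on $\mathcal S[\psi_i]$ span a subspace $U_i\subseteq W_i^\perp$ of dimension $\min\!\big(N_i,\,|\Gamma|-|\Gamma\ominus\Lambda_i|\big)$ in general position inside $W_i^\perp$. It then remains to check that such generic $U_1,\dots,U_M$ satisfy $U_1+\cdots+U_M=W^\perp$; selecting greedily one new sample at a time on a component whose quota is not yet exhausted and whose feature vectors are not already in the current span, this succeeds as long as the budgets $N_i$ and $N$ keep enlarging the span until it fills $W^\perp$. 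When every $\Gamma_i=\Lambda_i$ and $\Gamma=\Lambda$ this specializes to the argument behind Proposition~\ref{rankprop2}.

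The step I expect to be the main obstacle is precisely this last budget count. Because the $W_i^\perp$ overlap ($W_i^\perp\cap W_j^\perp=(W_i+W_j)^\perp\ne0$ in general), one need not exhaust each $W_i^\perp$ from its own component's samples, and the exact requirement, for generic $U_i$, is that $\sum_{i\notin S}N_i\ \ge\ \dim W^\perp-\dim\big(\sum_{i\in S}W_i^\perp\big)=|\Gamma\ominus\Lambda_S|-|\Gamma\ominus\Lambda|$ for every $S\subseteq\{1,\dots,M\}$, where $\Lambda_S=\operatorname{supp}\big(\prod_{i\in S}\psi_i\big)$. The extreme cases $S=\emptyset$ and $S=\{1,\dots,M\}\setminus\{i\}$ are exactly the second and first hypotheses of the proposition — the latter once one reads $\Gamma_i$ as the quotient support $\Gamma\ominus\Lambda_{-i}$, $\Lambda_{-i}=\operatorname{supp}\big(\prod_{j\ne i}\psi_j\big)$, which is the natural non-minimal support of $\psi_i$ forced by overestimating the support of $\psi$ as $\Gamma$. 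The intermediate cases reduce, via the identity $\Gamma\ominus\Lambda=(\Gamma\ominus\Lambda_S)\ominus\Lambda_{S^c}$, to a sub-additivity statement for the degrees of freedom of the sub-products $\prod_{i\in S^c}\psi_i$, which I would establish by induction on $M$; this is also where one must make explicit the compatibility between $\Gamma$ and the $\Gamma_i$. The remaining analytic point — that a trigonometric polynomial not identically zero on $\prod_i\mathcal S[\psi_i]^{N_i}$ vanishes only on a null set — is routine given irreducibility of the $\psi_i$ and is handled exactly as in Propositions~\ref{rankprop} and \ref{nonminimal1}.
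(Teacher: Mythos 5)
Your route is genuinely different from the paper's. The paper's own proof is a short null-space contradiction: if $\rank\left(\Phi_\Gamma(\mathbf X)\right)<|\Gamma|-|\Gamma\ominus\Lambda|$ there is an annihilating $\eta$ supported in $\Gamma$ that is not a multiple of $\psi$, hence coprime to some irreducible factor $\psi_i$; by the measure-zero intersection results (Lemma \ref{dimlemma}, Corollary \ref{measure}) the samples on $\mathcal S[\psi_i]$ almost surely avoid $\mathcal S[\eta]$, as in the proofs of Propositions \ref{rankprop2} and \ref{nonminimal1}. It never forms the spans $W_i^\perp$ and never tracks how the budget $N$ is split across components beyond invoking Proposition \ref{nonminimal1} factor by factor. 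Your signal-subspace construction is more explicit and, crucially, more honest about where the hypotheses must enter.

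The budget count you flag as the "main obstacle" is a genuine gap, and it cannot be closed from the stated hypotheses. The inequality $\sum_{i\notin S}N_i\ge|\Gamma\ominus\Lambda_S|-|\Gamma\ominus\Lambda|$ is not merely what your generic-position argument needs: it is a \emph{deterministic necessary} condition, because every feature vector of a point on $\bigcup_{i\in S}\mathcal S[\psi_i]$ lies in $\sum_{i\in S}W_i^\perp$, so $\rank\left(\Phi_\Gamma(\mathbf X)\right)\le\dim\bigl(\sum_{i\in S}W_i^\perp\bigr)+\sum_{i\notin S}N_i$. Under your reading $\Gamma_i=\Gamma\ominus\Lambda_{-i}$ with $\Lambda_{-i}=\mathrm{supp}\bigl(\prod_{j\ne i}\psi_j\bigr)$, the intermediate inequalities fail for $M\ge3$: take $n=2$, three irreducible factors of support $3\times3$ (so $\Lambda$ is $7\times7$) and $\Gamma$ of size $9\times9$. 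The target rank is $81-9=72$, each $\Gamma_i$ is $5\times5$ with $|\Gamma_i|-|\Gamma_i\ominus\Lambda_i|=16$, and the allocation $N_1=40$, $N_2=N_3=16$ satisfies both hypotheses; yet $\rank\le\dim W_1^\perp+N_2+N_3=(81-49)+32=64<72$. So your planned induction on $M$ cannot succeed: either the full family of subset inequalities must be assumed, or $\Gamma_i$ must be read as $\Gamma$ itself, which restores them. (For $M=2$ your two extreme cases exhaust all subsets and the argument goes through.) Note also that the paper's contradiction proof is exposed at exactly the same point: for a \emph{fixed} $\eta$ the bad event has probability zero, but in the configuration above the offending $\eta=\psi_1\mu$ is constructed from the samples on the other components, so the implicit union over uncountably many candidates is precisely where the per-subset counting is unavoidable. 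Your analysis therefore identifies a real deficiency in the statement as well as in your own argument; as a proof of the proposition as literally stated, the proposal is incomplete at the step you yourself single out.
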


We prove this result in Appendix \ref{proofrankprop4}. Note that in practice, when non-minimal lifting mapping is performed, we then randomly sample approximately $|\Gamma|-|\Gamma\ominus\Lambda|$ positions on $\mathcal{S}$. This random strategy ensures that the samples are distributed to the factors, roughly satisfying the conditions in Proposition \ref{nonminimal2}. We further studied this proposition in Fig. \ref{phasetrans}. We considered several random surfaces obtained by choosing random coefficients, each with different bandwidth and considered their recovery from different number of samples. From which, we obtained the phase transition plot given in Fig. \ref{phasetrans}, which agrees well with the theory.

\begin{figure}[h!]
	\centering
	\includegraphics[width=0.4\textwidth]{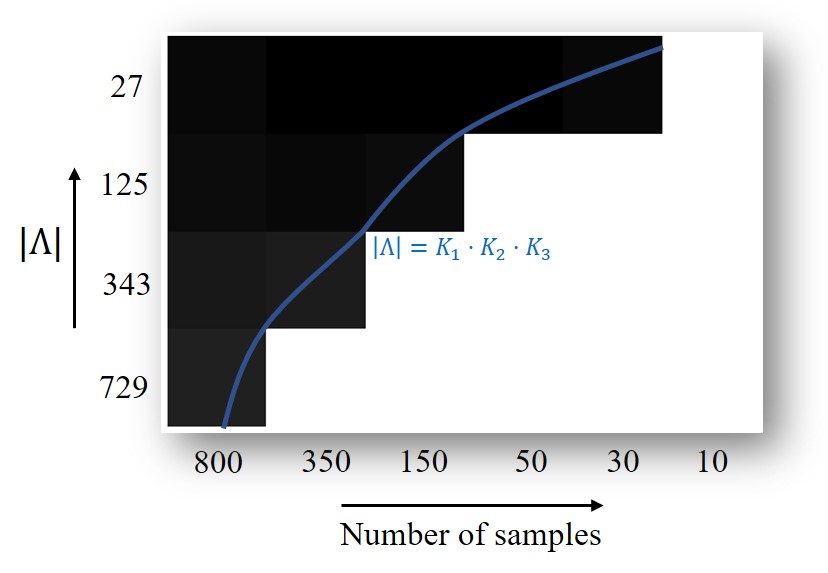}
	\caption{Effect of number of sampled points on surfaces reconstruction error. We randomly generated several surfaces with different bandwidths and number of sampled points, and tried to recover the surfaces from these samples. The reconstruction errors of the surfaces averaged over several trials are shown in the above phase transition plot, as a function of bandwidth and number of sampled entries.  the color black indicates that the true surfaces can be recovered in any of the experiments, while the color white represents that the true surfaces are not recovered in all the experiments. It is seen that we can almost recover the surfaces with $|\Lambda| = k_1\cdot k_2\cdot k_3$ samples.}
	\label{phasetrans}
\end{figure}

\subsection{Surface recovery algorithm for the non-minimal setting}
\label{sos}

The two propositions in Section \ref{nonminimal} show that $\Phi_{\Gamma}(\mathbf{X})$ has $|\Gamma\ominus\Lambda|$ null space basis vectors $\mathbf n_{i} \leftrightarrow \mu_{i}$, when the non-minimal lifting with bandwidth $\Gamma$ is performed. The following result from \cite{zou2019sampling} shows that the null-space vectors are related to the minimal polynomial of the surface. In particular, all null-space vectors have the minimal polynomial as a factor.  We will use this property to extract the surface from the null-space vectors as their greatest common divisor. We also introduce a simpler computational strategy which relies on the sum of squares of the null-space vectors.

\begin{prop}[Proposition 9 in \cite{zou2019sampling}]
\label{shift}
The coefficients of the trigonometric polynomials of the form
\[\theta_{\mathbf{k}}(\mathbf x)=\exp(j2\pi\mathbf{l}^T\mathbf{x})\psi(\mathbf{x}),\quad\forall \mathbf{k}\in\Gamma\ominus\Lambda.\]
 is a null space vector of  $\Phi_{\Gamma}(\mathbf{X})$. 
\end{prop}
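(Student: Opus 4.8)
The plan is to verify the claim directly: for each shift $\mathbf l\in\Gamma\ominus\Lambda$ (this is the object written $\mathbf l$ in the exponent of $\theta$), I would write down the Fourier coefficient vector of $\theta_{\mathbf l}(\mathbf x)=\exp(j2\pi\mathbf l^T\mathbf x)\,\psi(\mathbf x)$ as an explicit vector in $\mathbb C^{|\Gamma|}$ and then check that this vector annihilates every column of the feature matrix $\Phi_\Gamma(\mathbf X)$, so that it is a null-space vector in the sense of \eqref{nullspacecond}.

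First I would record the Fourier support of $\theta_{\mathbf l}$. Multiplying the series \eqref{tri} for $\psi$ by $\exp(j2\pi\mathbf l^T\mathbf x)$ shifts every frequency $\mathbf k\in\Lambda$ to $\mathbf k+\mathbf l$:
\[
\theta_{\mathbf l}(\mathbf x)=\sum_{\mathbf k\in\Lambda}\mathbf c_{\mathbf k}\,\exp\!\bigl(j2\pi(\mathbf k+\mathbf l)^T\mathbf x\bigr)=\sum_{\mathbf m\in\Lambda+\mathbf l}\mathbf c_{\mathbf m-\mathbf l}\,\exp(j2\pi\mathbf m^T\mathbf x).
\]
By the definition of $\Gamma\ominus\Lambda$, the hypothesis $\mathbf l\in\Gamma\ominus\Lambda$ means precisely $\Lambda+\mathbf l\subseteq\Gamma$ (cf.\ Fig.\ \ref{ominusfig}); hence $\theta_{\mathbf l}$ is itself a trigonometric polynomial whose support lies inside $\Gamma$, and its zero-filled coefficient vector $\tilde{\mathbf d}\in\mathbb C^{|\Gamma|}$, given by $\tilde{\mathbf d}_{\mathbf m}=\mathbf c_{\mathbf m-\mathbf l}$ if $\mathbf m\in\Lambda+\mathbf l$ and $\tilde{\mathbf d}_{\mathbf m}=0$ otherwise, is well defined.

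Next I would invoke the level-set property. Since $\psi(\mathbf x)=0$ for every $\mathbf x\in\mathcal S[\psi]$ and the factor $\exp(j2\pi\mathbf l^T\mathbf x)$ has unit modulus, $\theta_{\mathbf l}(\mathbf x)=0$ for all $\mathbf x\in\mathcal S[\psi]$; in particular $\theta_{\mathbf l}(\mathbf x_i)=0$ for each sample $\mathbf x_i$, which by assumption lies on $\mathcal S[\psi]$. Reading off the lifting \eqref{lifting}, the value at $\mathbf x_i$ of a trigonometric polynomial with coefficient vector $\tilde{\mathbf d}$ supported in $\Gamma$ is exactly $\tilde{\mathbf d}^T\Phi_\Gamma(\mathbf x_i)$. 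Therefore $\tilde{\mathbf d}^T\Phi_\Gamma(\mathbf x_i)=0$ for $i=1,\dots,N$, i.e.\ $\tilde{\mathbf d}^T\Phi_\Gamma(\mathbf X)=0$, which is precisely the assertion that $\tilde{\mathbf d}$ lies in the (left) null space of $\Phi_\Gamma(\mathbf X)$.

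I do not anticipate a genuine obstacle: the argument is a direct computation, and the only care required is bookkeeping rather than mathematics — namely (i) confirming that the shifted support $\Lambda+\mathbf l$ really sits inside $\Gamma$, which is exactly the meaning of $\mathbf l\in\Gamma\ominus\Lambda$, and (ii) keeping the orientation straight, since $\Phi_\Gamma(\mathbf X)$ is the $|\Gamma|\times N$ feature matrix and so "null-space vector" here means a left null vector, consistent with \eqref{nullspacecond}. If, as used in the rank bound \eqref{dimnonmin}, one additionally wants these $|\Gamma\ominus\Lambda|$ vectors to be linearly independent, that follows because distinct shifts $\mathbf l$ place the support $\Lambda+\mathbf l$, and in particular its extreme frequencies, at distinct locations; but independence is not part of the present statement.
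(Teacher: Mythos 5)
Your proposal is correct and follows essentially the same route the paper takes (see the discussion surrounding \eqref{tri1} in Section \ref{nonminlifting}): multiplying $\psi$ by the unit-modulus exponential $\exp(j2\pi\mathbf l^T\mathbf x)$ leaves the zero set unchanged, shifts the coefficient support to $\Lambda+\mathbf l\subseteq\Gamma$ by definition of $\Gamma\ominus\Lambda$, and the resulting zero-filled coefficient vector therefore annihilates $\Phi_\Gamma(\mathbf x_i)$ for every sample on $\mathcal S[\psi]$. Your bookkeeping remarks (left null vector orientation, and that linear independence of the shifts is a separate claim) are consistent with how the paper uses the result.
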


Note that the coefficients of $\theta_{{\mathbf k}}(\mathbf x)$ correspond to the shifted versions of the coefficients of $\psi$ and hence are linearly independent. We also note that any such function is a valid annihilating functions for points on $\mathcal S$. When the dimension of the null-space is $|\Gamma\ominus \Lambda|$, these corresponding coefficients form a basis for the null-space. Therefore, we have that any function in the null-space can be expressed as
\begin{eqnarray}
\eta(\mathbf x) &=& \sum_{\mathbf k \in \Gamma \ominus \Lambda} \alpha_{\mathbf k}~\psi(\mathbf x) \exp(j2\pi\mathbf k^{T} \mathbf x)\\
&=& \psi(\mathbf x) \underbrace{\sum_{\mathbf k \in \Gamma \ominus \Lambda} \alpha_{\mathbf k}\exp(j2\pi\mathbf k^{T} \mathbf x)}_{\gamma(\mathbf x)} = \psi(\mathbf x)\gamma(\mathbf x),
\end{eqnarray}
 where $\alpha_{\mathbf k}$ and $\gamma$ are arbitrary coefficients and function, respectively. Note that all of the functions obtained by the null-space vectors have $\psi$ as a common factor. 
 
Accordingly, we have that $\psi(\mathbf{x})$ is the greatest common divisor of the polynomials $\mu_i(\mathbf{x}) \leftrightarrow \mathbf n_{i}$, where $\mathbf n_{i}$ are the null-space vectors of $\Phi_{\Gamma}(\mathbf X)$, which can be estimated using singular value decomposition (SVD). Since we consider polynomials of several variables, it is not computationally efficient to find the greatest common divisor. We note that we are not interested in recovering the minimal polynomial, but are only interested in finding the common zeros of $\mu_{i}(\mathbf x)$. We hence propose to recover the original surface as the zeros of the sum of squares (SoS) polynomial
\[\sigma(\mathbf{x}) = \sum_{i=1}^{|\Gamma\ominus\Lambda|}|\mu_i(\mathbf{x})|^2.\]

Note that rank guarantees in Propositions \ref{proofrankprop3} and \ref{proofrankprop4} ensure that the entire null-space will be fully identified by the feature matrix. Coupled with Proposition \ref{shift}, we can conclude that the recovery using the above algorithm (SVD, followed by the sum of squares of the inverse Fourier transforms of the coefficients) will give perfect recovery of the surface under noiseless conditions. The algorithm is illustrated in Fig. \ref{prop_over}.

\begin{figure}[h!]
	\centering
          \subfloat[The original curve]{\includegraphics[width=0.23\textwidth]{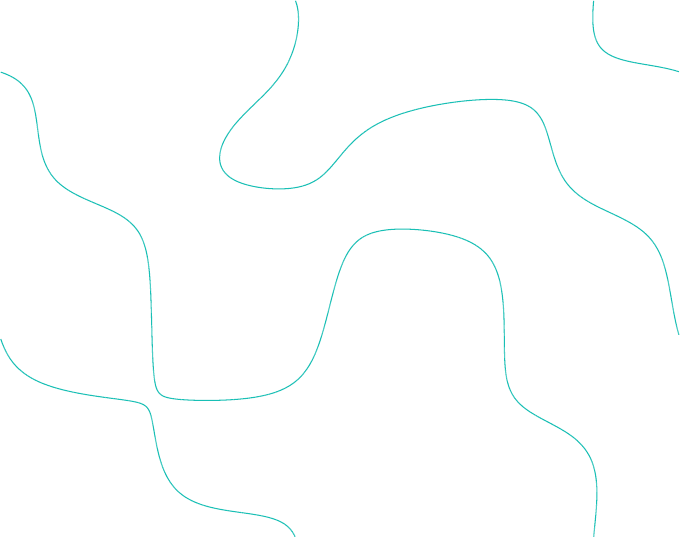}}\hspace{4ex}
          \subfloat[The sampling points]{\includegraphics[width=0.23\textwidth]{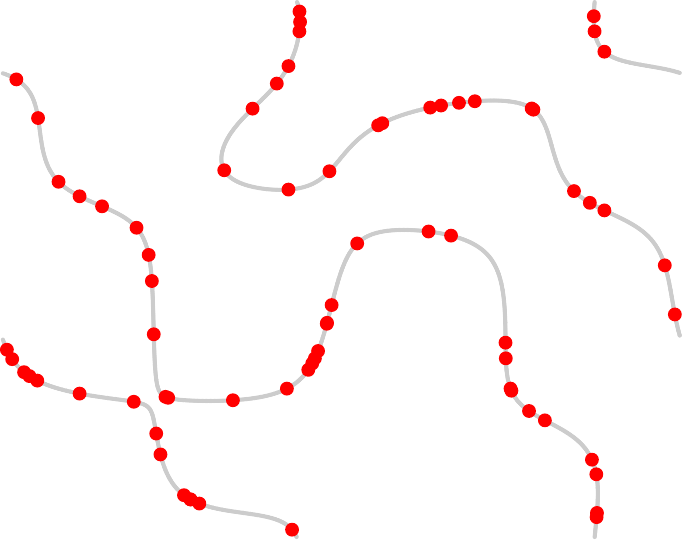}}\\
          \subfloat[1st null-space function]{\includegraphics[width=0.23\textwidth]{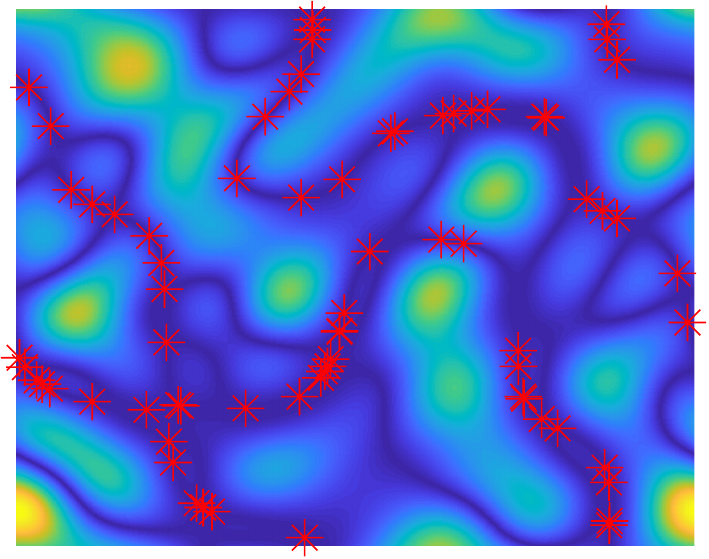}}\hspace{4ex}
          \subfloat[2nd null-space function]{\includegraphics[width=0.23\textwidth]{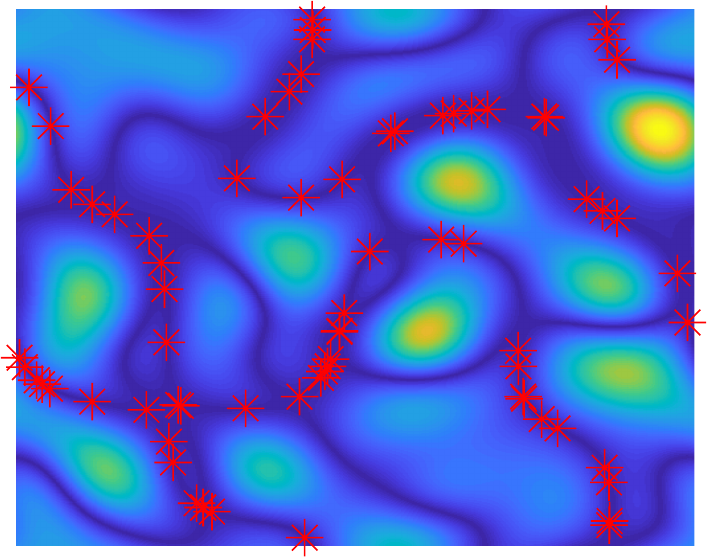}}\hspace{4ex}
          \subfloat[sum-of-squares polynomial]{\includegraphics[width=0.23\textwidth]{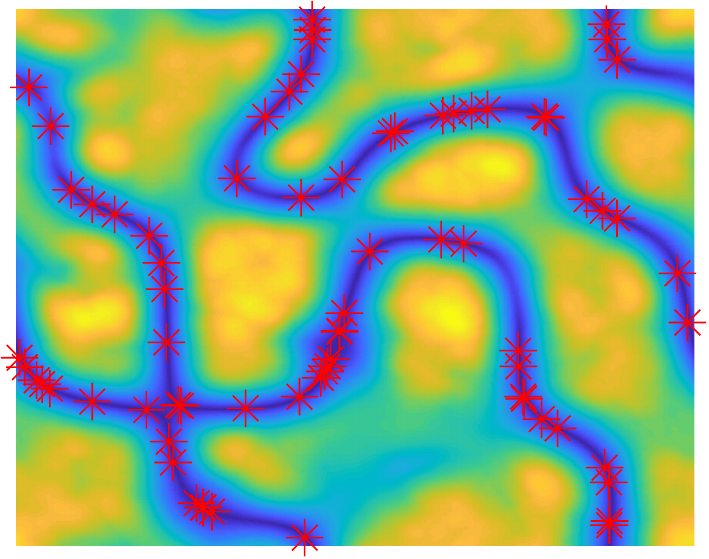}}\\
\caption{Illustration of the sampling fashion for non-minimal bandwidth. We consider the curve as shown in (a), which is given by the zero level set of a trigonometric polynomial of bandwidth $5\times 5$. We choose the non-minimal bandwidth $\Gamma$ as $11\times 11$. According to the sampling condition for non-minimal bandwidth, we sampled on 72 random locations. We randomly chose two null-space vectors for the feature matrix of the sampling set, which gave us functions (c) and (d). We can see that all of these functions have zeros on the original zero set, in addition to processing several other zeros. The sum of squares function is shown in (e), showing the common zeros, which specifies the original curve.}
\label{prop_over}
\end{figure}

\subsection{Surface recovery from noisy samples}\label{noisy}
The analysis in Section \ref{nonminimal} shows that when the bandwidth of the surface is small, the feature matrix is low rank. In practice, the sampling points are usually corrupted with some noise. We denote the noisy sampling set by $\mathbf{Y}=\mathbf{X}+\mathbf{N}$, where $\mathbf{N}$ is the noise. We propose to exploit the low-rank nature of the feature matrix to recover it from noisy measurements. Specifically, when the sampling set $\mathbf{X}$ is corrupted by noise, the points will deviate from the original surface, and hence the features will cease to be low rank. We impose a nuclear norm penalty on the feature maps that will push the feature vectors to a subspace. Since the feature vectors are related to the original points by the exponential mapping, the original points will move to the surface. In practice it is difficult to compute the feature map. We hence rely on an iterative reweighted least-squares algorithm, coupled with the \emph{kernel-trick}, to avoid the computation of the features. Since the cost function is non-linear (due to the non-linear kernel), we use steepest descent-like algorithm to minimize the cost function. We note that each iteration of this algorithm has similarities to non-local means algorithms, which first estimate the weight/Laplacian matrix from the patches, followed by a smoothing. We also note that this approach has conceptual similarities to kernel low-rank algorithms used in MRI and computer vision \cite{muller2001introduction,nakarmi2017kernel}.
These algorithms rely on explicit polynomial mappings, low-rank approximation of the features, followed by the analytical evaluation of the pre-images that is possible for polynomial kernels.

We pose the denoising as:
\begin{equation}\label{nucnorm}
\mathbf{X}^*=\arg\min_{\mathbf X}||\mathbf X-\mathbf Y||^2+\lambda||\Phi(\mathbf{X})||_{*}
\end{equation}
where we use  the nuclear norm of the feature matrix of the sampling set as a regularizer. Unlike traditional convex nuclear norm formulations, the above scheme is non-convex. 

We adapt the kernel low-rank algorithm in \cite{poddar2018recovery,ongie2017algebraic} to the high dimensional setting to solve \eqref{nucnorm}. This algorithm relies on an iteratively reweighted least squares (IRLS) approach \cite{fornasier2011low,mohan2012iterative} which alternates between the following two steps:
\begin{equation}\label{min_irls}
\mathbf{X}^{(n)}=\arg\min_{\mathbf X}||\mathbf{X}-\mathbf{Y}||^2+\lambda{\mathrm{trace}}[\mathcal{K}(\mathbf{X})\mathbf{P}^{(n-1)}],
\end{equation}
and
\begin{equation}
\label{pn}
\mathbf{P}^{(n)}= \left[\mathcal{K}(\mathbf{X}^{(n)}) +\gamma^{(n)}\mathbf{I}\right]^{-1/2}
\end{equation}
where $\gamma^{(n)}=\frac{\gamma^{(n-1)}}{\eta}$ and $\eta>1$ is a constant. Here, $\mathcal{K}(\mathbf{X}) = \Phi_{\Gamma}(\mathbf X)^T\Phi_{\Gamma}(\mathbf X)$. We use the \emph{kernel-trick} to evaluate $\mathcal K\left(\mathbf X\right)$. The kernel-trick suggests that we do not need to explicitly evaluate the features. Each entry of the matrices $\mathcal K\left(\mathbf X\right)$ correspond to inner-products in feature space:
\begin{eqnarray}
\label{kernelmtx}
\left(\mathcal K\left(\mathbf X\right)\right)_{(i,j)} &=& \underbrace{\Phi(\mathbf{x}_i)^H \Phi(\mathbf{x}_j)}_{\kappa(\mathbf x_i,\mathbf x_j)}
\end{eqnarray}
which can be evaluated efficiently using the nonlinear function $\kappa$ (termed as kernel function) of their inner-products in $\mathbb R^n$. 

The dependence of the kernel function on the lifting is detailed in Section \ref{kerneltrick}. Since the above problem in \eqref{min_irls} is not quadratic, we propose to solve it using gradient descent as in \cite{zou2019sampling}. We note that the cost function in \eqref{min_irls} can be rewritten as 
	\begin{equation}\label{cost}
	C(\mathbf X) = \|\mathbf X-\mathbf Y\|^2 + \lambda ~ \sum_{i,j}~\mathbf P_{ij}^{(n-1)} ~ \kappa\left(\mathbf x_i,\mathbf x_j\right),
	\end{equation}	
	where $\mathbf{P}_{i,j}$ are the entries of the matrix $\mathbf P^{(n-1)}$.
	As will be discussed in detail in Section \ref{kerneltrick}, the exponential kernel for a circular support as in Fig. \ref{kernelshape}.(b) can be approximated as a circularly symmetric kernel $\kappa(\mathbf x_i,\mathbf x_j) = k(\|\mathbf x_i-\mathbf x_j\|^2)$. 
In this case, the partial derivatives of \eqref{min_irls} with respect to one of the vectors $\mathbf x_i$ is 
\begin{eqnarray}
	\partial_{\mathbf x_i}\mathcal C &=& 2(\mathbf x_i - \mathbf y_i) + 2\lambda \sum_j \underbrace{\mathbf P_{ij}^{(n-1)} ~k'(\|\mathbf x_i-\mathbf x_j\|^2)}_{w_{i,j}}~(\mathbf x_i-\mathbf x_j)\\
	&=& 2(\mathbf x_i - \mathbf y_i) + 2\lambda \underbrace{\left(\sum_j w_{i,j}\right)}_{d_i}\mathbf x_i - \mathbf W\mathbf X.
\end{eqnarray}
Here, 
\begin{equation}
\label{wn}
\mathbf W_{ij} = \mathbf P_{ij}^{(n-1)} ~k'(\|\mathbf x_i-\mathbf x_j\|^2.
\end{equation}
Thus, the gradient of the cost function \eqref{cost} is :
\begin{equation}\label{gradient}
\nabla_{\mathbf X} \mathcal C \approx 2(\mathbf X - \mathbf Y) +2\lambda \underbrace{
	\left(\mathbf D-\mathbf W\right)}_{\mathbf L} \mathbf X.
\end{equation}
Here, $\bf{L} = \bf{D}-\bf{W}$ is the matrix obtained from the weights $\mathbf W$ and  $\bf{D}$ is a diagonal matrix with diagonal entries $d_{i} = \sum_{j}{\bf{W}}_{ij}$.

We note that the gradient of \eqref{min_irls} specified by \eqref{gradient} is also the gradient of the cost function
\begin{equation}
\label{Req}
\mathcal D = \left\|\mathbf X - \mathbf Y\right\|_F^2 + \lambda~{\rm trace}\left(\mathbf X~ \mathbf L ~\mathbf X^{H} \right),
\end{equation}
which is used in approaches such as non-local means (NLM) \cite{buades2011non} and graph regularization \cite{smola2003kernels}. We note that the above optimization problem is quadratic and hence has an analytical solution. We thus alternate between the solution of \eqref{Req} and updating the weights, and hence the Laplacian matrix using \eqref{wn}, where $\mathbf P$ is specified by \eqref{pn}. Despite the similarity to NLM, we note that NLM approaches use a fixed Laplacian unlike the iterative approach in our work. In addition, the expression of the Laplacian is also very different. We refer the readers to \cite{zou2019sampling} for comparison of the proposed scheme with the above graph regularized algorithm. Once the denoised null-space matrix is obtained from the above algorithm, we can use the sum of square approach described in Section \ref{nonminimal} to recover the surfaces. We note that the algorithm is not very sensitive to the true bandwidth of the kernel $\Gamma$, as long as it over-estimates the true bandwidth of the surface $\Lambda$.

We illustrate this approach in the context of recovering 3D shapes from noisy point clouds in Fig. \ref{illu3dsurf}. The data sets are obtain from AIM@SHAPE \cite{AIM}. We note that the direct approach, where the null-space vector is calculated from the noisy feature matrix, often results in perturbed shapes. By contrast, the nuclear norm prior is able to regularize the recovery.

\begin{figure}[!h]
\centering
\subfloat[1500 samples]{\includegraphics[width=0.17\textwidth]{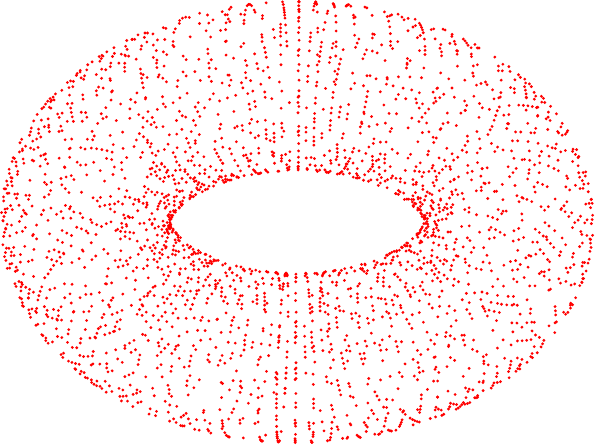}}\hspace{4em}
\subfloat[2500 samples]{\includegraphics[width=0.155\textwidth,angle=90]{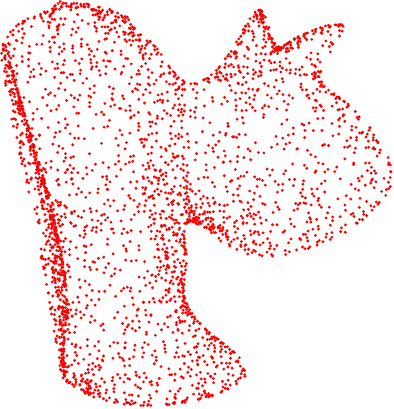}}\hspace{4em}
\subfloat[2500 samples]{\includegraphics[width=0.18\textwidth]{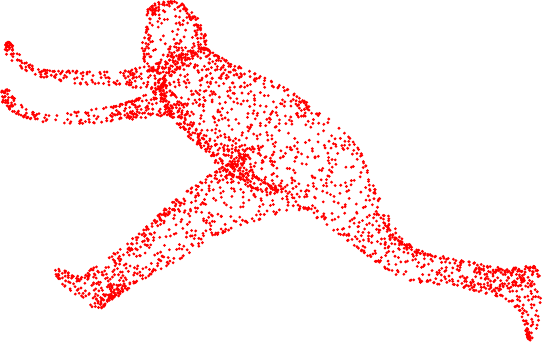}}\hspace{3em}\\
\subfloat[Noisy samp.]{\includegraphics[width=0.15\textwidth]{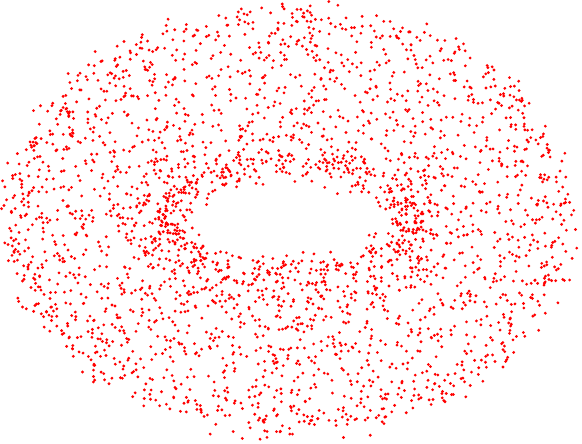}}\hspace{1.5em}
\subfloat[Denoised samp.]{\includegraphics[width=0.15\textwidth]{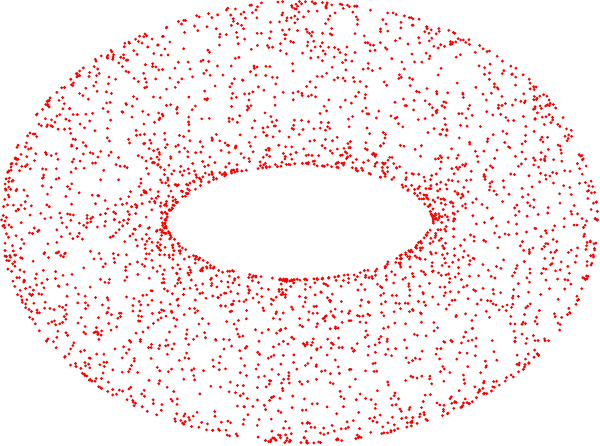}}\hspace{1.5em}
\subfloat[Noisy recon.]{\includegraphics[width=0.15\textwidth]{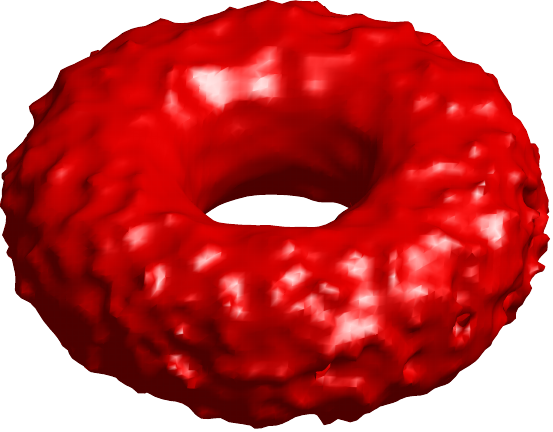}}\hspace{1.5em}
\subfloat[Denoised recon.]{\includegraphics[width=0.15\textwidth]{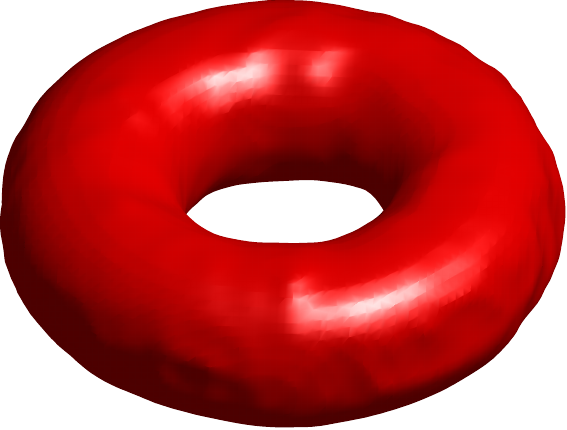}}\hspace{1.5em}
\subfloat[Original recon.]{\includegraphics[width=0.15\textwidth]{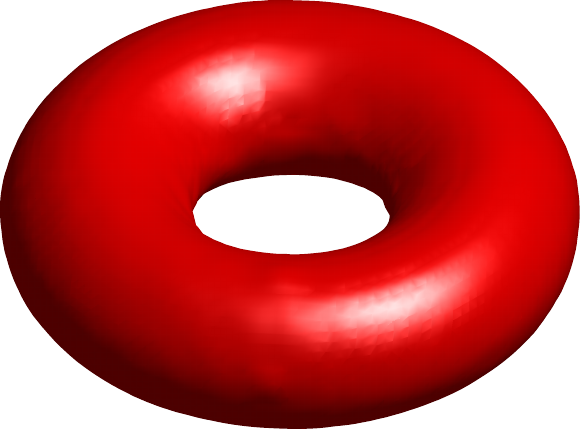}}\\
\subfloat[Noisy samp.]{\includegraphics[width=0.14\textwidth,angle=90]{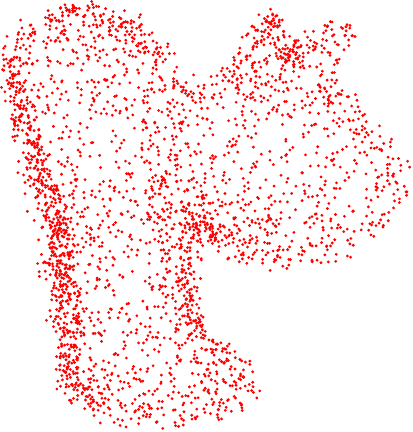}}\hspace{1.55em}
\subfloat[Denoised samp.]{\includegraphics[width=0.14\textwidth,angle=90]{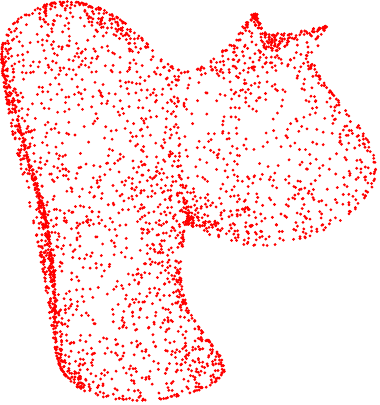}}\hspace{1.55em}
\subfloat[Noisy recon.]{\includegraphics[width=0.14\textwidth,angle=90]{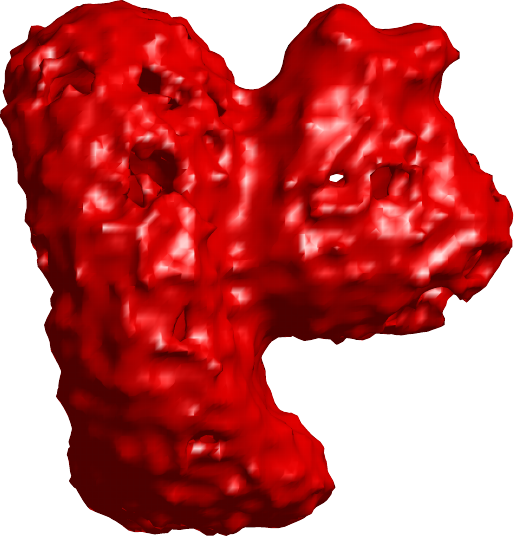}}\hspace{1.55em}
\subfloat[Denoised recon.]{\includegraphics[width=0.14\textwidth,angle=90]{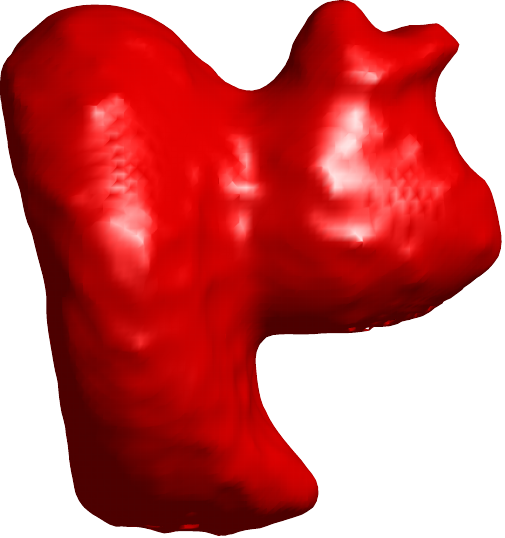}}\hspace{1.55em}
\subfloat[Original recon.]{\includegraphics[width=0.14\textwidth,angle=90]{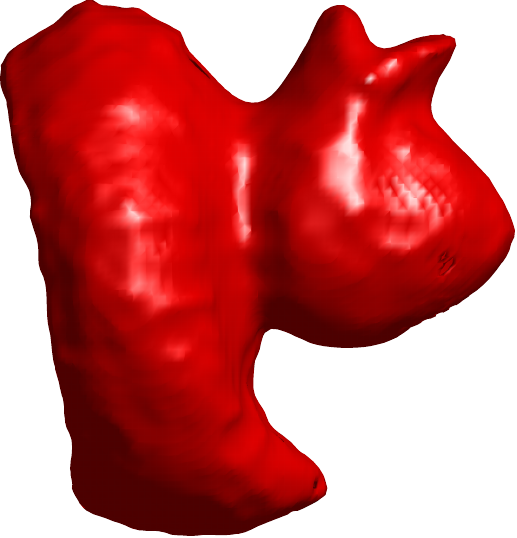}}\\
\subfloat[Noisy samp.]{\includegraphics[width=0.16\textwidth]{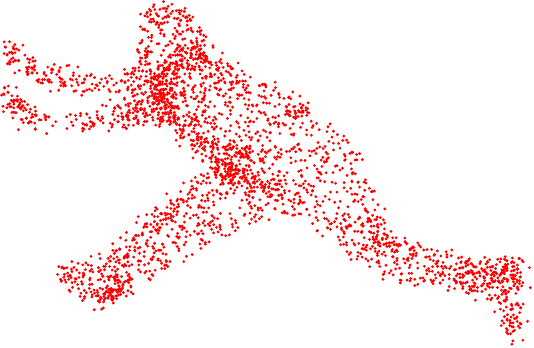}}\hspace{1.1em}
\subfloat[Denoised samp.]{\includegraphics[width=0.16\textwidth]{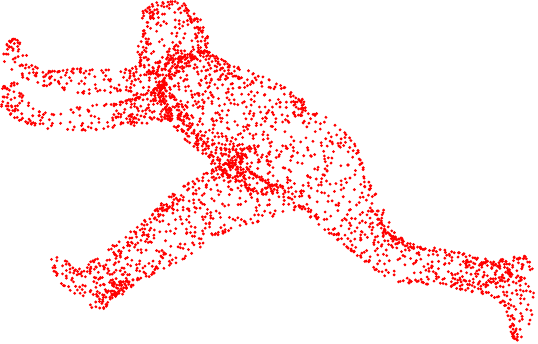}}\hspace{1.1em}
\subfloat[Noisy recon.]{\includegraphics[width=0.16\textwidth]{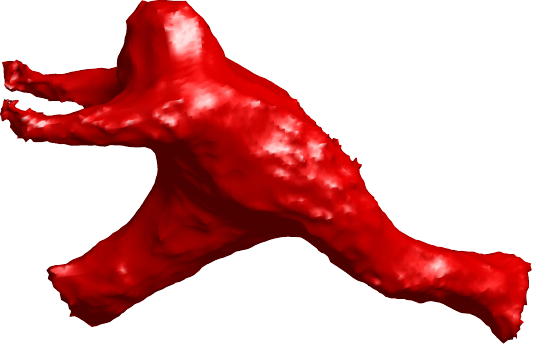}}\hspace{1.1em}
\subfloat[Denoised recon.]{\includegraphics[width=0.16\textwidth]{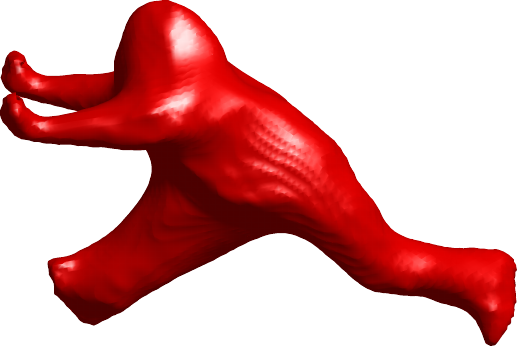}}\hspace{1.1em}
\subfloat[Original recon.]{\includegraphics[width=0.16\textwidth]{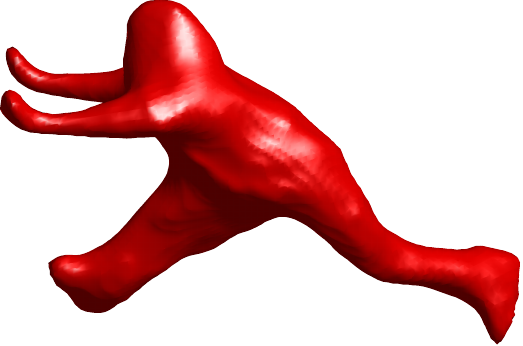}}\\
\caption{Illustration of the points cloud denoising algorithm and surface recovery algorithm with unknown bandwidth. The first row shows the samples drawn from three surfaces. Noise is added to the samples (see (d), (i), (n)). Then we use the proposed algorithm to denoise the points. The parameter $\lambda$ in \eqref{nucnorm} is chosen as 1.4 for the denoising algorithm. The number of iterations for the denoising algorithm is 30. The surfaces that are recovered from noisy samples and denoised samples are also presented for comparison. The bandwidth was chosen as $31\times 31\times 31$ for all the experiments.}
\label{illu3dsurf}
\end{figure}



\section{Recovery of functions on surfaces}
\label{funlearn}
As discussed in the introduction, modern machine learning algorithms pre-learn functions from given input and output data pairs \cite{koulamas2007single}. For example, CNN based denoising approaches that provide state-of-the-art results essentially learn to generate noise-free pixels or patches from given training data with several noisy and noise-free patch pairs \cite{zhang2017beyond,tian2019enhanced}. The problem can be formulated as estimating a nonlinear function $\mathbf y=f(\mathbf x)$, given input and output data pairs $(\mathbf{x}_i,\mathbf y_i);i=1,..,N_{\rm train}$. A challenge in the representation of such high dimensional function is the large number of parameters, which is also termed as the curse of dimensions. Kernel methods \cite{pillonetto2014kernel}, random forests \cite{schroff2008object} and neural networks \cite{zhang1995wavelet} provide a powerful class of machine learning models that can be used in learning highly nonlinear functions. These models have been widely used in many machine learning tasks \cite{hastie2005elements}.



We now show that the results shown in the previous sections provide an attractive option to compactly represent functions, when the data lie on a smooth surface or manifold in high dimensional spaces. We note that the manifold assumption is widely assumed in a range of machine learning problems \cite{gallese2003roots, fefferman2016testing}. We now show that if the data lie on a smooth surface in high dimensional space, one can represent the multidimensional functions very efficiently using few parameters. 

We model the function using the same basis functions used to represent the level set function. In our case\footnote{We note that similar results can be obtained when the function $f$ and the level set function are represented as a linear combination of shift-invariant functions or polynomials.}, we model it as a band-limited multidimensional function:
\begin{equation}\label{blfn}
f(\mathbf{x})=\sum_{\mathbf{k}\in\Gamma}\beta_{\mathbf{k}}\exp(j2\pi\mathbf{k}^T\mathbf{x}) = \boldsymbol{\beta}^T\Phi_{\Gamma}(\mathbf{x}), 
\end{equation}
where $\mathbf x \in \mathbb R^n$. 
The number of free parameters in the above representation is $|\Gamma|$, where $\Gamma \subset \mathbb Z^n$ is the bandwidth of the function. Note that $|\Gamma|$ grows rapidly with the dimension $n$. The large number of parameters needed for such a representation makes it difficult to learn such functions from few labeled data points. We now show that if the points lie on the union of irreducible surfaces as in \eqref{uim}, where the bandwidth of $\psi$ is given by $\Lambda\subset\Gamma$, we can represent functions of the form \eqref{blfn} efficiently.

\subsection{Compact representation of features using anchor points}
We use the upper bound of the dimension of the feature matrix in \eqref{dimnonmin} to come up with an efficient representation of functions of the form \ref{blfn}. The dimension bound \eqref{dimnonmin} implies that the features of points on $S[\psi]$ lie in a subspace of dimension $r=|\Gamma|-|\Gamma\ominus\Lambda|$, which is far smaller than $|\Gamma|$ especially when the dimension $n$ is large. We note that kernel methods often approximate the feature space using few eigen vectors of kernel PCA. However, there is no guarantee that these basis vectors are mappings of some points on $\mathcal S$. Hence, it is a common practice to consider all the training  samples to capture the low-dimensional feature vectors in kernel PCA. We now show that it is possible to find a set of $N\geq r$ anchor points $\mathbf{a}_1,\cdots ,\mathbf{a}_N \in \mathcal S[\psi]$, such that the feature space $\mathcal V_{\Gamma}(\mathcal S)$ is in ${\rm{span}}\{\Phi_{\Gamma}(\mathbf{a}_1),\cdots,\Phi_{\Gamma}(\mathbf{a}_N)\}$. This result is a Corollary of Proposition \ref{nonminimal2}.
\begin{cor}
	\label{corrollary}
Let $\psi(\mathbf{x})$ be a randomly chosen trigonometric polynomial with $M$ irreducible factors as in \eqref{reduciblepoly}. Suppose $\Gamma_i\supset\Lambda_i$ is the non-minimal bandwidth of each factor $\psi_i(\mathbf{x})$ and $\Gamma\supset\Lambda$ is the total bandwidth. Let $\{\mathbf{a}_1,\cdots, \mathbf{a}_N\}$ be $N$ randomly chosen anchor points on $\mathcal S[\psi]$ satisfying
\begin{enumerate}
	\item each irreducible factor $\mathcal S[\psi_i]$ is sampled with $N_i\ge|\Gamma_i|-|\Gamma_i\ominus\Lambda_i|$ points, and
	\item the total number of samples satisfy $ N \ge|\Gamma|-|\Gamma\ominus\Lambda|$.
\end{enumerate}
Then, 
	\begin{equation}\label{dimbound}
	\mathcal V_{\Gamma}(\mathcal S) \subseteq {\rm span}\left\{ \Phi_{\Gamma}(\mathbf a_i); i=1,\cdots,N \right\}
	\end{equation}
with probability 1.
\end{cor}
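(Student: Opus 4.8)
The plan is to read the statement off as a direct consequence of the exact rank guarantee of Proposition \ref{nonminimal2} together with the dimension upper bound \eqref{dimnonmin}. Set $r = |\Gamma| - |\Gamma\ominus\Lambda|$ and let $W \subseteq \mathbb C^{|\Gamma|}$ denote the linear span of the feature space $\mathcal V_\Gamma(\mathcal S)$. First I would record that $\dim W \le r$: every trigonometric polynomial of the form $\psi(\mathbf x)\gamma(\mathbf x)$ with $\gamma$ supported in $\Gamma\ominus\Lambda$ vanishes on all of $\mathcal S[\psi]$ (since $\psi$ does), so the coefficient vectors of these $|\Gamma\ominus\Lambda|$ linearly independent polynomials all lie in $W^{\perp}$; hence $\dim W^{\perp}\ge|\Gamma\ominus\Lambda|$, which is exactly the content of \eqref{dimnonmin} and remains valid for the union $\mathcal S=\bigcup_i\mathcal S[\psi_i]$ because $\psi=\prod_i\psi_i$ still divides $\psi\gamma$.

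Next I would invoke Proposition \ref{nonminimal2} with the sampling set taken to be the anchor set $\mathbf A = \{\mathbf a_1,\dots,\mathbf a_N\}$: the two hypotheses imposed on the $\mathbf a_i$ in the corollary are verbatim the hypotheses of that proposition, so it applies and yields $\rank\big(\Phi_\Gamma(\mathbf A)\big) = r$ with probability $1$. Equivalently, the subspace $U := \mathrm{span}\{\Phi_\Gamma(\mathbf a_1),\dots,\Phi_\Gamma(\mathbf a_N)\}$ has dimension $r$ almost surely. Since each anchor point lies on $\mathcal S[\psi]$, its feature vector $\Phi_\Gamma(\mathbf a_i)$ belongs to $\mathcal V_\Gamma(\mathcal S)\subseteq W$, so $U\subseteq W$.

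Finally I would combine the two facts: $r = \dim U \le \dim W \le r$ forces $\dim U = \dim W$, and a subspace contained in a space of the same finite dimension must coincide with it, so $U = W$. Consequently $\mathcal V_\Gamma(\mathcal S)\subseteq W = U = \mathrm{span}\{\Phi_\Gamma(\mathbf a_i): i=1,\dots,N\}$ with probability $1$, which is \eqref{dimbound}. I do not expect a genuinely hard step here — the corollary is essentially a repackaging of Proposition \ref{nonminimal2}. The only points demanding care are the bookkeeping ones: checking that \eqref{dimnonmin} is a bound on the linear span $W$ itself rather than on some larger ambient subspace, and confirming that the corollary's sampling conditions coincide exactly with those of Proposition \ref{nonminimal2}, so that the rank conclusion transfers to the anchor configuration without modification.
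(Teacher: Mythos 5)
Your proposal is correct and is exactly the argument the paper intends: the paper states the result as an immediate consequence of Proposition \ref{nonminimal2} without writing out the details, and your write-up supplies precisely the missing bookkeeping (the upper bound $\dim\,\mathrm{span}\,\mathcal V_\Gamma(\mathcal S)\le|\Gamma|-|\Gamma\ominus\Lambda|$ from the shifted annihilating coefficient vectors, the almost-sure rank equality for the anchor feature matrix, and the dimension-count conclusion $U=W$). No gaps.
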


As discussed in Section \ref{nonminimal}, if we randomly choose $N \ge |\Gamma|-|\Gamma\ominus\Lambda| =  r$ points on  $\mathcal{S}[\psi]$, the feature matrix will satisfy the conditions in Corollary \ref{corrollary} and hence \eqref{dimbound} with unit probability. This relation implies that the feature vector of any point $\mathbf x \in \mathcal S[\psi]$ can be expressed as the linear combination of the features of the anchor points $\Phi_{\Gamma}(\mathbf a_i); i=1,\cdots,N$:
\begin{eqnarray}
\Phi_{\Gamma}(\mathbf x) &=& \sum_{i=1}^{N} \mathbf \alpha_i(\mathbf x) ~\Phi_{\Gamma}(\mathbf a_i)\\\label{linearcomb}
&=& \underbrace{\begin{bmatrix}\Phi_{\Gamma}(\mathbf{a}_1)  & \cdots & \Phi_{\Gamma}(\mathbf{a}_{N})\end{bmatrix}}_{\Phi(\mathbf{A})} \underbrace{\begin{bmatrix} \alpha_1(\mathbf{x}) \\  \vdots \\ \alpha_{N}(\mathbf{x})\end{bmatrix}}_{\boldsymbol \alpha(\mathbf x)}
\end{eqnarray}
Here, $\alpha_i(\mathbf x)$ are the coefficients of the representation.
Note that the complexity of the above representation is dependent on $N$, which is much smaller than $|\Gamma|$, when the surface is highly band-limited. We note that the above compact representation is exact only for $\mathbf x \in \mathcal{S}[\psi]$ and not for arbitrary $\mathbf x\in \mathbb R^{n}$; the representation in \eqref{linearcomb} will be invalid for $\mathbf x \notin \mathcal{S}[\psi]$. 

However, this direct approach requires the computation of the high dimensional feature matrix, and hence may not be computationally feasible for high dimensional problems. We hence consider the normal equations and solve for $\boldsymbol\alpha(\mathbf x)$ as 
\begin{equation}\label{coeffs}
\boldsymbol\alpha(\mathbf x) = \left(\underbrace{\Phi(\mathbf{A})^H\Phi(\mathbf{A})}_{\mathcal K\left(\mathbf A\right)}\right)^{\dag} \underbrace{\left(\Phi(\mathbf{A})^H\Phi_{\Gamma}(\mathbf x)\right)}_{\mathbf k_{\mathbf A}(\mathbf x)},
\end{equation}
where $(\cdot)^\dag$ denotes the pseudo-inverse.

\subsection{Representation and learning of functions}
Using \eqref{blfn}, \eqref{linearcomb}, and \eqref{coeffs}, the function $\mathbf f:\mathbb R^n \rightarrow \mathbb R^m$ can be written as
\begin{eqnarray}\label{fnrep}
\mathbf f(\mathbf{x})&=& \boldsymbol{\beta}^T~\Phi({\mathbf A}) ~\mathcal K\left(\mathbf A\right)^{\dag} ~\mathbf k_{\mathbf A}(\mathbf x)\\\label{mtxform}
&=& \underbrace{\left[\overbrace{\boldsymbol{\beta}^T~\Phi_{\Gamma}(\mathbf a_1)}^{\mathbf f(\mathbf a_1)},\ldots,\overbrace{\boldsymbol{\beta}^T~\Phi_{\Gamma}(\mathbf a_N)}^{\mathbf f(\mathbf a_N)}\right]}_{\mathbf F} ~\underbrace{\mathcal K\left(\mathbf A\right)^{\dag} ~\mathbf k_{\mathbf A}(\mathbf x)}_{\boldsymbol{\alpha}(\mathbf x)}
\end{eqnarray}
Here, $\mathbf f(\mathbf x)$ is an $M\times 1$ vector, while $\mathbf F$ is an $M\times N$ matrix. $\mathcal{K}(\mathbf{A})$ is an $N\times N$ matrix and $\mathbf {k_{A}}(\mathbf x)$ is an $N\times 1$ vector. Thus, if the function values at the anchor points, specified by $\mathbf f(\mathbf a_i); i=1,\cdots,N$ are known, one can compute the function for any point $\mathbf x\in \mathcal S[\psi]$. 

We note that the direct representation of a function $f:\mathbb R^n\rightarrow \mathbb R$ in \eqref{blfn} requires $|\Gamma|$ parameters, which can be viewed as the area of the green box in Fig. \ref{ominusfig}. By contrast, the above representation only requires $|\Gamma|\ominus|\Gamma:\Lambda|$ anchor points, which can be viewed as the area of the gray region in Fig. \ref{ominusfig}. The more efficient representation allows the learning of complex functions from few data points, especially in high dimensional applications.  

We demonstrate the above local function  representation result in a 2D setting in Fig. \ref{func_curve}. Specifically, the original band-limited function is with bandwidth $13\times 13$. The direct representation of the function has $13\times 13 = 169$ degrees of freedom. Now, if we only care about points on a curve which is with bandwidth $3\times 3$, then the same function living on the curve can be represented exactly using 48 anchor points, thus significantly reducing the degrees of freedom. However, note that the above representation is only exact on the curve. We note that the function goes to zero as one moves away from the curve.

\begin{figure}[!h]
		\centering
		\subfloat[Curve]{\includegraphics[width=0.21\textwidth]{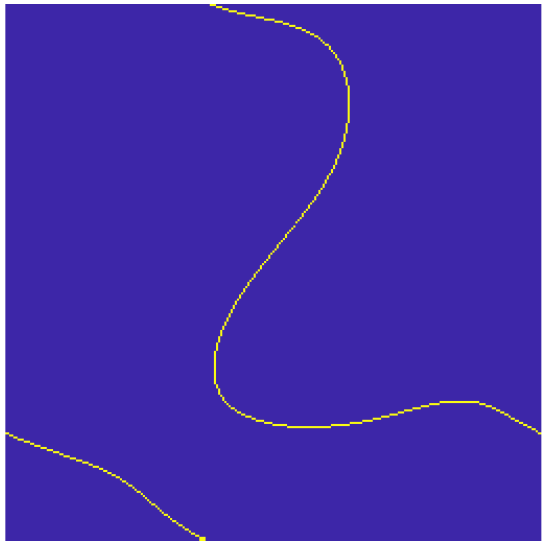}}\hspace{4ex}
		\subfloat[band-limited function]{\includegraphics[width=0.25\textwidth]{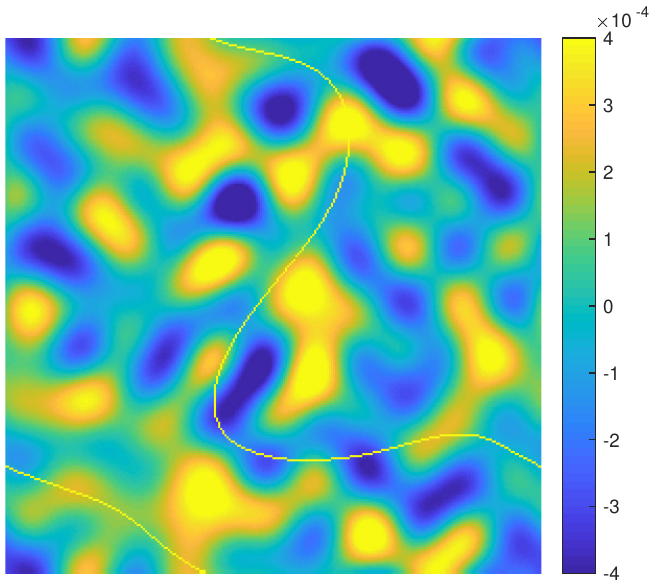}}\hspace{4ex}
		\subfloat[Function on curve]{\includegraphics[width=0.25\textwidth]{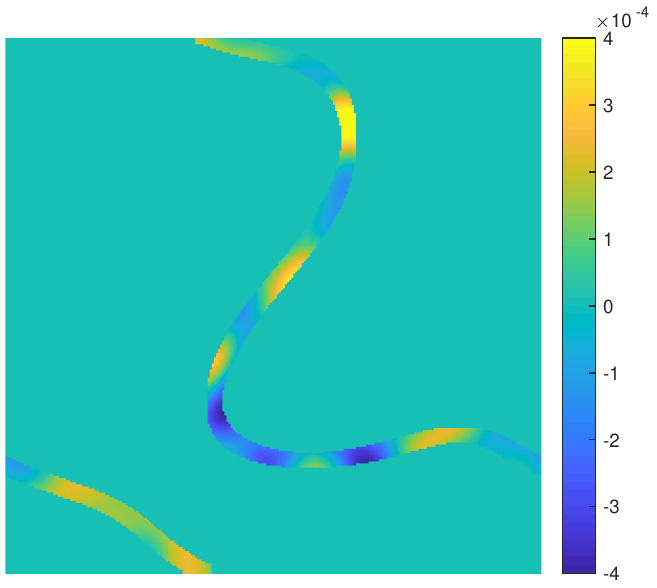}}\\
		\subfloat[Anchor points]{\includegraphics[width=0.21\textwidth]{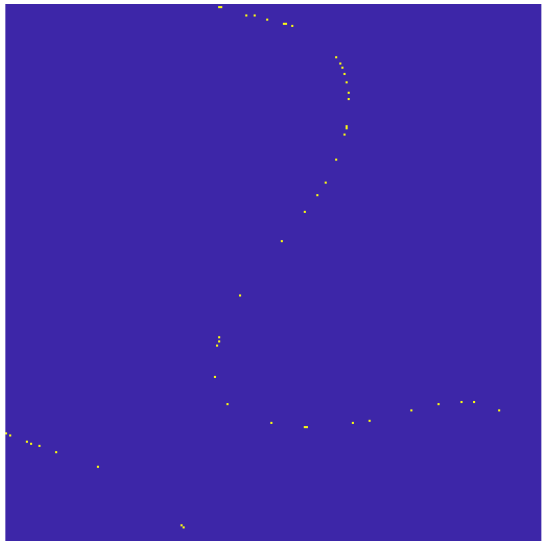}}\hspace{4ex}
		\subfloat[Approximation]{\includegraphics[width=0.25\textwidth]{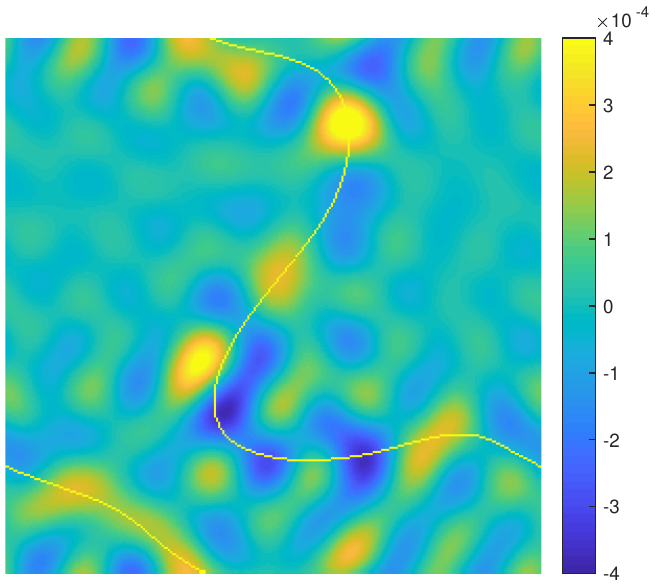}}\hspace{4ex}
		\subfloat[Approx on curve]{\includegraphics[width=0.25\textwidth]{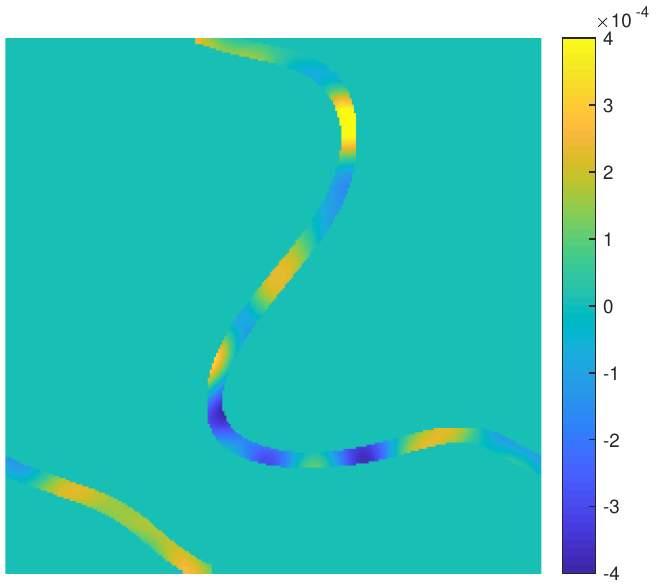}}
		\caption{Illustration of the local representation of functions in 2D. We consider the local approximation of the band-limited function in (b) with a bandwidth of $13\times13$, living on the band-limited curve shown in (a). The bandwidth of the curve is $3\times3$. The curve is overlaid on the function in (b) in yellow. The restriction of the function to the vicinity of the curve is shown in (c). Our results suggest that the local function approximation requires $13^2 - 11^2 = 48$ anchor points. We randomly select the points on the curve, as shown in (d). The interpolation of the function values at these points yields the global function shown in (e). The restriction of the function to the curve in (f) shows that the approximation is good.}
\label{func_curve}
	\end{figure}

The choice of anchor points depends on the geometry of the surface, including the number of irreducible components. For arbitrary training samples, we can estimate the unknowns $\mathbf F$ in \eqref{mtxform} from the linear relations
\begin{eqnarray}\label{key}
\underbrace{\left[\mathbf y_1,..\mathbf y_P\right]}_{\mathbf Y} &=& \mathbf F~ \underbrace{[\boldsymbol \alpha(\mathbf x_1),\ldots,\boldsymbol \alpha(\mathbf x_P)]}_{\mathbf Z}
\end{eqnarray}
as $\mathbf F = \mathbf Y\mathbf Z^H\left(\mathbf Z\mathbf Z^H\right)^{\dag}$. The above recovery is exact when we have $N=r$ achor points  because $\mathbf{Z}$ has full column rank in this case. The reason why $\mathbf{Z}$ has full column rank is due to \eqref{linearcomb} and \eqref{coeffs}. Equation \eqref{linearcomb} suggests that ${\rm{rank}}(\mathbf{Z}) \ge N$, while equation \eqref{coeffs} shows ${\rm{rank}}(\mathbf{Z}) \le N$. Therefore, we have ${\rm{rank}}(\mathbf{Z}) = N$, indicating that $\mathbf{Z}$ has full rank in this case. When $N>r$, the $\mathbf{F}$ is obtained using the pseudo-inverse, which is based on the least square approximation.

\subsection{Efficient computation using \emph{kernel trick}}
\label{kerneltrick}
We use the \emph{kernel-trick} to evaluate $\mathcal K\left(\mathbf A\right)$ and $\mathbf k_{\mathbf A}(\mathbf x)$, thus eliminating the need to explicitly evaluating the features of the anchor points and $\mathbf x$. Each entry of the matrix $\mathcal K\left(\mathbf A\right)$ is computed as in \eqref{kernelmtx}, while the vector 
$\mathbf{k}_{\mathbf A}(\mathbf x)$ is specified by:
\begin{eqnarray}\label{proj}
(\mathbf{k}_{\mathbf A}(\mathbf x))_i &=& \underbrace{\Phi_{\Gamma}(\mathbf{a}_i)^H \Phi_{\Gamma}(\mathbf{x})}_{\kappa(\mathbf a_i,\mathbf x)},
\end{eqnarray}
which can be evaluated efficiently as nonlinear function $\kappa$ (termed as kernel function) of their inner-products in $\mathbb R^n$. We now consider the kernel function $\kappa$ for specific choices of lifting.

Using the lifting in \eqref{lifting}, we obtain the kernel as 
\[\kappa(\mathbf{x},\mathbf{y})= \sum_{\mathbf{k}\in\Gamma}\exp(j2\pi\mathbf{k}^T(\mathbf{y}-\mathbf{x})).\]
Note that the kernel is shift invariant in this setting. Since $\kappa: \mathbb R^n\rightarrow \mathbb R$ is an $n$ dimensional function, evaluating and storing it is often challenging in multidimensional applications. We now focus on approximating the kernel efficiently for fast computation. We consider the impact of the shape of the bandwidth set $\Gamma$ on the shape of the kernel. Specifically, we consider sets of the form
\begin{equation}
\Gamma = \{\mathbf k \in\mathbb{Z}^n,||\mathbf{k}||_q\le d \},
\end{equation}
where $d$ denotes the size of the bandwidth. The integer $q$ specifies the shape of $\Gamma$ \cite{weisz2012summability}, which translates to the shape of the kernel
\begin{equation}\label{dirich}
k_{d,n}^q(\mathbf{x}):=\sum_{\mathbf{k}\in\mathbb{Z}^n,||\mathbf{k}||_q\le d}\exp(j2\pi \mathbf{k}^T\mathbf{x}).
\end{equation}
 We term the $q=1$ case as the diamond Dirichlet kernel. If $q=2$, we call it the circular Dirichlet kernel. We call the Dirichlet kernel the cubic Dirichlet kernel if $q=\infty$. See Figure \ref{kernelshape} for the bandwidth and Figure \ref{kerneltype} to see the associated kernel.
\begin{figure}[!h]
	\centering
	\subfloat[$q=1$]{\includegraphics[width=0.21\textwidth]{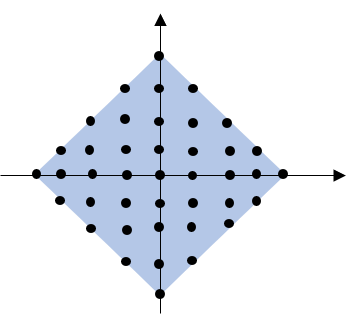}}\hspace{5em}
	\subfloat[$q=2$]{\includegraphics[width=0.21\textwidth]{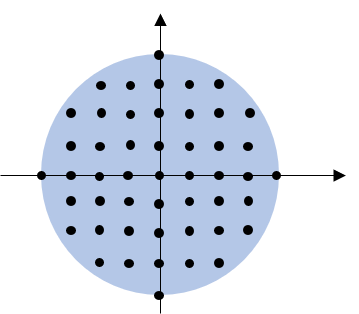}}\hspace{5em}
	\subfloat[$q=\infty$]{\includegraphics[width=0.21\textwidth]{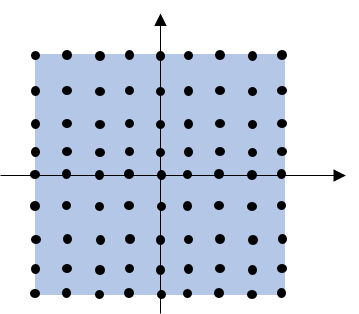}}
	\caption{bandwidth of the set $\Lambda$ with different $q$ values.}
	\label{kernelshape}
\end{figure}

\begin{figure}[!h]
	\centering
	\subfloat[Gaussian kernel]{\includegraphics[width=0.2\textwidth]{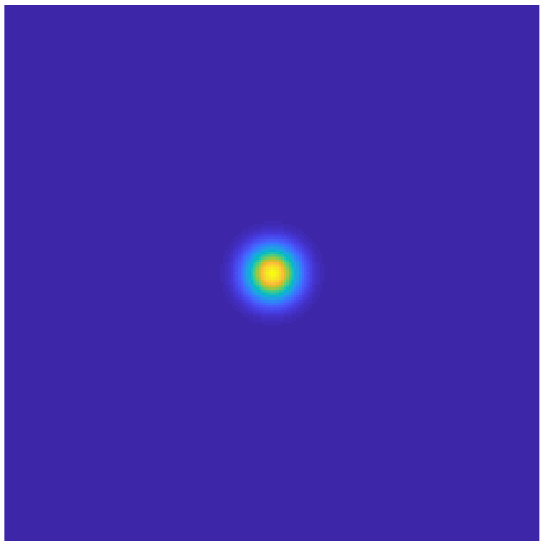}}\hspace{2em}
	\subfloat[Dirichlet with $q = 2$]{\includegraphics[width=0.2\textwidth]{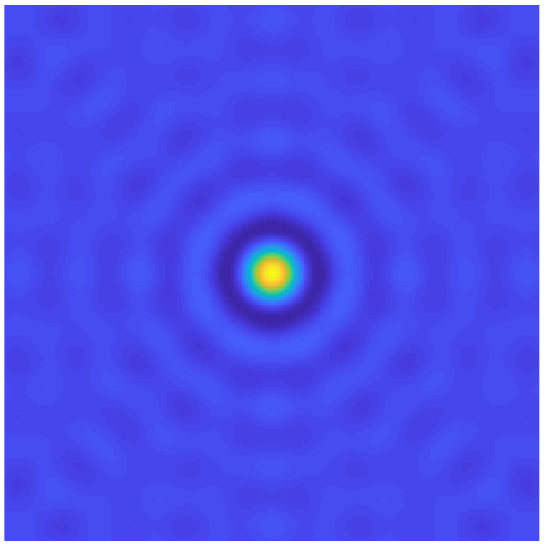}}\hspace{2em}
	\subfloat[Dirichlet with $q=\infty$]{\includegraphics[width=0.2\textwidth]{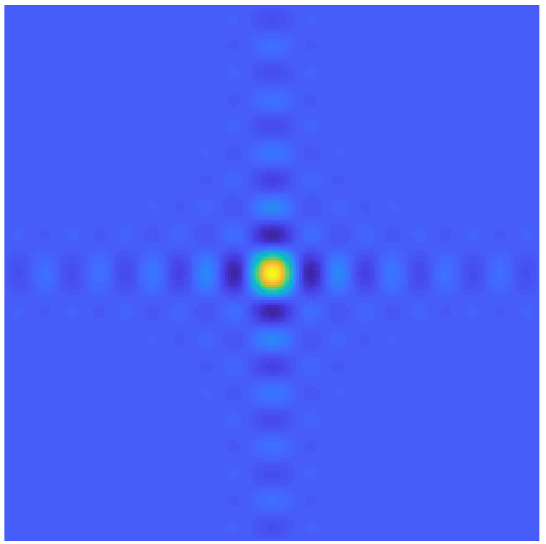}}\hspace{2em}
	\subfloat[Plot of $\gamma$]{\includegraphics[width=0.2\textwidth]{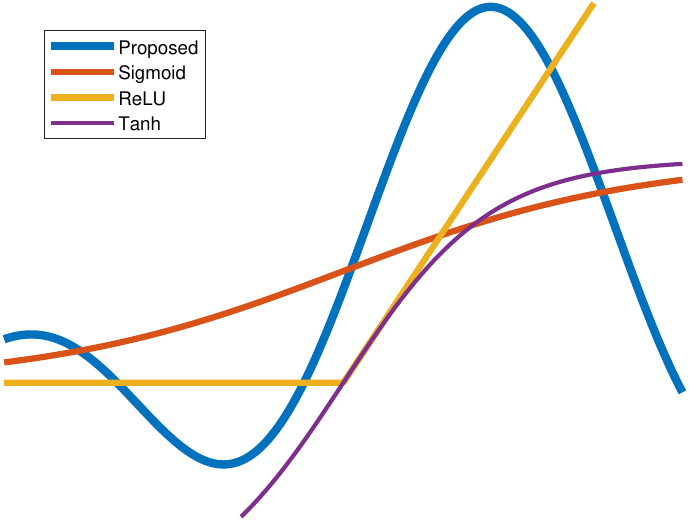}}
	\caption{Visualization of kernels in $\mathbb R^2$ and the non-linear function $\gamma$ with some commonly used activation functions.}
	\label{kerneltype}
\end{figure}

We note from the above figures that the circular Dirichlet kernel ($q=2$) is roughly circularly symmetric, unlike the triangular or diamond kernels. This implies that we can safely approximate it as 
\begin{equation}\label{key}
\kappa(\mathbf x,\mathbf y) \approx g(\|\mathbf x-\mathbf y\|^2)
\end{equation}
where $g:\mathbb R_{+} \rightarrow \mathbb R$. We note that this approximation results in significantly reduced computation in the multidimensional case. The function $g$ may be stored in a look-up table or computed analytically. We use this approach to speed up the computation of multidimensional functions in Section \ref{nn}.

An additional simplification is to assume that $\mathbf x$  and $\mathbf y$ are unit-norm vectors. In this case, we can approximate 
\begin{equation}\label{approx}
 g(||\mathbf{x}_i-\mathbf{y}_i||_2^2) =g(||\mathbf{x}_i||_2^2+||\mathbf{y}_i||_2^2-2\langle\mathbf{x}_i,\mathbf{y}_i\rangle)\approx g(2-2\langle\mathbf{x},\mathbf{y}\rangle)=:\gamma(\langle\mathbf{x},\mathbf{y}\rangle),
\end{equation}
where $\gamma(z) = g(1-z/2)$. Here, we term $\gamma$ as the activation function. While we do not make this simplifying assumption in our computations, it enables us to show the similarity of the computational structure of \eqref{fnrep} to current neural network. The plot of this activation function, along with commonly used activation functions, is shown in Figure \ref{kerneltype} (d).

With the aforementioned analysis, we can then rewrite \eqref{fnrep} as 
\begin{eqnarray}\label{compstruct}
\mathbf f(\mathbf{x}) &=& \underbrace{\left[\mathbf f_1,\ldots,\mathbf f_N\right]}_{\mathbf F} ~\mathcal K\left(\mathbf A\right)^{\dag} ~\underbrace{\begin{bmatrix}
	g(\|\mathbf x-\mathbf a_1\|^2)\\
	\vdots\\
	g(\|\mathbf x-\mathbf a_N\|^2)
	\end{bmatrix}}_{\mathbf k_{\mathbf A}(\mathbf x)}\\\label{nneq}
&\approx& \underbrace{\mathbf F~ \mathcal K(\mathbf A)^{\dag}}_{\widetilde{\mathbf F}}~\underbrace{\begin{bmatrix}
	\gamma\left(\left\langle\mathbf x,\mathbf a_1\right\rangle\right)\\
	\vdots\\
	\gamma\left(\left\langle\mathbf x,\mathbf a_N\right\rangle\right)
	\end{bmatrix}}_{\boldsymbol\Gamma_{\mathbf A}(\mathbf x )}
\end{eqnarray}
In the second step, we used the approximation in \eqref{approx}.

\begin{figure}[b!]
	\centering
	\subfloat[One layer network]{\includegraphics[width=0.4\textwidth]{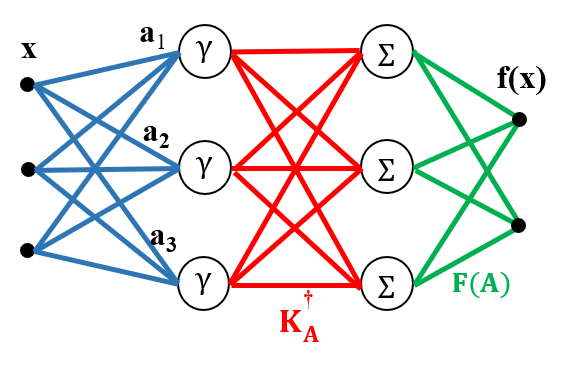}}~~~~~
	\subfloat[Two layers network]{\includegraphics[width=0.5\textwidth]{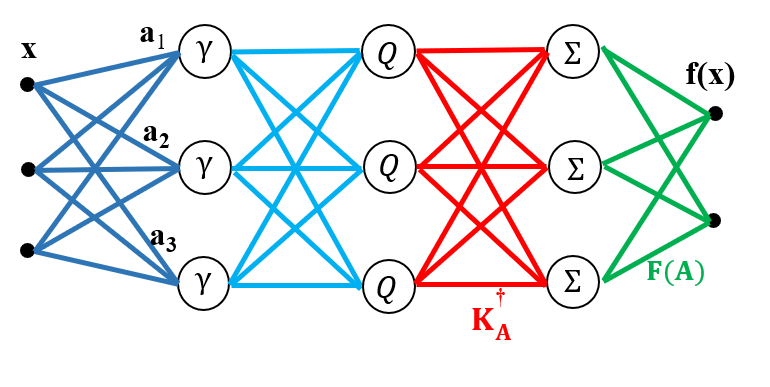}}
	\caption{Computational structure of function evaluation. (a) corresponds to \eqref{fnrep} to compute the band-limited multidimensional function $\mathbf f$ on $\mathcal S[\psi]$. The inner-product between the input vector $\mathbf x$ and the anchor templates on the surface are evaluated, followed by non-linear activation functions $\gamma$ to obtain the coefficients $\alpha_i(\mathbf x)$. These coefficients are operated with the fully connected linear layers $\mathbf K_{\mathbf A}^{\dag}$ and $\mathbf F(\mathbf A)$. The fully connected layers can be combined to obtain a single fully connected layer $\widetilde{\mathbf F}$. Note that this structure closely mimics a neural network with a single hidden layer.  (b) uses an additional quadratic layer, which combines functions of a lower bandwidth to obtain a function of a higher bandwidth.}
	\label{nnetwwork}
\end{figure}

\subsection{Optimization of the anchor points and coefficients}
	\label{optimsection}
	The above results show the existence of a computational structure of the form \eqref{nneq} with $N$ anchor points $\mathbf a_1,..,\mathbf a_N$ on the surface and the corresponding coefficients $\tilde{\mathbf f_1},..,\tilde{\mathbf f_N}$ that can represent the function exactly. We note that the anchor points need not to be selected as a subset of the training data. We note that Corollary \ref{corrollary} guarantees $\mathcal K(\mathbf A)$ to have full column rank as $N=r$. However, the condition number of this matrix may be poor, depending on the choice of the anchor points. It may be worthwhile to choose the anchors such that the condition number of $\mathcal K(\mathbf A)$ is low, which will reduce the noise amplification in \eqref{coeffs}.

We hence propose to solve for the anchor points $\mathbf A$ and the corresponding coefficients $\widetilde{\mathbf F}$ such that it minimizes the least square error evaluated on the training data:
	\begin{equation}\label{training}
	{\widetilde{\mathbf F}^*,\mathbf A^*} = \arg \min_{\widetilde{\mathbf F},\mathbf A} \sum_{i=1}^{N_{\rm train}}\|\widetilde{\mathbf F} ~\boldsymbol\Gamma_{\mathbf A}(\mathbf x_i ) - \mathbf y_i\|^2 
	\end{equation}
	We propose to minimize the above expression using stochastic gradient descent. This approach will allow the choice of the anchor points $\mathbf a_1,..,\mathbf a_N$.

\section{Relation to neural networks}
\label{nn}
We now briefly discuss the close relation of the proposed framework with neural networks. We consider the function learning setting, which is considered in Section \ref{funlearn} and show that the computational structure closely mimics a neural network with one hidden layer. We discuss briefly the benefits of depth in improving the representation. We also show that the above framework can be used to approximate the learning of a manifold from data, which can be viewed as a signal subspace alternative to the null-space approach considered in Section \ref{sampling}. We also show that the computational structure closely mimics an auto-encoder.

\subsection{Task/function learning from input output pairs }
We now focus on the learning of a function \eqref{compstruct} from training data pairs and will show its equivalence with neural networks. 
 Note that the computation involves the inner product of the input signal $\mathbf x$ with templates $\mathbf a_i; i = 1,..,N$, followed by the non-linear activation function $\gamma$ to obtain $\mathbf k_{\mathbf A}(\mathbf x)$. These terms are then weighted by the fully connected layer $\mathcal K\left(\mathbf A\right)^{\dag} $, followed by weighting by the second fully connected layer $\widetilde{\mathbf F}$. See Fig. \ref{nnetwwork} for the visual illustration.

As noted above, the representation using anchor points to reduce the degrees of freedom significantly compared to the direct representation. However, we note that the number of parameters needed to represent a high bandwidth function in high dimensions is still high. We now provide some intuition on how the low-rank tensor approximation of functions and composition can explain the benefit of common operations in deep networks.

We now consider the case when the band-limited multidimensional function $f:\mathbb R^n\rightarrow \mathbb R$ in \eqref{blfn} can be approximated as 
\begin{equation}\label{blfactor}
f(\mathbf x) = \left(\sum w_i~f_i(\mathbf x)\right)^2.
\end{equation}
Clearly, the bandwidth of $f$ is almost twice that of $f_i:\mathbb R^n\rightarrow \mathbb R$, showing the benefit of adding layers. While an arbitrary function with the same bandwidth as $f$ cannot be represented as in \eqref{blfactor}, one may be able to approximate it closely. The new layer will have a quadratic non-linearity $Q$, if the function has the form \eqref{blfactor}. Note that one may use arbitrary non-linearity in place of the quadratic one in \eqref{blfactor}. 

Similarly, one may perform a low-rank tensor approximation of an arbitrary $N$ dimensional function $f:\mathbb R^n\rightarrow \mathbb R$. Specifically, the approximation involves the sum of products of 1-D functions.
\begin{equation}\label{lr}
f(x_1,..,x_N) \approx \sum_{i=1}^{r} h_1^{(i)}(x_1)\cdot ~h_2^{(i)}(x_2)\ldots ~h_N^{(i)}(x_N),
\end{equation}
where $h_i:\mathbb R\rightarrow \mathbb R$. The above sum of products can also be realized by taking weighted linear combination of 1-D functions, followed by a non-linearity as in \eqref{blfactor}. This allows one to have a hierarchical structure, where lower dimensional functions are pooled together to represent a multidimensional function. 

In image processing applications, the functions to be learned are shift-invariant. This allows one to learn functions of small image patches (e.g. $3\times3$) of a specified dimension at each layer. The functions on nearby pixels in the output thus correspond to information from different $3\times3$ neighborhoods. The low-dimensional functions from non-overlapping $3\times3$ neighborhoods could be combined with downsampling as in \eqref{lr} to represent a high dimensional function (e.g. $9\times9$) neighborhoods. The process can be repeated to improve the efficiency of representation.

\subsection{Relation to auto-encoders }

We note that the space of band-limited functions of the form \eqref{blfn} can reasonably approximate lower order polynomials in $\mathbb R^n$ for sufficiently high bandwidth $\Gamma$ \cite{sorevik2016trigonometric}. In particular, let us assume that there exists a set of coefficients $\boldsymbol \beta$ such that
\begin{equation}\label{poly}
\mathbf x \approx \tilde{\mathbf x}= \sum_{\mathbf k\in \Gamma} \beta_{\mathbf{k}} \exp(j2\pi \mathbf k^T \mathbf x)
\end{equation}
In this case, the above results imply that one can represent any point on the surface $\mathcal S[\psi]$ as 
\begin{eqnarray}\label{auto}
\mathbf x&\approx& \underbrace{\left[\mathbf a_1,..,\mathbf a_n\right]}_{\mathbf A} ~\underbrace{\mathcal K\left(\mathbf A\right)^{\dag} ~\mathbf k_{\mathbf A}(\mathbf x)}_{\boldsymbol{\alpha}(\mathbf x)}
\end{eqnarray}
We note that the resulting network is hence essentially an auto-encoder. Specifically, the inner-products between the feature vectors of $\mathbf x$ and the anchor point $\mathbf a_i$ denoted by $\mathbf \alpha(\mathbf x) $ can be viewed as the latent features or compact code. As described previously, the coefficients $\boldsymbol \alpha = \mathcal K(\mathbf A)^{\dag} \mathbf{k_A}(\mathbf x)$ captures the geometry of the surface, while the top layer $\mathbf A$ is the decoder that recover the signal from its latent vectors. 

We note that the surface recovery algorithms in Section \ref{sampling} follow a null-space approach, where we identify the null-space of the feature space or equivalently the annihilation functions from the samples of the surface. Specifically, the sum of squares of the null-space functions in Section \ref{sos} provides a measure of the error in projecting the feature vector to the null-space of the feature matrix. 
\begin{eqnarray}\label{nullspaceprior}
\gamma(\mathbf{x}) &=& \sum_{i=1}^{|\Gamma\ominus\Lambda|}|\mu_i(\mathbf{x})|^2 =\sum_{i=1}^{|\Gamma\ominus\Lambda|}|\mathbf n_i^{T} \Phi_{\Gamma}(\mathbf{x})|^2\\
&=&  \|\mathbf N~ \Phi_{\Gamma}(\mathbf x)\|^2
\end{eqnarray}
where $\mathbf n_{i}$ are the null-space vectors. The projection energy is zero if  the point $\mathbf x$ is on $\mathcal S$ and is high when it is far from it. 

By contrast, the auto-encoder approach can be viewed as a signal subspace approach, where we project the samples to the basis vectors specified by the feature vectors of the anchors $\Phi_{\Gamma}(\mathbf a_{i})$. Specifically, we use the non-linearity specified by $\eqref{approx}$ and trained the network parameters ($\mathbf A$ as well as the weights of the inner-products) using stochastic gradient descent. The training data corresponds to randomly drawn points on the surface. To ensure that the network learns a projection, we trained the network as a denoising auto-encoder; the inputs correspond to samples on the surface corrupted with Gaussian noise, while the labels are the true samples. Once the training is complete, we plot the approximation error 
\begin{equation}
E(\mathbf x) = \|\mathbf x - \mathbf F \mathcal K(\mathbf A)^{\dag} \mathbf k_{\mathbf A}(\mathbf x)\|^2 = \|\underbrace{\left(\mathbf I - \mathbf F \mathcal K(\mathbf A)^{\dag} \mathbf k_{\mathbf A}\right)}_{\mathcal R}(\mathbf x)\|^2
\end{equation}
as a function of the input point in Fig. \ref{signalspace}.

We trained the network using the exemplar curve shown in Fig. \ref{prop_over}. We randomly choose 1000 points on the curve as the training data and 250 features are chosen in the middle layer. The bandwidth of the Dirichlet kernel is chosen to be $15$. The trained network is then used to learn the curve. The learned results are shown in Fig. \ref{signalspace}. From which one can see that the proposed learning framework performs well. We note that the projection error is close to zero on the surface, while it is high if it is away from the surface. Note that this closely mimics the plot in Fig. \ref{prop_over}. Once trained, the surface can be estimated in low-dimensional settings as the zero set of the projection error as shown in Fig. \ref{signalspace}.(b), which closely approximates the true curve in (c). We note that $\mathcal R$ can be viewed as a residual denoising auto-encoder. Once trained, this network can be used as a prior in inverse problems as in \cite{aggarwal2018modl}, where we have used the null-space network in Section \ref{nn}. We have also used the null-space prior \eqref{nullspaceprior} in our prior work \cite{poddar2019manifold}, where the null-space basis was learned as described in Section \ref{noisy}.

\begin{figure}[!h]
	\centering
	\subfloat[Learned curve]{\includegraphics[width=0.24\textwidth]{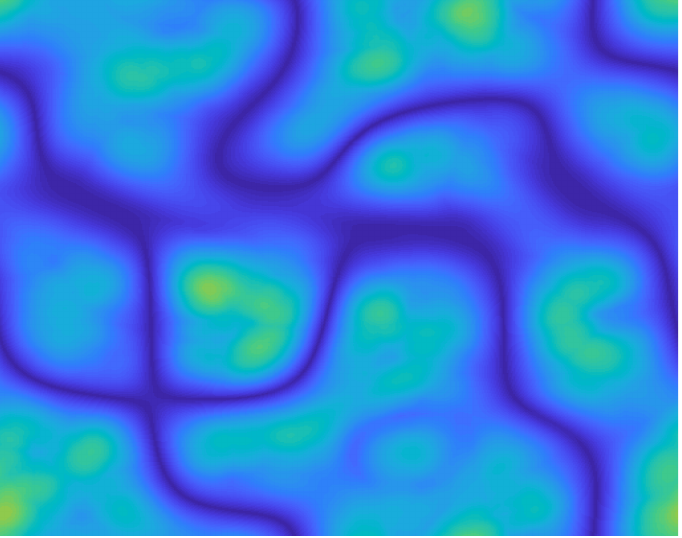}}\hspace{3em}
	\subfloat[Contour line]{\includegraphics[width=0.24\textwidth]{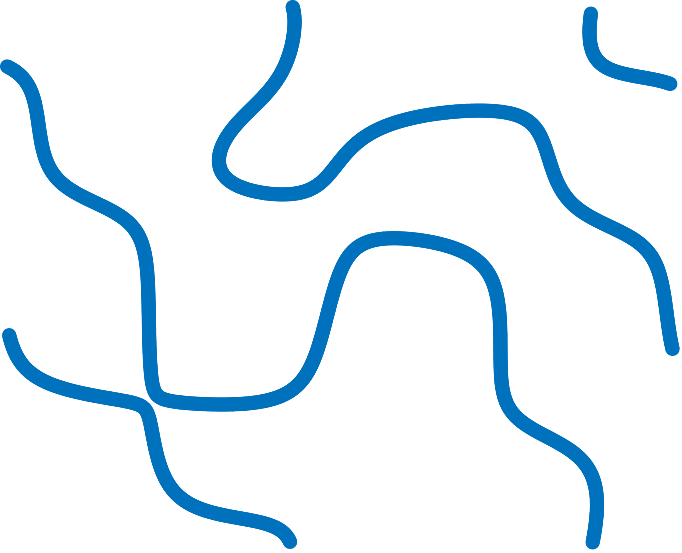}}\hspace{3em}
	\subfloat[Original curve]{\includegraphics[width=0.24\textwidth]{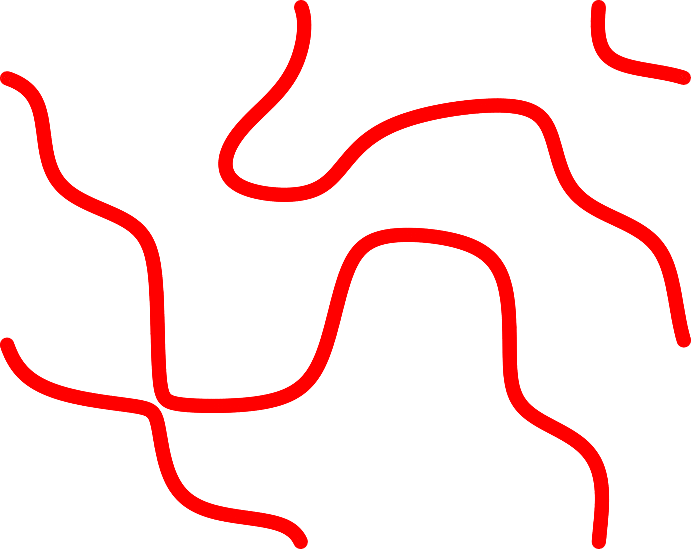}}
	\caption{Illustration of the surface learning network using the curve in Fig. \ref{prop_over}. (a) and (b) are the learned results. We compared the learned curve (blue curve) with the original curve (red curve) in (c). From which we see that the two curves are almost the same, indicating that the learned network performs well.}
	\label{signalspace}
\end{figure}

\section{Illustration in denoising}

We now illustrate the preliminary utility of the proposed network in image denoising. Specifically, we consider the learning of a function $f:\mathbb R^{p^2}\rightarrow \mathbb R$, which predicts the denoised center pixel of a patch from the noisy $p\times p$ patch. The function $f$ in $p^2$ dimensional space is associated with a large number of free parameters; learning of these unknowns are challenging due to the curse of dimensionality. Then the result in the previous section offers a work-around, which suggests that the function can be expressed as the linear combination of the features of ``anchor-patches'', weighted by $\mathbf{p}$.

We propose to learn the anchor patches $\mathbf a_i$ and the function values $f(\mathbf a_i)$ from exemplar data using stochastic gradient descent to minimize \eqref{training}. Note that the learned representation is valid for any patch, and hence the proposed scheme is essentially a convolutional neural network. The difference of our structure in \eqref{nneq} with the commonly used convolutional neural networks (CNN) structure is the activation function $\gamma$. We replaced the ReLU non-linearity in a network with the proposed function $\gamma$ in a single layer network. For the two-layer network, we replaced the ReLU non-linearity with $\gamma$ and $Q$ as indicated in \eqref{blfactor}.

We first tested the performance of the network on the MNIST dataset \cite{lecun2010mnist}. In the experiments, we choose the patch size to be $7\times 7$ and $d=7$ in \eqref{dirich}. We also trained a ReLU network with the same parameters for comparison. Besides, we compared the proposed scheme against non-local means (NLM) and dictionary learning (DL) \cite{elad2006image}. All algorithms, except for NLM were trained using the MNIST training set provided in TensorFlow. For the proposed network and the ReLU network, they are trained using 300 epoches and for the dictionary learning method, 500 iterations are used to learn the dictionaries. The comparison of the testing results is shown in Figure \ref{denoisecomp}. The comparison of the PSNR is reported in the caption.  The results show that the neural network based approaches offer improved performance compared to dictionary learning and non-local methods. Our results also show that the proposed networks provide comparable, if not slightly better performance, compared to the ReLU networks. The results also show the slight improvement in performance offered by the proposed two-layer networks over single layer networks. 

\begin{figure}[h!]
	\centering
	\includegraphics[width=0.7\textwidth]{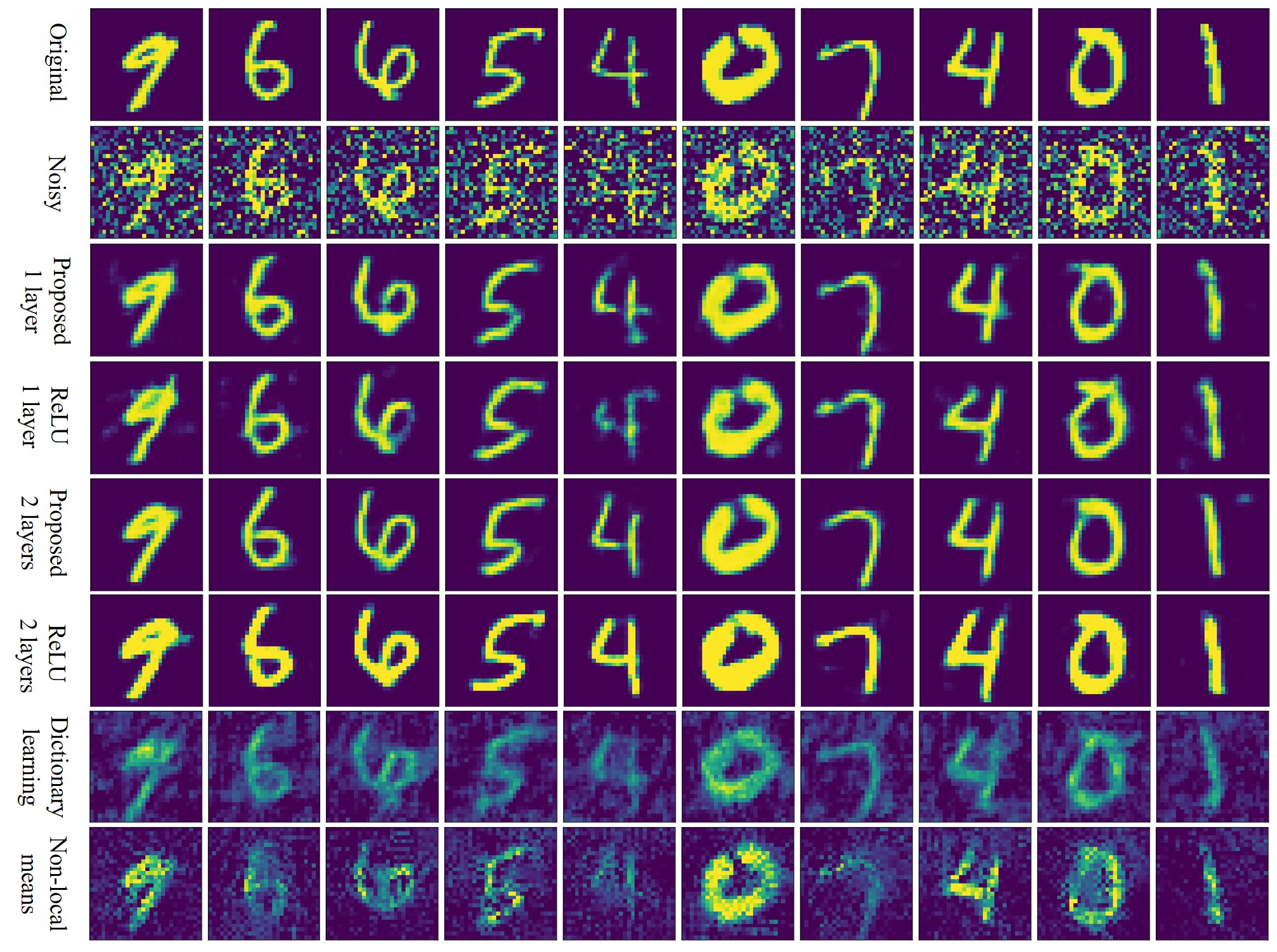}
	\caption{Comparison of our learned denoiser using the proposed activation function and the ReLU activation function. The testing results show that the denoising performance using the proposed activation function is comparable to the performance using ReLU. The eight rows  in the figure correspond to the original images, the noisy images, the denoised images using the proposed one-layer network, the denoised images using one layer ReLU network, the denoised images using the proposed two-layer network, the denoised images using two-layer ReLU network, the denoised images using dictionary learning and  the denoised images using non-local means. The averaged PSNR of the denoised images using the proposed one-layer network, one layer ReLU network, proposed two-layer network, two-layer ReLU network, dictionary learning and non-local means are $19.68$ dB, $20.03$ dB, $20.86$ dB, $17.48$ dB, $14.76$ dB and $14.28$ dB respectively. From the quantitative results, we can see that our proposed one-layer network performs comparable to the one-layer ReLU network. For the proposed two-layer network, the performance is getting better from both quantitative and visual points of view. For the two-layer ReLU network, visually the performance is better than that of the one-layer ReLU network. But the PSNR is getting worse. The main reason that causes the low PSNR for the two-layer ReLU network is the change of the pixel values on each hand-written digit.}
	\label{denoisecomp}
\end{figure}

The size of the image in the MNIST dataset is small. To better demonstrate the performance of the proposed network, we also applied the proposed scheme to the denoising of natural images. The algorithm was trained on the images of Hill, Cameraman, Couple, Bridge, Barbara and Boat at three different noise settings. We assume the noise is Gaussian white noise in the natural images setting. We compared the proposed scheme against dictionary learning (DL), non-local means (NLM) and transform learning (TL) \cite{ravishankar2013learning}. In the experiments for natural images, the patch size is chosen as $9\times 9$ and $d=7$ in \eqref{dirich}. For the proposed network and the ReLU network, they are trained using 300, 400, 450 epoches corresponding to the noise level $\sigma = 10, 20, 100$, and for the dictionary learning method, 500 iterations are used to learn the dictionaries. We then tested the denoising performance on two natural images: Man and Lighthouse.

The quantitative results (PSNR) of the algorithm are shown in Table \ref{snr}, while the results on Man and Lighthouse with noise of standard deviation $\sigma=20$ are shown in Fig. \ref{denoisecomp2} and Fig. \ref{denoisecomp3}. In Table \ref{snr}, Fig. \ref{denoisecomp2} and Fig. \ref{denoisecomp3}, ``ReLU1'' and ``ReLU2'' represent one-layer ReLU network and two-layer ReLU network, while ``Proposed1'' and ``Proposed2'' stand for the proposed one-layer network and proposed two-layer network. The results show that the performance of the neural network schemes is superior to classical methods and the proposed networks provide comparable or slightly better performance than the ReLU networks.

\begin{table}[!h]
	\begin{center}
		\begin{tabular}{|c || c | c | c | c | c | c | c | c | }
			\hline
			Img. & $\sigma$ & DL & NLM & TL & ReLU1 & ReLU2 & Proposed1 & Proposed2 \\
			\hline \hline
			& 10 & 26.63 & 26.64 & 27.41 & 30.29 & {31.11} & 30.99 & \bf{31.19} \\
			Man & 20 & 26.11 &26.35 & 27.02 & 27.47 & 27.33 & 27.25 & \bf{27.63} \\
			& 100 & 19.69 & 20.95 & 21.65 & 21.85 & \bf{22.11} & 21.91 & {22.06} \\
			\hline
			& 10 & 27.08 & 29.08 & 28.71 & 28.88 & 29.27 & 30.05 & \bf{30.28} \\
			Lighthouse & 20 & 25.51 &25.21 & 25.92 & 26.25 & 26.33 & 26.69 & \bf{26.74} \\
			& 100 & 19.14 & 20.14 & 20.15 & 20.21 & 20.46 & 20.35 & \bf{20.47} \\
			\hline
		\end{tabular}
		\caption{The PSNR (dB) of the denoised results for the two testing natural images with different noise level.}
		\label{snr}
	\end{center}
\end{table}
\begin{figure}[!h]
	\centering
	\subfloat[Original]{\includegraphics[width=0.15\textwidth]{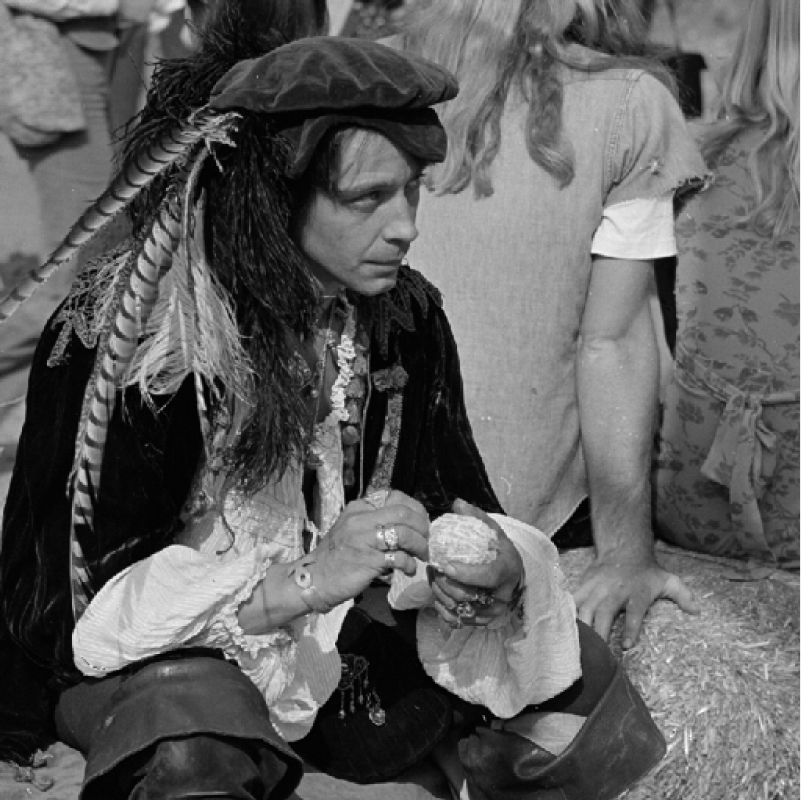}}
	\subfloat[Noisy 22.11 dB][Noisy \\ 22.11 dB]{\includegraphics[width=0.15\textwidth]{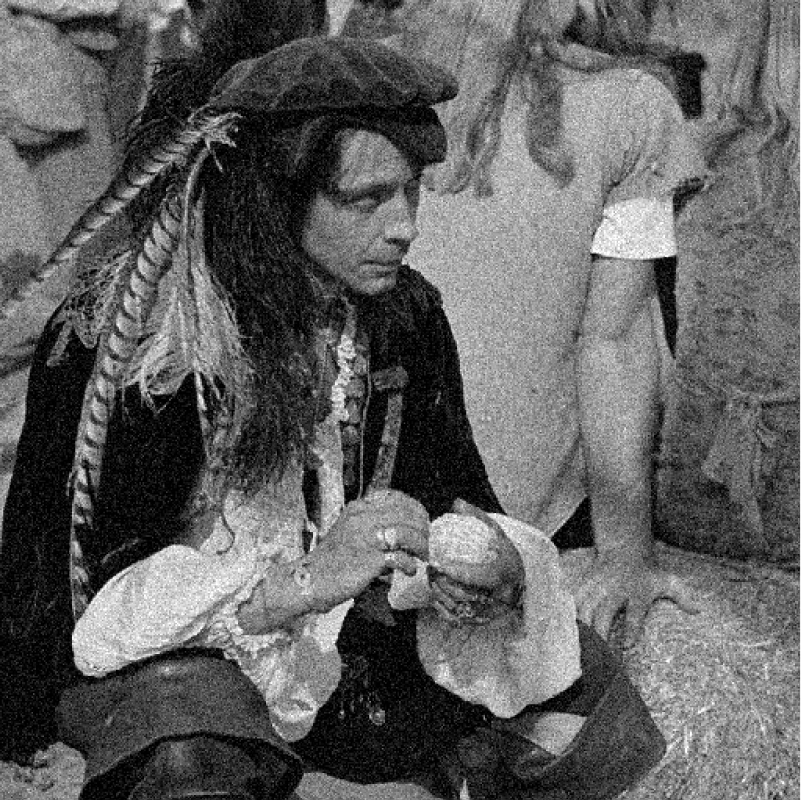}}
	\subfloat[DL 26.11 dB][DL \\ 26.11 dB]{\includegraphics[width=0.15\textwidth]{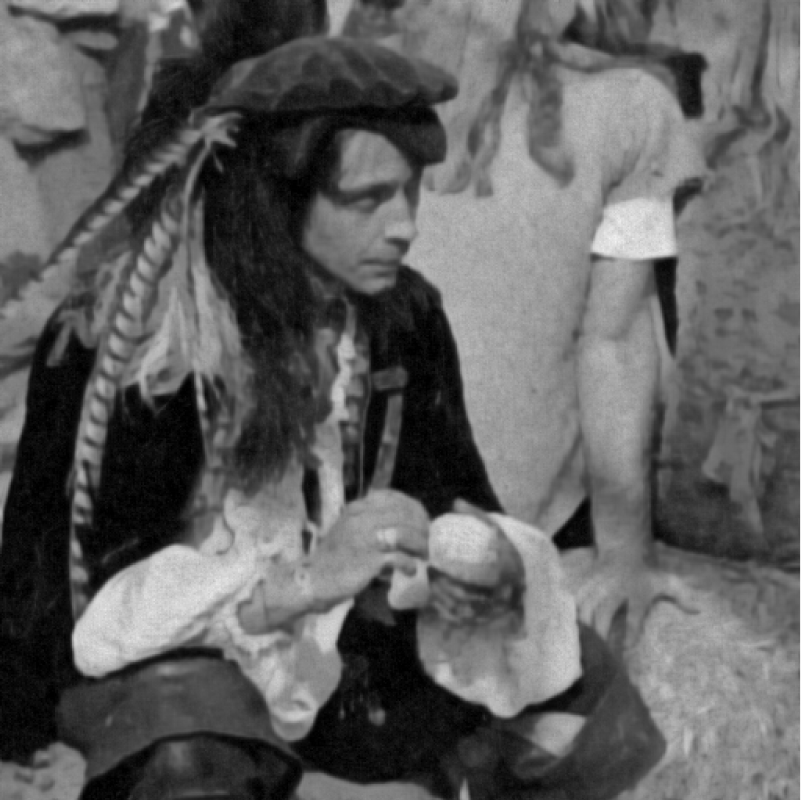}}
	\subfloat[NLM 26.35 dB][NLM \\ 26.35 dB]{\includegraphics[width=0.15\textwidth]{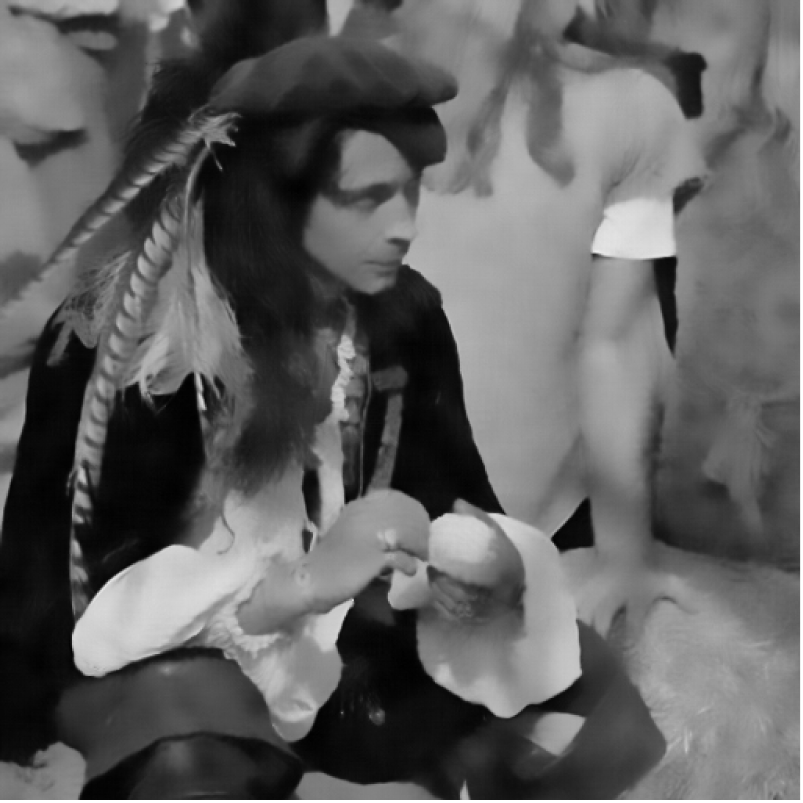}}
	\subfloat[TL 27.02 dB][TL \\ 27.02 dB]{\includegraphics[width=0.15\textwidth]{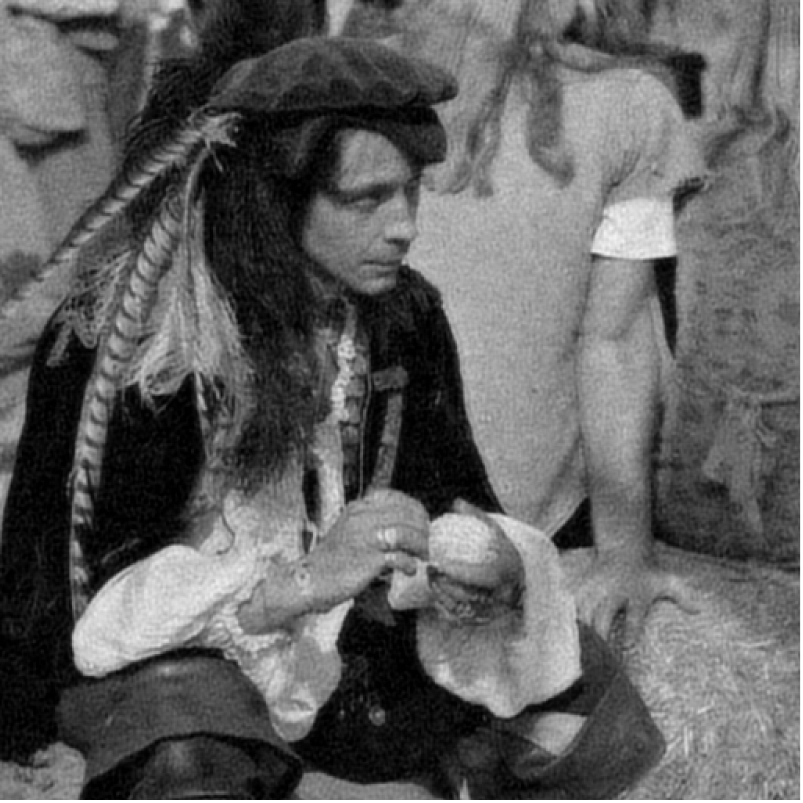}}
	\subfloat[ReLU2 27.33 dB][ReLU2 \\ 27.33 dB]{\includegraphics[width=0.15\textwidth]{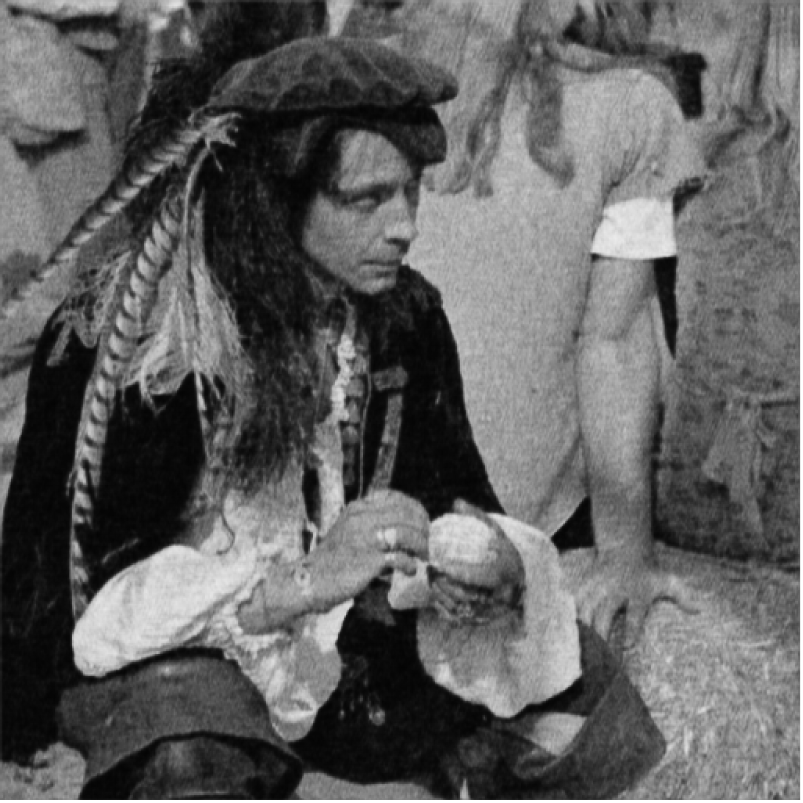}}
	\subfloat[Proposed2 27.63 dB][Proposed2 \\ 27.63 dB]{\includegraphics[width=0.15\textwidth]{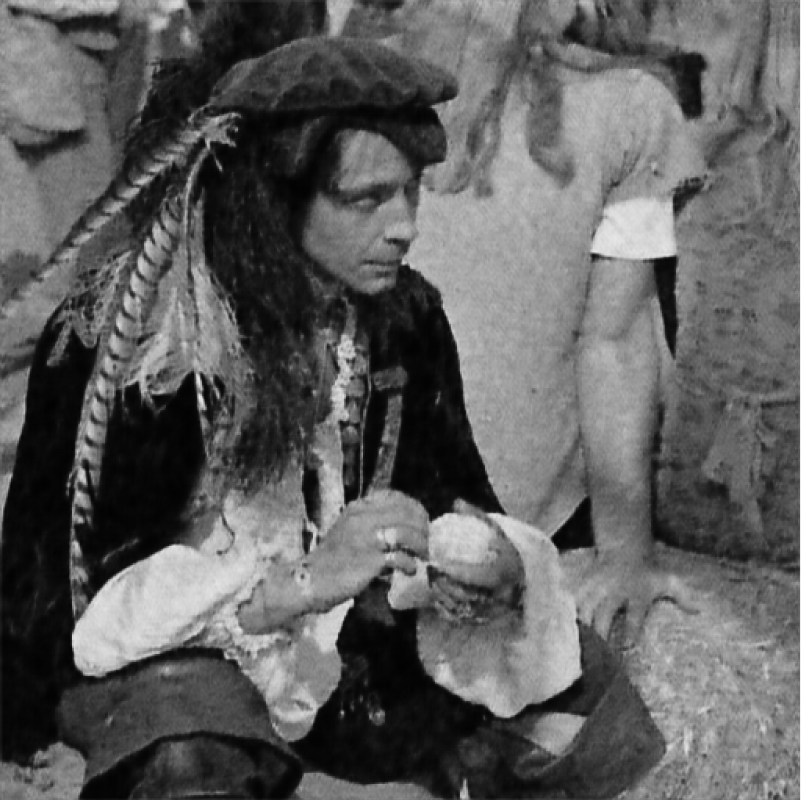}}\\
	\subfloat[Original]{\includegraphics[width=0.15\textwidth,trim=50 160 210 40,clip]{man_orig}}
	\subfloat[Noisy]{\includegraphics[width=0.15\textwidth,trim=50 160 210 40,clip]{man_20_noisy}}
	\subfloat[DL]{\includegraphics[width=0.15\textwidth,trim=50 160 210 40,clip]{man_20_dl}}
	\subfloat[NLM]{\includegraphics[width=0.15\textwidth,trim=50 160 210 40,clip]{man_20_nl}}
	\subfloat[TL]{\includegraphics[width=0.15\textwidth,trim=50 160 210 40,clip]{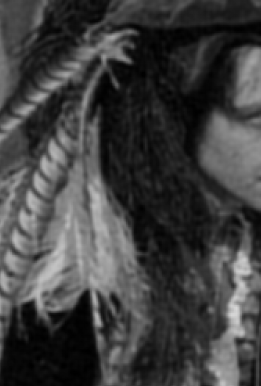}}
	\subfloat[ReLU2]{\includegraphics[width=0.15\textwidth,trim=50 160 210 40,clip]{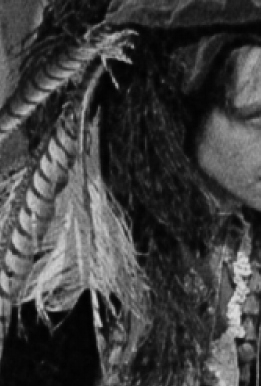}}
	\subfloat[Proposed2]{\includegraphics[width=0.15\textwidth,trim=50 160 210 40,clip]{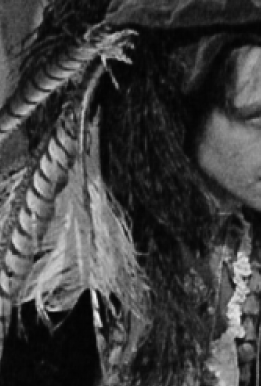}}\\
	\caption{Comparison of the proposed denoising algorithms on the image ``Man'' with $\sigma=20$.}
	\label{denoisecomp2}
\end{figure}
\begin{figure}[!h]
	\centering
	\subfloat[Original]{\includegraphics[width=0.15\textwidth]{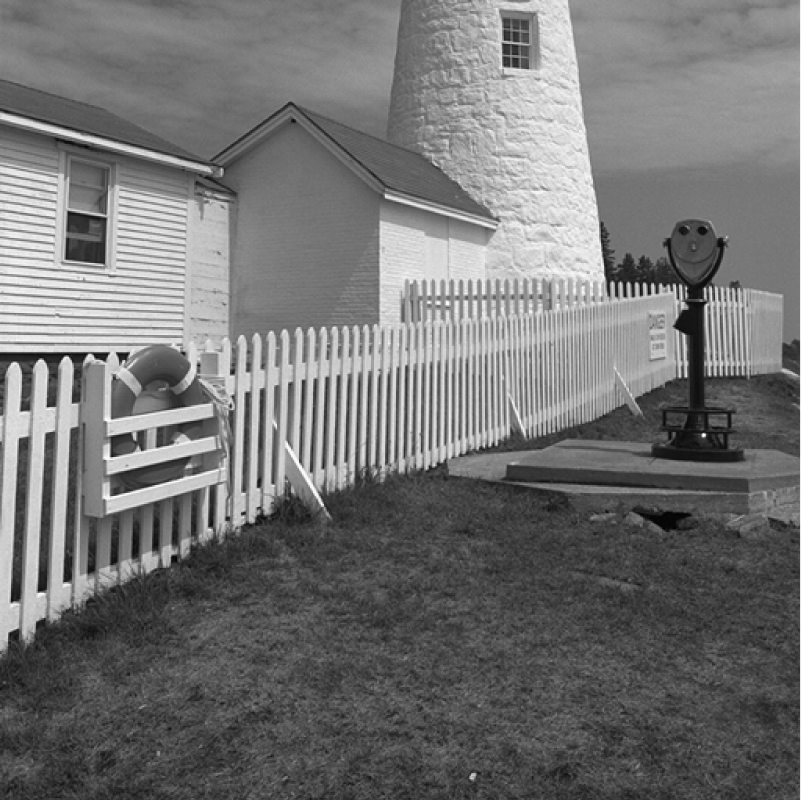}}
	\subfloat[Noisy 22.09 dB][Noisy \\ 22.12 dB]{\includegraphics[width=0.15\textwidth]{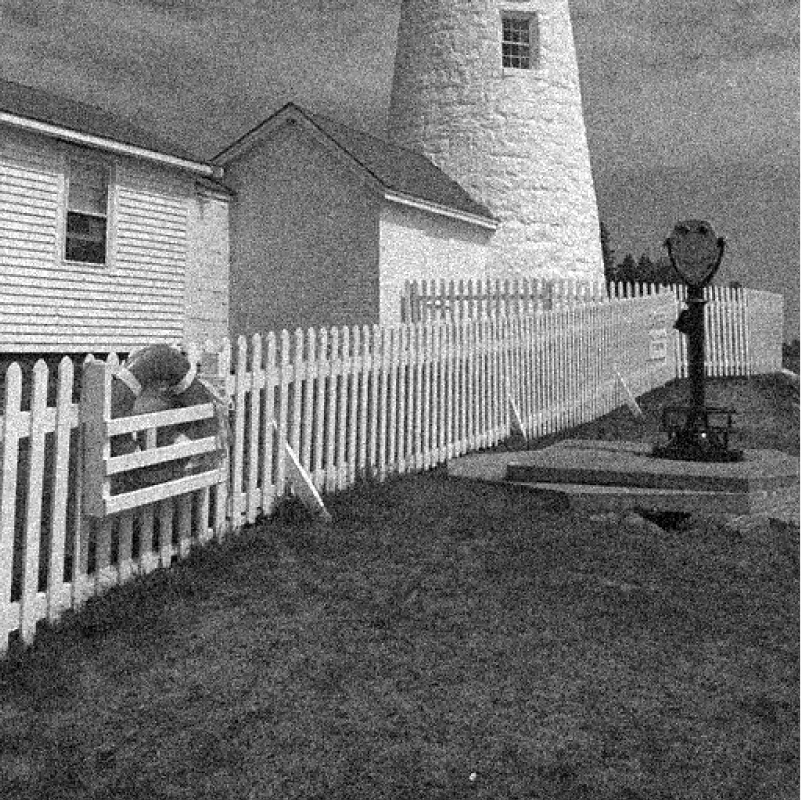}}
	\subfloat[DL 25.51 dB][DL \\ 25.51 dB]{\includegraphics[width=0.15\textwidth]{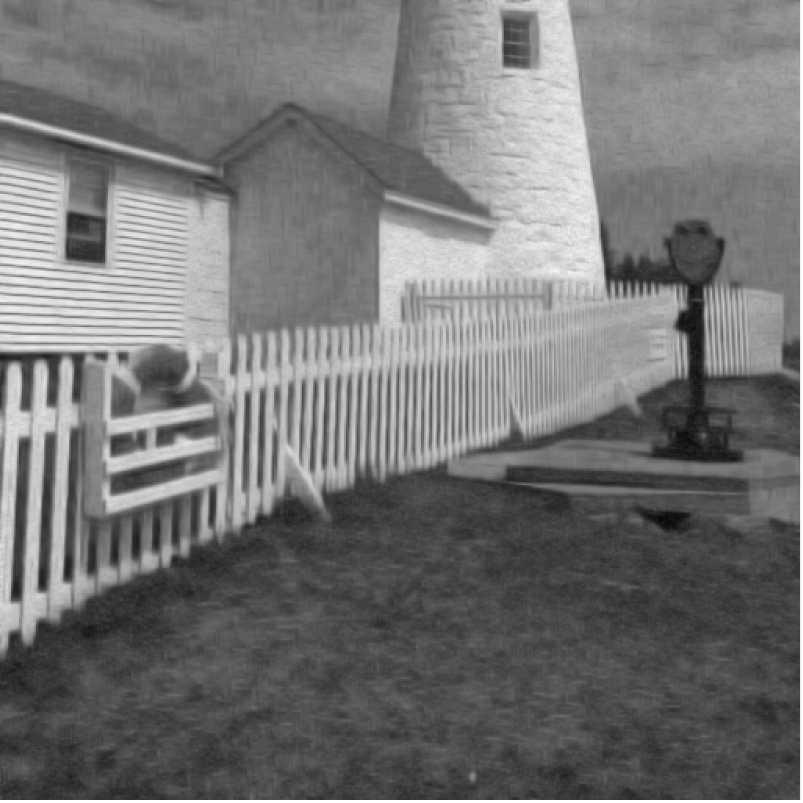}}
	\subfloat[NLM 25.21 dB][NLM \\ 25.21 dB]{\includegraphics[width=0.15\textwidth]{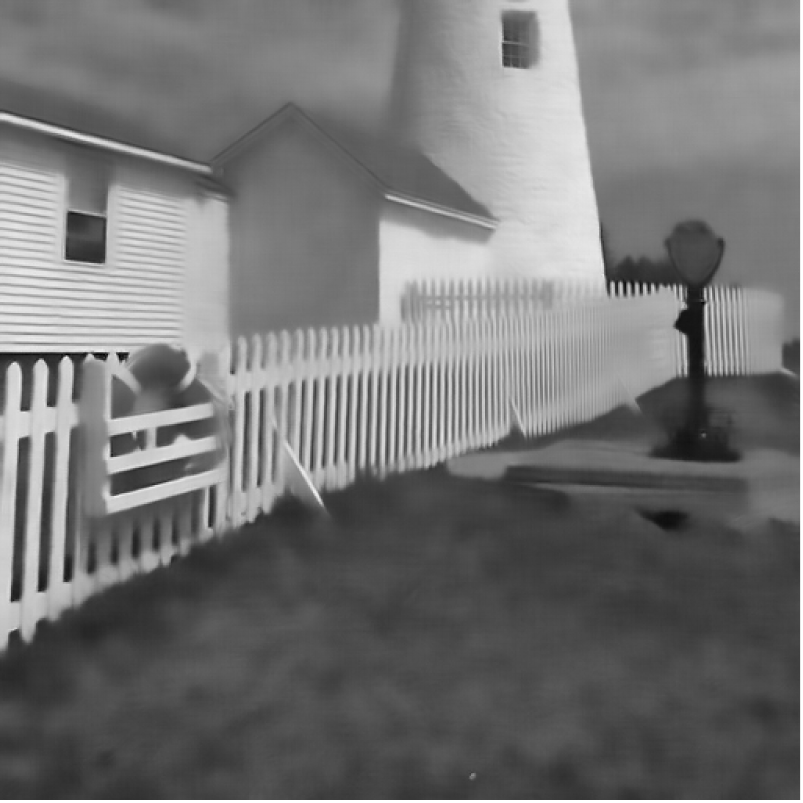}}
	\subfloat[TL 25.92 dB][TL \\ 25.92 dB]{\includegraphics[width=0.15\textwidth]{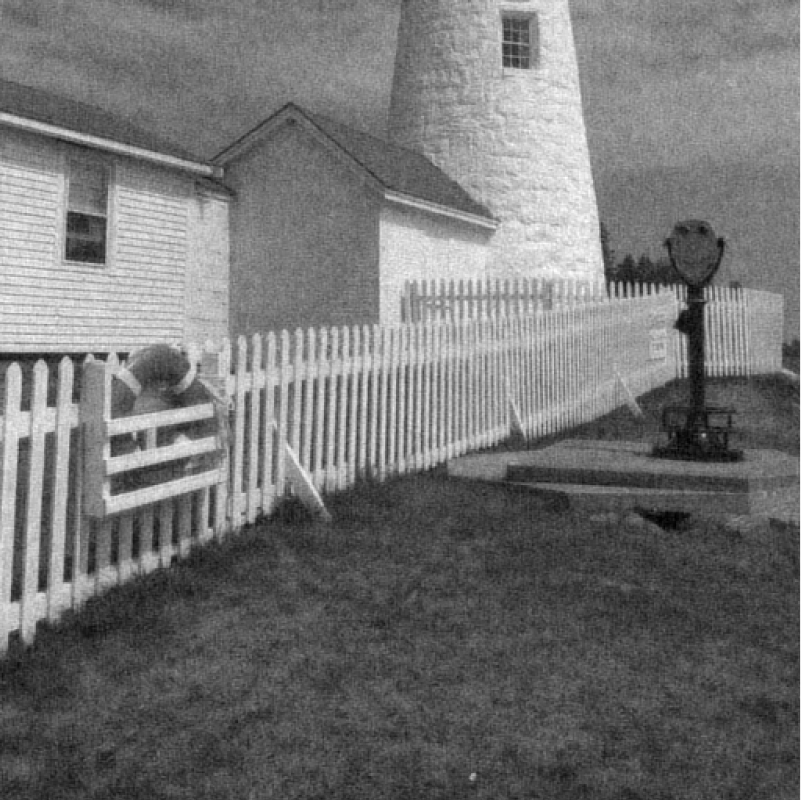}}
	\subfloat[ReLU2 26.33 dB][ReLU2 \\ 26.33 dB]{\includegraphics[width=0.15\textwidth]{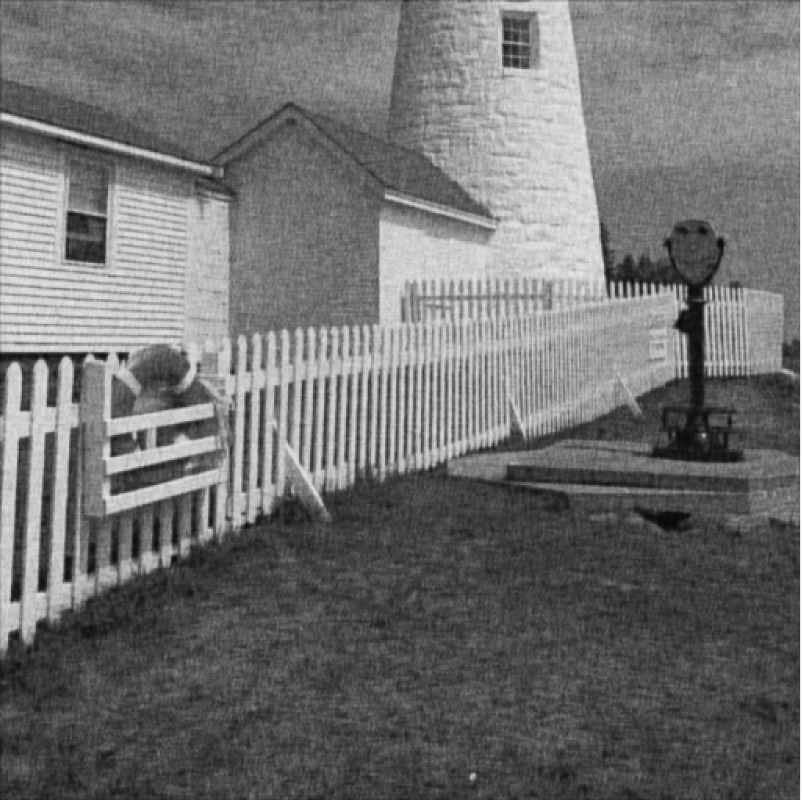}}
	\subfloat[Proposed2 26.74 dB][Proposed2 \\ 26.74 dB]{\includegraphics[width=0.15\textwidth]{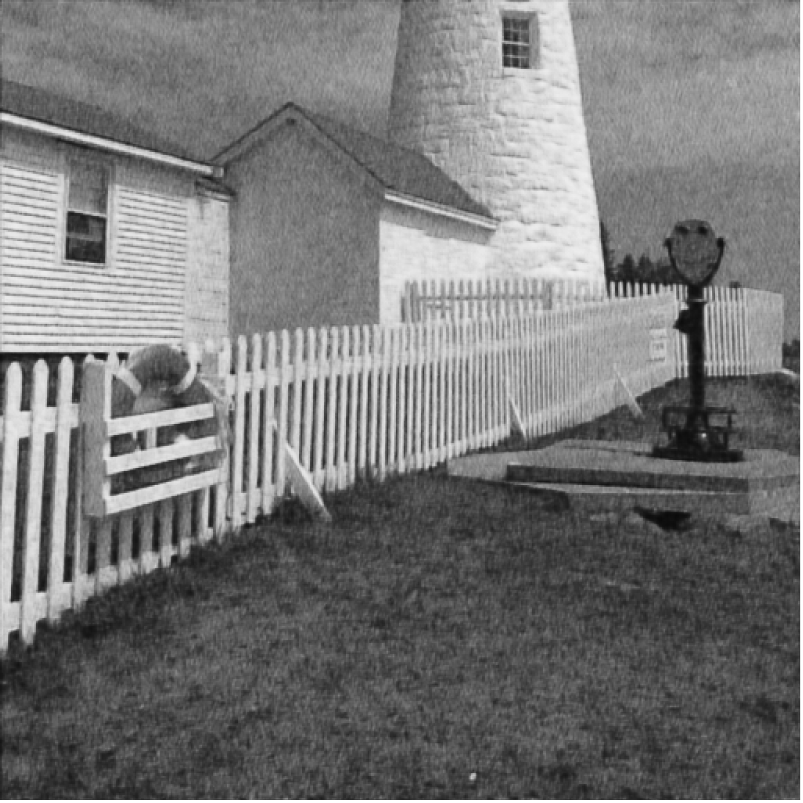}}\\
	\subfloat[Original]{\includegraphics[width=0.15\textwidth,trim=150 80 150 180,clip]{lighthouse_20_orig}}
	\subfloat[Noisy]{\includegraphics[width=0.15\textwidth,trim=150 80 150 180,clip]{lighthouse_20_noisy}}
	\subfloat[DL]{\includegraphics[width=0.15\textwidth,trim=150 80 150 180,clip]{lighthouse_20_dl}}
	\subfloat[NLM]{\includegraphics[width=0.15\textwidth,trim=150 80 150 180,clip]{lighthouse_20_nl}}
	\subfloat[TL]{\includegraphics[width=0.15\textwidth,trim=150 80 150 180,clip]{lighthouse_20_trans}}
	\subfloat[ReLU 2]{\includegraphics[width=0.15\textwidth,trim=150 80 150 180,clip]{lighthouse_20_r2}}
	\subfloat[Proposed 2]{\includegraphics[width=0.15\textwidth,trim=150 80 150 180,clip]{lighthouse_20_p2}}
	\caption{Comparison of the proposed denoising algorithms on the image ``Lighthouse'' with $\sigma=20$.}
	\label{denoisecomp3}
\end{figure}

\section{Conclusion}

In this work, we considered a data model, where the signals are localized to a surface that is the zero level set of a band-limited function $\psi$. The bandwidth of the function can be seen as a complexity measure of the surface. We show that the non-linear features of the samples, obtained by an exponential lifting, satisfy an annihilation relation. Using the annihilation relation, we developed theoretical sampling guarantees for the unique recovery of the surface. Our main contribution here is to prove that with probability 1, the surface can be uniquely recovered using a collection of samples, whose number is equal to the degrees of freedom of the representation. When the true bandwidth of the surface is unknown, which is usually the case, we introduced a method using the SoS polynomial to specify the surface. We also introduced the way to get back the samples when the original samples are corrupted by noise.

We then use this model to efficiently represent arbitrary band-limited functions $f$ living on the surface. We show that the exponential features of the points on the surface live in a low-dimensional subspace. This subspace structure is used to represent the $f$ efficiently using very few parameters. We note that the computational structure of the function evaluation mimics a single-layer neural network. We applied the proposed computational structure to the context of image denoising.

\section{Appendix}

\subsection{Proof of Proposition \ref{uniqueminimal}}\label{minimalsection}

As we mentioned in Section \ref{pp1section}, if we have a (hyper-)surface $\mathcal{S}$ which is given by the zero level set of a trigonometric polynomial, then there will be a minimal polynomial which defines $\mathcal{S}$ (Proposition \ref{uniqueminimal}). To prove this result, we need the following famous result.

\begin{lem}[Hilbert's Nullstellensatz \cite{atiyah1994introduction}]
Let $\mathbb K$ be an algebraically closed field (for example $\mathbb{C}$). Suppose $I \subset \mathbb K[x_{1},\cdots, x_n]$ is an ideal of polynomials, and $\mathcal{Z}(I)$ denotes the set of common zeros of all the polynomials in $I$. Let $\mathcal{I}(\mathcal{Z}(I))$ represents the ideal of polynomials in $\mathbb{K}[x_1,\cdots, x_n]$ vanishing on $\mathcal{Z}(I)$. Then, we have
\[\mathcal{I}(\mathcal{Z}(I)) = \sqrt{I},\]
where $\sqrt I$ denotes the radical of $I$, specified by the set 
\begin{equation}\label{radical}
\sqrt I = :\{p | p^{n} \in I, ~\mbox{for some}~  n\in\mathbb{Z}^+\}
\end{equation}
\end{lem}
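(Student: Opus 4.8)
The plan is to establish the strong Nullstellensatz in two stages: first the \emph{weak} Nullstellensatz, and then deduce the radical statement from it via the Rabinowitsch trick. One inclusion is immediate: $\sqrt{I} \subseteq \mathcal{I}(\mathcal{Z}(I))$, since if $p \in \sqrt{I}$ then $p^m \in I$ for some $m$, so $p^m$ vanishes on every point of $\mathcal{Z}(I)$, and because $\mathbb{K}$ is a field (hence an integral domain) $p^m(\mathbf{a}) = 0$ forces $p(\mathbf{a}) = 0$; thus $p \in \mathcal{I}(\mathcal{Z}(I))$. All of the content therefore lies in the reverse inclusion $\mathcal{I}(\mathcal{Z}(I)) \subseteq \sqrt{I}$.

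For the weak form I would show that every maximal ideal $\mathfrak{m}$ of $R = \mathbb{K}[x_1,\dots,x_n]$ has the form $(x_1 - a_1, \dots, x_n - a_n)$ for some $\mathbf{a} \in \mathbb{K}^n$. The quotient $L = R/\mathfrak{m}$ is a field that is finitely generated as a $\mathbb{K}$-algebra by the images of the $x_i$. The essential algebraic input is \textbf{Zariski's lemma}: a field that is finitely generated as an algebra over a field $\mathbb{K}$ is a finite algebraic extension of $\mathbb{K}$. Since $\mathbb{K}$ is algebraically closed, this forces $L = \mathbb{K}$; letting $a_i \in \mathbb{K}$ be the image of $x_i$ gives $x_i - a_i \in \mathfrak{m}$, and as $(x_1-a_1,\dots,x_n-a_n)$ is already maximal we obtain equality. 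Consequently any proper ideal is contained in some such $\mathfrak{m}$, so its zero set contains the point $\mathbf{a}$ and is nonempty. This is the main obstacle: Zariski's lemma is the genuinely deep ingredient, and I would prove it either through Noether normalization (realizing $L$ as a finite module over a polynomial subalgebra and arguing the subalgebra must reduce to $\mathbb{K}$) or through the Artin--Tate lemma; either route requires care.

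With the weak form in hand, I would finish using the Rabinowitsch trick. Since $R$ is Noetherian (Hilbert basis theorem), write $I = (g_1,\dots,g_s)$, and take $p \in \mathcal{I}(\mathcal{Z}(I))$. Introduce an auxiliary variable $t$ and form the ideal $J = (g_1,\dots,g_s,\, 1 - t\,p) \subseteq \mathbb{K}[x_1,\dots,x_n,t]$. If $(\mathbf{a}, b)$ were a common zero of $J$, then $\mathbf{a} \in \mathcal{Z}(I)$ would force $p(\mathbf{a}) = 0$, making $1 - b\,p(\mathbf{a}) = 1 \neq 0$, a contradiction; hence $\mathcal{Z}(J) = \emptyset$. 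By the weak Nullstellensatz $J$ is the whole ring, so
\[ 1 = \sum_{i=1}^{s} q_i\, g_i + q_0\,(1 - t\,p) \]
for some $q_i \in \mathbb{K}[x_1,\dots,x_n,t]$.

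Finally I would substitute $t = 1/p$ (formally, pass to the localization $R_p$ or the field of fractions), which annihilates the last term, and then clear denominators by multiplying through by a sufficiently high power $p^N$. This yields $p^N = \sum_i \tilde q_i\, g_i$ with $\tilde q_i \in R$, so $p^N \in I$ and therefore $p \in \sqrt{I}$, establishing the reverse inclusion and completing the proof. The steps following Zariski's lemma are essentially formal, so the difficulty is concentrated entirely in that lemma.
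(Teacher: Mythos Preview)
Your proof sketch is correct and follows the standard textbook route: the easy inclusion $\sqrt{I}\subseteq\mathcal I(\mathcal Z(I))$, the weak Nullstellensatz via Zariski's lemma, and then the Rabinowitsch trick to upgrade to the strong form. There is nothing to compare against, however, because the paper does not prove this lemma at all---it is stated as a quoted result from the cited reference (Atiyah--Macdonald) and used as a black box in the proof of the subsequent Lemma~\ref{equalpoly}. So your write-up supplies strictly more than the paper does here; just be aware that in the context of this paper no proof was expected.
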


\noindent {\it{Remark 1.}}\quad We say a set $I \subset K[x_1,\cdots, x_n]$ is an ideal, if $I$ is closed under the addition operation (e.g. addition``$+$''), satisfies the associative property, has a unit element $0$, and a valid inverse for every element in $I$. For the operation multiplication (e.g. ``$\cdot$''), we have $r\cdot p\in I$ and $p\cdot r\in I$ for any $r\in K[x_1,\cdots, x_n]$

\noindent{\it{Remark 2.}}\quad An important property of the radical of the ideal $I$ is that $I\subset \sqrt{I}$. Note that setting $n=1$ in \eqref{radical} will yield $I$.

\noindent{\it{Remark 3.}}\quad The above lemma states that the set of all polynomials that vanish on the common zeros $\mathcal Z(I)$ of the polynomials in $I$ is given by $\sqrt I \supset I$. Specifically, if we are given another polynomial $\eta(\mathbf x)$ that also vanishes on the common zero set $\mathcal{Z}(I)$, then there must be positive integer $n$ such that $\eta^n(\mathbf x) \in I$.

We denote the ideal generated by a function $f$ by $(f) = \{\mu| \mu=f\gamma\}$, where $\gamma$ is an arbitrary polynomial. The identity in this ideal is the zero polynomial. In particular, $(f)$ is the family of all functions that have $f$ as a factor. We note that the set of common zeros of all the functions in $(f)$, denoted by $\mathcal Z[(f)]$  is the same as the zero set of $f$, denoted by $Z[f]$.


\begin{lem}\label{equalpoly}
Let $f,g$ be two polynomials in $\mathbb{C}[x_1,\cdots, x_n]$ with the same zero set. Then the two polynomials must have (up to scaling) the same factors.
\end{lem}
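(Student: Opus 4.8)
The plan is to reduce the statement to the correspondence between ideals and varieties supplied by Hilbert's Nullstellensatz, together with unique factorization in $\mathbb{C}[x_1,\dots,x_n]$. First I would observe that since $f$ and $g$ have the same zero set $\mathcal Z := Z[f] = Z[g]$, each of them vanishes on the zero set of the other; applying the Nullstellensatz (in the form of Remark 3) to the principal ideal $(f)$, the fact that $g$ vanishes on $\mathcal Z[(f)] = Z[f]$ gives a positive integer $m$ with $g^m \in (f)$, i.e.\ $f \mid g^m$. Symmetrically there is a positive integer $\ell$ with $f \mid g^\ell$ — wait, I mean $g \mid f^\ell$.

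Next I would pass to irreducible factorizations. Write $f = u \prod_{i=1}^{s} p_i^{a_i}$ and $g = v \prod_{j=1}^{t} q_j^{b_j}$, where $u,v$ are nonzero constants, the $p_i$ and $q_j$ are irreducible (and, within each product, pairwise non-associate), and $a_i,b_j \ge 1$. Since $\mathbb{C}[x_1,\dots,x_n]$ is a unique factorization domain, the divisibility $f \mid g^m$ forces each irreducible factor $p_i$ of $f$ to be associate to some $q_j$; hence every $p_i$ appears (up to a constant) among the $q_j$. The symmetric divisibility $g \mid f^\ell$ gives the reverse inclusion, so $\{p_1,\dots,p_s\}$ and $\{q_1,\dots,q_t\}$ coincide up to associates and reindexing; in particular $s = t$ and, after relabelling, $q_i = c_i p_i$ for nonzero constants $c_i$. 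This is exactly the assertion that $f$ and $g$ have, up to scaling, the same factors.

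The one point that needs a genuine argument — and the natural place for a subtlety — is the claim that the common zero set $\mathcal Z[(f)]$ of the principal ideal $(f)$ equals $Z[f]$, so that the Nullstellensatz can be invoked with the hypothesis ``$g$ vanishes on $\mathcal Z[(f)]$.'' This is immediate: every element of $(f)$ is of the form $f\gamma$, which vanishes wherever $f$ does, so $Z[f] \subseteq \mathcal Z[(f)]$; and $f \in (f)$ gives $\mathcal Z[(f)] \subseteq Z[f]$. (This is precisely the observation recorded just before the lemma statement.) The main obstacle, such as it is, is purely bookkeeping: making sure the UFD argument correctly concludes \emph{equality of the sets of irreducible factors up to associates} rather than merely that the two polynomials share a common factor, which is why both divisibilities $f \mid g^m$ and $g \mid f^\ell$ are needed. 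Note that the lemma does \emph{not} claim the multiplicities $a_i$ and $b_i$ agree — only the radicals $\prod p_i$ agree up to scaling — which is consistent with the subsequent use of this lemma in the proof of Proposition \ref{uniqueminimal}.
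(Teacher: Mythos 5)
Your proposal is correct and follows essentially the same route as the paper's proof: apply Hilbert's Nullstellensatz to the principal ideals $(f)$ and $(g)$ to obtain $f\mid g^m$ and $g\mid f^\ell$, then conclude via unique factorization that the irreducible factors coincide up to associates. You are somewhat more explicit about the final UFD bookkeeping (and about why $\mathcal Z[(f)]=Z[f]$) than the paper, which simply asserts the conclusion from the two divisibilities, but the argument is the same.
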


\begin{proof}
Suppose $Z[f]=Z[g]= Z$ is the zero set of $f$ and $g$. Since $Z[f]=\mathcal Z[(f)]$, we have $\mathcal{Z}[(f)]= \mathcal{Z}[(g)] = Z$. By the Hilbert's Nullstellensatz, we have
\[\mathcal{I}(\mathcal{Z}(f)) = \sqrt{(f)},\qquad \mathcal{I}(\mathcal{Z}(g)) = \sqrt{(g)}.\]
Since $Z(f) = Z(g)$, we then have $\mathcal{I}(\mathcal{Z}(f)) = \mathcal{I}(\mathcal{Z}(g))$ and hence $\sqrt{(f)} = \sqrt{(g)}$. As mentioned above, we have $I\subset\sqrt{I}$ for any ideal $I$. Therefore, we have $(f) \subset \sqrt{(f)}$ and $(g)\subset\sqrt{(g)}$. This implies that $f\in \sqrt{(f)}$ and $g\in \sqrt{(g)}$. Because we have $\sqrt{(f)} = \sqrt{(g)}$, we can obtain that $f\in\sqrt{(g)}$ and $g\in\sqrt{(f)}$. By which we have that there exist $m,n\in\mathbb{Z}$ and $p,q\in\mathbb{C}[x_1,\cdots, x_n]$ such that
\[f^n = p\cdot g,\qquad g^m = q\cdot f.\]
Therefore, we can obtain that the irreducible factors of $g$ are of $f$ as well and vice versa, which proves the desired conclusion.
\end{proof}

With this conclusion, we can now prove Proposition \ref{uniqueminimal}.

\begin{proof}[Proof of Proposition \ref{uniqueminimal}]
The proof of the existence and uniqueness about $\psi$ is same as the proof of Proposition A.3 in \cite{ongie} and thus we omit them here.

In this proof, we show that $BW(\psi)\subseteq BW(\psi_1)$. Note that the algebraic surface $X = \{p=\mathcal{P}[\psi]=0\}$ is the union of irreducible surfaces $X_j = \{p_{i_j} = 0\}\subset\mathbb{C}^n$. Define 
\[\nu(x_1,\cdots, x_n) = (e^{j2\pi x_1},\cdots, e^{j2\pi x_n}).\]
Let $\mathcal{S}_j = \nu^{-1}(X_j\cap \mathbb{T}^n)$. Then we have a decomposition of $\mathcal{S}$ as the union of surfaces $\mathcal{S}_j$. If $\psi_1$ is another trigonometric polynomial with $\mathcal{S}$ as the zero level set as well. Then $\psi_1$ vanishes on each $\mathcal{S}_j$. Let $q = \mathcal{P}[\psi_1]$. Then we have $q= 0$ on the infinite set $\nu(\mathcal{S}_j)$, by which we can infer that $q$ and $p$ will have the same zero set using Theorem \ref{intersect}. Then by Lemma \ref{equalpoly}, we have $p \mid q$, which implies that $BW(\psi)\subseteq BW(\psi_1)$.
\end{proof}

\subsection{Proof of results in Section \ref{sampling}}\label{proofrankprop}

The key property of surfaces that we exploit is that the dimension of the intersection of two band-limited surfaces of dimension $k$ is strictly lower than $k$, provided their level set functions do not have any common factors. Hence, if we randomly sample one of the surfaces, the probability that the samples fall on the intersection of the two surfaces is zero. This result enables us to come up with the sampling guarantees. We will now show the results about the intersections of the zero sets of two trigonometric surfaces.

\subsubsection{Intersection of surfaces}
We will first state a known result about the intersection of the zero sets of two polynomials (non-trigonometric) whose level set functions do not have a common factor. 

\begin{thm}[\cite{gunning2009analytic},pp.115, Theorem 14]\label{intersect}
Let $\mathcal S[\psi]$ and $\mathcal{S}[\eta]$ be two surfaces of dimension $n-1$ over a field $\mathbb{K}$, which are the zero sets of the polynomials $\psi:\mathbb K^n\rightarrow \mathbb K$ and $\eta:\mathbb K^n\rightarrow \mathbb K$, respectively. If $\psi$ and $\eta$ do not have a common factor, then
\[\dim\big(\mathcal S[\psi]\cap \mathcal S[\eta]\big) <n-1.\]
\end{thm}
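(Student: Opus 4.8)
The plan is to reduce the statement to the basic dimension theory of algebraic sets over an algebraically closed field (which is the relevant case here, $\mathbb{K}=\mathbb{C}$), exploiting that $\mathbb{K}[x_1,\dots,x_n]$ is a unique factorization domain. First I would factor $\psi$ into its distinct irreducible factors, $\psi = p_1^{a_1}\cdots p_m^{a_m}$, so that $\mathcal{S}[\psi]=\bigcup_{i=1}^m \mathcal{S}[p_i]$ regardless of the multiplicities $a_i$. By Krull's principal ideal theorem, together with the fact that the height-one primes of the UFD $\mathbb{K}[x_1,\dots,x_n]$ are exactly the principal primes $(p)$ with $p$ irreducible, each $\mathcal{S}[p_i]$ is an irreducible algebraic set of dimension exactly $n-1$. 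Thus this decomposition writes $\mathcal{S}[\psi]$ as a finite union of irreducible hypersurfaces.

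Next I would fix an index $i$ and argue that $\eta$ cannot vanish identically on $\mathcal{S}[p_i]$. Since $(p_i)$ is prime, hence radical, Hilbert's Nullstellensatz gives $\mathcal{I}(\mathcal{S}[p_i]) = \sqrt{(p_i)} = (p_i)$. Therefore, if $\eta$ vanished on all of $\mathcal{S}[p_i]$ we would have $\eta\in(p_i)$, i.e. $p_i\mid\eta$; but then $p_i$ would be a common factor of $\psi$ and $\eta$, contradicting the hypothesis. Consequently $\mathcal{S}[p_i]\cap\mathcal{S}[\eta]$ is a Zariski-closed subset of $\mathcal{S}[p_i]$ that is \emph{strictly} contained in $\mathcal{S}[p_i]$.

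The final step uses that a proper closed subset of an irreducible algebraic set of dimension $d$ has dimension at most $d-1$; applied with $d=n-1$ this yields $\dim\big(\mathcal{S}[p_i]\cap\mathcal{S}[\eta]\big) \le n-2$ for every $i$. Since $\mathcal{S}[\psi]\cap\mathcal{S}[\eta] = \bigcup_{i=1}^m\big(\mathcal{S}[p_i]\cap\mathcal{S}[\eta]\big)$ is a finite union and dimension is the maximum over the pieces, we conclude $\dim\big(\mathcal{S}[\psi]\cap\mathcal{S}[\eta]\big)\le n-2 < n-1$, as claimed.

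The hard part will not be any single deep argument but rather assembling the right dimension-theoretic ingredients and citing them cleanly: that a nonconstant polynomial cuts out a pure codimension-one set (Krull's Hauptidealsatz in a UFD), that the ideal of an irreducible hypersurface is principal (Nullstellensatz plus primeness), and that passing to a proper closed subvariety strictly drops the dimension. One subtlety to address is the hypothesis of a general field $\mathbb{K}$: either I would assume $\mathbb{K}$ is algebraically closed — which is all that is needed in this paper, where $\mathbb{K}=\mathbb{C}$ — or interpret the dimensions via Zariski closures after base change to $\overline{\mathbb{K}}$, since over a non-closed field the zero sets themselves can behave degenerately (e.g. be empty).
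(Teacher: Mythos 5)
Your proof is correct for the case that actually matters here ($\mathbb K=\mathbb C$), but it takes a different route from the paper for the simple reason that the paper does not prove this statement at all: Theorem \ref{intersect} is imported verbatim from \cite{gunning2009analytic} (Gunning--Rossi, p.~115, Theorem 14) and used as a black box to feed Lemma \ref{dimlemma}. What you supply is a self-contained algebraic proof: factor $\psi$ into distinct irreducibles in the UFD $\mathbb K[x_1,\dots,x_n]$, use Krull's Hauptidealsatz to see each $\mathcal S[p_i]$ is an irreducible hypersurface of dimension exactly $n-1$, use the Nullstellensatz plus primeness of $(p_i)$ to conclude $\mathcal I(\mathcal S[p_i])=(p_i)$ so that $\eta$ vanishing on all of $\mathcal S[p_i]$ would force $p_i\mid\eta$, and finish with the fact that a proper closed subset of an irreducible variety has strictly smaller dimension and that dimension of a finite union is the maximum over the pieces. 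Each of these ingredients is standard and correctly deployed, and your argument is arguably better matched to the paper than the citation itself, since it reuses the Nullstellensatz machinery the paper already sets up in Appendix \ref{minimalsection}. Your caveat about general $\mathbb K$ is also well taken: as literally stated over an arbitrary field the theorem is problematic (zero sets can be empty or too small over non-closed fields), and the honest reading --- the one the paper implicitly uses after passing to $\mathcal P[\psi]$ on $\mathbb C^n$ --- is the algebraically closed one you prove. The only point worth making explicit if you wrote this up is that the ``proper closed subset has smaller dimension'' step genuinely requires irreducibility of the ambient piece, which is exactly why the reduction to the $\mathcal S[p_i]$ is needed first; you do this, so there is no gap.
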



The above result is a generalization of the two dimensional case ($\mathbb{C}^2$) in \cite{ongie}, where  B\'ezout's inequality was used to prove the result. 
Specifically, the result in \cite{ongie} suggests that the intersection of two curves consists of a set of isolated points, if their potential function does not have any common factor. Theorem \ref{intersect} generalizes the above result to $n>2$; it suggests that the intersection of two surfaces with dimension $k$ is another surface, whose dimension is strictly less than $k$. For instance, the intersection of two 3-D surfaces which are given by the zero level set of some polynomials, could yield 2D curves or isolated points. We now extend Theorem \ref{intersect} to trigonometric polynomials using the mapping $\nu$ specified by \eqref{map}.

\begin{lem}\label{dimlemma}
Let $\mathcal S[\psi]$ and $\mathcal{S}[\eta]$ within $[0,1]^n\subset\mathbb{R}^n$ be two surfaces of dimension $n-1$ over $\mathbb{R}$, which are the zero level sets of the trigonometric polynomials $\psi$ and $\eta$. Suppose $\psi$ and $\eta$ do not have a common factor, then
\[\dim(\mathcal{S}[\psi]\cap\mathcal{S}[\eta])< n-1.\]
\end{lem}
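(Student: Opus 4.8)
The plan is to transfer the statement, via the coordinate change $\nu$ of \eqref{map}, from trigonometric polynomials on the real torus $[0,1)^n$ to ordinary complex polynomials on $\mathbb C^n$, apply Theorem \ref{intersect} there, and then pull the resulting dimension bound back. Recall from the discussion around \eqref{comppoly} that $\nu=(\nu_1,\dots,\nu_n)$ is a real-analytic diffeomorphism of $[0,1)^n$ onto $\mathbb{CT}^n$, and that $\psi(\mathbf x)=0$ iff $\mathcal P[\psi](\mathbf z)=0$ with $\mathbf z=\nu(\mathbf x)$. Hence $\nu$ carries $\mathcal S[\psi]\cap\mathcal S[\eta]$ onto $W\cap\mathbb{CT}^n$, where $W:=\mathcal Z(\mathcal P[\psi])\cap\mathcal Z(\mathcal P[\eta])\subset\mathbb C^n$ is a complex algebraic set, and this correspondence preserves dimension. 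So it suffices to bound $\dim_{\mathbb R}(W\cap\mathbb{CT}^n)$.

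\textbf{The complex reduction.} First I would record that $\mathcal P[\psi]$ and $\mathcal P[\eta]$ are non-constant (since $\psi,\eta$ are non-constant, their supports containing at least two distinct multi-indices) and that they have no common factor in $\mathbb C[z_1,\dots,z_n]$ other than monomials: a common non-monomial factor, after clearing monomial parts, would give a genuine common trigonometric factor of $\psi$ and $\eta$, contradicting the hypothesis. Since a monomial $z^{\mathbf a}$ never vanishes on $\mathbb{CT}^n$, I may replace $\mathcal P[\psi],\mathcal P[\eta]$ by their monomial-free parts without changing $W\cap\mathbb{CT}^n$; these now have no common factor at all. Theorem \ref{intersect} with $\mathbb K=\mathbb C$ then gives $\dim_{\mathbb C}W<n-1$, i.e.\ $\dim_{\mathbb C}W\le n-2$.

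\textbf{From complex dimension to the torus trace.} It remains to pass from $\dim_{\mathbb C}W$ to $\dim_{\mathbb R}(W\cap\mathbb{CT}^n)$, and here I would use that $\mathbb{CT}^n$ is a maximal totally real submanifold of $\mathbb C^n$: at any point $p$, its tangent space $L_p$ satisfies $L_p\cap iL_p=\{0\}$. Stratifying $W$ into finitely many complex submanifolds of complex dimension $\le n-2$, near a point $p$ of a stratum $W_0$ the set $W\cap\mathbb{CT}^n$ is contained in $W_0\cap\mathbb{CT}^n$, whose tangent space lies in the real subspace $V:=T_pW_0\cap L_p$. Since $T_pW_0$ is complex-linear, $iV\subseteq T_pW_0$ and $V\cap iV\subseteq L_p\cap iL_p=\{0\}$, so $V\oplus iV\subseteq T_pW_0$ forces $2\dim_{\mathbb R}V\le\dim_{\mathbb R}T_pW_0=2\dim_{\mathbb C}W_0\le 2(n-2)$. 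Summing over strata, $\dim_{\mathbb R}(W\cap\mathbb{CT}^n)\le n-2<n-1$, and transporting back through $\nu$ yields $\dim(\mathcal S[\psi]\cap\mathcal S[\eta])<n-1$.

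\textbf{Main obstacle.} The dictionary between trigonometric and ordinary polynomials and the diffeomorphism properties of $\nu$ are routine and essentially repeat what was used for Proposition \ref{uniqueminimal}. The crux is the last step: making precise that intersecting an $(n-2)$-dimensional complex variety with the totally real torus cannot produce a real set of dimension $\ge n-1$. I expect to handle it either by the tangent-space/stratification argument sketched above, or equivalently by noting that a real-analytic subset of dimension $k$ sitting inside a complex variety $W$ has a local complexification of complex dimension $k$ contained in $W$, whence $k\le\dim_{\mathbb C}W$. A secondary point requiring care is the monomial bookkeeping in the complex reduction, since $\mathcal P[\psi]$ may carry spurious monomial factors that are units on $\mathbb{CT}^n$ but not in $\mathbb C[z_1,\dots,z_n]$.
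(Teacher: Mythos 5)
Your proof is correct and follows essentially the same route as the paper: transfer the statement through $\nu$ to the complex polynomials $\mathcal P[\psi],\mathcal P[\eta]$ and invoke Theorem \ref{intersect}. You are in fact more careful than the paper at the one genuinely delicate point --- the paper simply asserts that an $(n-1)$-dimensional image on $\mathbb{CT}^n$ is ``impossible according to Theorem \ref{intersect},'' whereas your totally-real tangent-space argument (and the removal of spurious monomial factors, which are units on the torus but not in $\mathbb C[z_1,\cdots,z_n]$) is exactly what is needed to make that assertion rigorous, since Theorem \ref{intersect} bounds the complex dimension of the variety in $\mathbb C^n$ rather than the real dimension of its trace on the torus.
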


\begin{proof}
Let $\nu=(\nu_1,\cdots,\nu_n)$ be defined by \eqref{map}. We now would like to prove the result by way of contradiction. Suppose 
\[\dim(\mathcal{S}[\psi]\cap\mathcal{S}[\eta]) = \dim(\mathcal{S}[\psi]) = \dim(\mathcal{S}[\eta]) = n-1.\]
This implies that $\nu(\mathcal{S}[\psi]\cap\mathcal{S}[\eta])$ will have the same dimension of $\nu(\mathcal{S}[\psi])$ and $\nu(\mathcal{S}[\eta])$. However, this is impossible according to Theorem \ref{intersect}. Therefore, we have the desired result.
\end{proof}


Based on this lemma, we can directly have the following Corollary.

\begin{cor}\label{measure}
Suppose $\psi(\mathbf{x}), \eta(\mathbf{x}), \mathbf{x}\in[0,1]^n$ are two trigonometric polynomials as in Lemma \ref{dimlemma}. Consider the $n-1$ dimensional Lebesgue measure on $\mathcal{S}[\psi]$. Then this Lebesgue measure of the intersection of the zero level sets of the trigonometric polynomials is zero, i.e.,
\[m(\mathcal{S}[\psi]\cap\mathcal{S}[\eta])=0.\]
\end{cor}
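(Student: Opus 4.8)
The plan is to obtain Corollary~\ref{measure} as an immediate measure-theoretic consequence of the dimension bound already established in Lemma~\ref{dimlemma}. First I would fix the meaning of the statement: the ``$n-1$ dimensional Lebesgue measure on $\mathcal{S}[\psi]$'' is the $(n-1)$-dimensional Hausdorff measure $\mathcal{H}^{n-1}$ restricted to $\mathcal{S}[\psi]$, i.e.\ the usual surface measure. Since $\psi$ is a (real-analytic) trigonometric polynomial, $\mathcal{S}[\psi]$ is a closed real-analytic set which, away from its singular locus, is a genuine $(n-1)$-dimensional $C^\infty$ submanifold of $[0,1]^n$; its singular locus is a proper analytic subset of $\mathcal{S}[\psi]$ of dimension at most $n-2$ and hence is already $m$-null, so this surface measure is well defined and $\sigma$-finite.

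Next I would apply Lemma~\ref{dimlemma} to the pair $\psi,\eta$, which by hypothesis have no common factor, to conclude that the set $E:=\mathcal{S}[\psi]\cap\mathcal{S}[\eta]$ satisfies $\dim E\le n-2$. The point is that $E$ is itself the common zero level set of two trigonometric polynomials, so under the bijection $\nu$ of \eqref{map} it corresponds to the real points on the complex unit torus of a complex algebraic variety; in particular $E$ is a real-analytic (indeed semialgebraic) set, and as such it admits a locally finite decomposition $E=\bigcup_j M_j$ into finitely or countably many smooth submanifolds $M_j\subset[0,1]^n$ with $\dim M_j\le n-2$.

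Finally I would invoke the elementary fact that a $C^1$ submanifold of $\mathbb{R}^n$ of dimension $d\le n-2$ has $\mathcal{H}^{n-1}$-measure zero: locally it is the graph of a Lipschitz map on a $d$-dimensional parameter domain, and the $(n-1)$-dimensional Hausdorff measure of such a graph vanishes when $d<n-1$. Summing over the countably many strata $M_j$ yields $\mathcal{H}^{n-1}(E)=0$, and therefore $m(E)=\mathcal{H}^{n-1}(E\cap\mathcal{S}[\psi])=0$, which is precisely the assertion $m(\mathcal{S}[\psi]\cap\mathcal{S}[\eta])=0$.

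The only genuinely nontrivial step is the passage from the \emph{dimension} statement of Lemma~\ref{dimlemma} to a statement about the \emph{surface measure}: this rests on the standard structure theory of real-analytic/semialgebraic sets (stratification into smooth manifolds whose dimensions do not exceed the dimension of the set) together with the vanishing of top-codimension Hausdorff measure on lower-dimensional strata. Once these standard facts are in place — and since the singular locus of $\mathcal{S}[\psi]$ itself is $m$-null for the same reason — no further computation is needed, and the corollary follows.
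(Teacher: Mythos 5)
Your proposal is correct and follows the same route the paper intends: the paper offers no explicit proof, asserting only that the corollary follows ``directly'' from Lemma~\ref{dimlemma}, and your argument supplies precisely the standard justification for that step — stratifying the intersection (a real-analytic/semialgebraic set of dimension at most $n-2$) into countably many smooth submanifolds and noting that each has vanishing $(n-1)$-dimensional Hausdorff measure. There is no gap; your writeup is in fact more complete than the paper's.
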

The Lebesgue measure can be viewed as the area of the $n-1$ dimensional surface. For example, when $n=3$, $\mathcal S[\psi]$ and  $\mathcal S[\eta]$ are 2-D surfaces, while their intersection is a 1-D curve or a set of isolated points with zero area. 

\subsubsection{Proof of  Proposition \ref{rankprop}}
\label{rankpropproof}
\begin{proof}
We note that $N\geq |\Lambda|-1$ is a necessary condition for the matrix to have a rank of $|\Lambda|-1$. We now assume that the surface is sampled with $N\geq |\Lambda|-1$ random samples, chosen independently, denoted by  $\mathbf{x}_i; i=1,\cdots,N \in\mathcal{S}[\psi]$. Since $\mathbf c \leftrightarrow \psi$ is a valid non-trivial null-space vector for the feature matrix $\Phi_{\Lambda}(\mathbf X)$ formed from these samples, we have ${\rm rank}\left(\Phi_{\Lambda}(\mathbf X)\right) \leq  |\Lambda|-1$. The polynomial $\psi(\mathbf{x}) = \mathbf{c}^T\Phi_{\Lambda}(\mathbf{x})$ is the minimal irreducible polynomial that defines the surface.  

We now prove the desired result by contradiction. Assume that these exists another linearly independent null-space vector $\mathbf{d} \leftrightarrow \eta$ or equivalently the rank of $\Phi_{\Lambda}(\mathbf X)$ is strictly less than $|\Lambda|-1$. Since $\mathbf{c}$ and $\mathbf{d}$ are linearly independent and $\psi(\mathbf{x})$ is the minimal polynomial, we know that $\psi(\mathbf{x})$ and $\eta(\mathbf{x})$ will not share a common factor.  Also note that $\mathbf{x}_i\in\mathcal{S}[\psi]\cap\mathcal{S}[\eta]$. However, since $\psi(\mathbf{x})$ and $\eta(\mathbf{x})$ do not share a common factor, the probability of each sample to be at the intersection of the two polynomials $(\mathbf{x}_i\in\mathcal{S}[\psi]\cap\mathcal{S}[\eta])$ is zero by Corollary \ref{measure}. Therefore, with probability 1 that such $\mathbf{d}$ does not exist, meaning that with probability 1 that the feature matrix will be of rank $|\Lambda|-1$ when $N \ge |\Lambda|-1$.
\end{proof}

\subsubsection{Proof of Proposition \ref{rankprop2}}\label{proofrankprop2}

\begin{proof}We note that $N\geq |\Lambda|-1$ is a necessary condition for the matrix to have a rank of $|\Lambda|-1$. We now assume that the surface is sampled with $N$ random samples $\mathbf{x}_i; i=1,\cdots, N $ satisfying the conditions in Proposition \ref{rankprop2}. The minimal polynomial $\psi(\mathbf{x}) = \mathbf{c}^T\Phi_{\Lambda}(\mathbf{x})$ that defines the surface can be factorized as $\psi(\mathbf x) = \psi_{1}(\mathbf x)\cdot\psi_{2}(\mathbf x)\cdots\psi_{M}(\mathbf x)$.  

We will prove the result by contradiction. Assume that these exists another linearly independent null-space vector $\mathbf d \leftrightarrow \eta$, or equivalently the rank of $\Phi_{\Lambda}(\mathbf X)$ is less than $|\Lambda|-1$.  Since $\mathbf c$ and $\mathbf d$ are linearly independent, $\psi$ and $\eta$ should differ by at least one factor. Without loss of generality, let us assume that $\eta(\mathbf x) = \mu(\mathbf{x})\prod_{i=1}^{M-1} \psi_{i}(\mathbf{x})$, where $\mu$ is an arbitrary polynomial of bandwidth $\Lambda_{M}$. Besides, $\mu$ and $\psi_{M}$ does not share a factor. Using the result of Proposition \ref{rankprop}, we see that the probability of $\mu$ and an irreducible $\psi_{M}$ vanish at $|\Lambda_{i}|-1$ independently drawn random locations is zero. If multiple factors are shared, the same argument can be extended to each one of the factors independently. 
\end{proof}

\subsubsection{Proof of Proposition \ref{nonminimal1}}\label{proofrankprop3}
\begin{proof}
We note that $N\geq  |\Gamma|-|\Gamma\ominus\Lambda|$ is a necessary condition for the matrix to have the specified rank. We now assume that the surface is sampled with $N\geq  |\Gamma|-|\Gamma\ominus\Lambda|$ random samples, chosen independently. We note that $\mathbf c \leftrightarrow \psi$ specified by \eqref{tri1}, as well as the $|\Gamma\ominus\Lambda|$ translates of $\mathbf c$ within $\Gamma$, are valid linearly independent null-space vectors of $\Phi_{\Lambda}(\mathbf X)$. We thus have
\begin{equation}
{\rm rank}\left(\Phi_{\Lambda}(\mathbf X)\right) \leq  |\Gamma|-|\Gamma\ominus\Lambda|
\end{equation}

We will show that the rank condition can be satisfied with probability 1 by contradiction. Assume that these exists another linearly independent null-space vector $\mathbf{d} \leftrightarrow \eta$ or equivalently the rank of $\Phi_{\Lambda}(\mathbf X)$ is less than $ |\Gamma|-|\Gamma\ominus\Lambda|$. Since $\mathbf{d}$ are linearly independent with $\mathbf{c}$ and its translates within $\Gamma$, we cannot express $\mathbf{d}$ as the linear combinations of the the other null-space vectors. Specifically, we have 
\begin{eqnarray}
\eta(\mathbf x) &\neq& \sum_{\mathbf k \in \Gamma \ominus \Lambda} \alpha_{\mathbf k}~\psi(\mathbf x) \exp(j2\pi\mathbf k^{T} \mathbf x)\\
&=& \psi(\mathbf x) \underbrace{\sum_{\mathbf k \in \Gamma \ominus \Lambda} \alpha_{\mathbf k}\exp(j2\pi\mathbf k^{T} \mathbf x)}_{\gamma(\mathbf x)} = \psi(\mathbf x)\gamma(\mathbf x).
\end{eqnarray}
Here $\alpha_{\mathbf k}$ is an arbitrary coefficients and hence $\gamma$ is an arbitrary polynomial. The linear independence property implies that $\eta(\mathbf x)$ cannot have $\psi(\mathbf x)$ as a factor. Since $\psi(\mathbf x)$ is the minimal polynomial, this also means that $\eta$ and $\psi$ does not have any common factor. 

Consider now the random sampling set $\mathbf{x}_i; i=1..  |\Gamma|-|\Gamma\ominus\Lambda|$. We have
\[\mathbf{c}^T\Phi_{\Lambda}(\mathbf{x}_i) = \mathbf{d}^T\Phi_{\Lambda}(\mathbf{x}_i) = 0,\,\,\, i = 1,\cdots,|\Lambda|-1.\]
This implies that $\mathbf{x}_i\in\mathcal{S}[\psi]\cap\mathcal{S}[\eta]$. However, since $\psi(\mathbf{x})$ and $\eta(\mathbf{x})$ do not share a common factor, the probability of each sample to be at the intersection of the two polynomials $(\mathbf{x}_i\in\mathcal{S}[\psi]\cap\mathcal{S}[\eta])$ is zero by Corollary \ref{measure}. Therefore, we have ${\rm rank}\left(\Phi_{\Lambda}(\mathbf X)\right) =  |\Gamma|-|\Gamma\ominus\Lambda|$ with probability one. 

\end{proof}

\subsubsection{Proof of Proposition \ref{nonminimal2}}\label{proofrankprop4}

\begin{proof}We note that $N\geq  |\Gamma|-|\Gamma\ominus\Lambda|$ is a necessary condition for the matrix to have the specified rank. We now assume that the surface is sampled with $N$ random samples satisfying the sampling conditions in Proposition \ref{nonminimal2}. The minimal polynomial $\psi(\mathbf{x}) = \mathbf{c}^T\Phi_{\Lambda}(\mathbf{x})$ that defines the surface can be factorized as $\psi(\mathbf x) = \psi_{1}(\mathbf x)\cdot\psi_{2}(\mathbf x)\cdots\psi_{M}(\mathbf x)$.  

Assume that there exists another linearly independent null-space vector $\mathbf{d} \leftrightarrow \eta$ or equivalently the rank of $\Phi_{\Lambda}(\mathbf X)$ is less than $ |\Gamma|-|\Gamma\ominus\Lambda|$.  Similar to the above arguments, if $\eta$ and $\psi$ does not have any common factors, the rank condition is satisfied with probability 1. Similar to Section \ref{proofrankprop2},  linear independence implies that $\eta(\mathbf x)$ cannot be a factor of $\psi$; there is at least one factor $\psi_{i}$ that is distinct. Based on Proposition \ref{nonminimal1}, these factors cannot vanish on more than $|\Gamma_i|-|\Gamma_i\ominus\Lambda_{i}|$ common samples.
\end{proof}

\section*{Acknowledgments}

The authors would like to thank Dr. Greg Ongie for the valuable discussion about the proofs of the results in this paper. The first author would also like to thank Prof. Theodore Shifrin for his help during the discussion on the intersection of two surfaces and Mr. Biao Ma for the discussion on Lemma \ref{dimlemma} and Corollary \ref{measure}. 

\bibliographystyle{siam}
\bibliography{refs}

\end{document}